\numberwithin{equation}{section}
\providecommand{\U}[1]{\protect\rule{.1in}{.1in}}
\newtheorem{thm}{Theorem}[section]
\newtheorem{lem}[thm]{Lemma}
\newtheorem{cor}[thm]{Corollary}
\newtheorem{defin}[thm]{Definition}
\newtheorem{rem}[thm]{Remark}
\newcommand\R{{\mathbb R}}
\newcommand\1{{1\kern-.25em\hbox{\rm I}}}
\newcommand\eu{{1\kern-.25em\hbox{\sm I}}}
\newcommand\MM{{\mathcal M}}
\newcommand\e{\epsilon}
\newcommand{\nada}[1]{}
\begin{document}
\title[dynamics and kinetic limit for Vicsek-type particles]{Dynamics and kinetic limit for a system of noiseless $d$-dimensional
Vicsek-type particles}
\author{Michele Gianfelice}
\address{Michele Gianfelice, Dipartimento di Matematica, Universit\`{a} della Calabria,
Campus di Arcavacata, Ponte P. Bucci - cubo 30B\\
87036 Arcavacata di Rende (CS), Italy.}
\email{\texttt{gianfelice@mat.unical.it}}
\author{Enza Orlandi}
\address{Enza Orlandi, Dipartimento di Matematica\\
Universit\`a di Roma Tre\\
L.go S.Murialdo 1, 00146 Roma, Italy. }
\email{\texttt{orlandi@mat.uniroma3.it}}
\thanks{M.G. supported by MIUR - PRIN 2011-2013: Campi aleatori, percolazione ed
evoluzione stocastica di sistemi con molte componenti}
\thanks{E.O. supported by MIUR - PRIN 2011-2013: Studio di sistemi disordinati con
metodi probabilistici e loro simulazione numerica, and ROMA TRE University}
\date{\today }

\begin{abstract}
We analyze the continuous time evolution of a $d$-dimensional system of $N$
self propelled particles with a kinematic constraint on the velocities
inspired by the original Vicsek's one \cite{VCB-JCS}. Interactions among
particles are specified by a pairwise potential in such a way that the
velocity of any given particle is updated to the weighted average velocity of
all those particles interacting with it. The weights are given in terms of the
interaction rate function. The interaction is not of mean field type and the
system is non-Hamiltonian. When the size of the system is fixed, we show the
existence of an invariant manifold in the phase space and prove its
exponential asymptotic stability. In the kinetic limit we show that the
particle density satisfies a nonlinear  kinetic equation of Vlasov type, under suitable conditions on
the interaction. We study the qualitative behaviour of the solution and we
show that the Boltzmann-Vlasov entropy is strictly decreasing in time.

\end{abstract}
\keywords{Flocking, swarming, self-propelled particle systems, agents network, kinetic limit.}
\subjclass{35F20, 35Q83, 35B40, 82C22, 35Q82, 82C40, 92D50, 91B80.}
\maketitle

\section{Introduction}

The analysis of a network of a large number of coordinated self propelled
particles (agents) is a sub discipline of control theory which has seen a
rapid development during the last decade \cite{BDT, W, JLM, CKFL, B-NVR, CS,
CHDB}. This is due to its several potential application in understanding the
collective behavior in biological systems (for example fish schools and bird
flocks) \cite{CKFL}, computer science \cite{R, BDT}, engineering \cite{JLM, CS, CHDB},
economy \cite{DY} and social sciences \cite{W, B-NVR}. Explaining the
emergence of these coordinated movements in terms of microscopic decisions of
each individual member of a network is a hot matter of research in the natural
sciences. To model the particle self-organized behavior one assigns to each
particle a simple communication/interaction rule in order for the whole system
to dynamically reproduce, in a given regime of the model's parameters,
specific phase space patterns.

The emergence of phase space patterns persistent in time described by a large
connected cluster of coherently moving particles is called flocking or
swarming (also schooling or herd behavior). Basic models of flocking behavior
generally follow three simple rules: 1) separation, that is to avoid crowding
neighbors (usually modeled by short range repulsive interactions); 2)
alignment, i.e. to steer towards average heading of neighbors; 3) cohesion,
i.e. to steer towards the average position of neighbors (usually modeled by
long range attractive interactions).

The seminal work in the direction of modeling flocking behavior is the one of
Vicsek et al. \cite{VCB-JCS}. They proposed a model of $N$ interacting
particles located on a $2$-dimensional torus of diameter $D.$ The velocity of
each given particle belongs to the unit circle and at each time step its
direction is updated at the empirical average of the velocity's directions of
all the particles lying in a neighborhood of radius $1$ from the given one,
including itself, plus a random perturbation. Particles positions are then
updated according to their velocity. Computer simulations proved that, when
the particle density $\frac{N}{D^{2}}$ is sufficiently high and the noise
intensity sufficiently small, the distribution of the velocities of the
particles concentrates around the velocity of the barycenter of the system,
although this is not a quantity preserved by the dynamics.

We propose a simple model of continuous time noiseless multi-agent evolution
closely inspired to the original Vicsek's one. The particles interact
(communicate) with each other trough a pairwise interaction potential in such
a way that the velocity of any given particle is updated to the weighted
average velocity of all those particles communicating with it. This choice
makes the interaction not of mean field type. Furthermore, the system is non
Hamiltonian. As a result, there is a tendency of neighboring particles to
align their velocities. This is the crucial element in the mechanism of the
emergency of a coherent motion.

For what concerns flocking behaviour our model takes into account alignment
and cohesion, but violates the separation rule since the particles can overlap.

We prove for such model two type of results. First, we analyze the $N$
particle dynamics in $\mathbb{R}^{d}.$ We show that there exists an invariant
manifold in the phase space and prove exponential asymptotic stability of the
invariant manifold when the initial conditions for particles dynamics are
suitably chosen. This implies that the system, under the chosen initial
conditions, will reach a state of flocking. Then, we study the kinetic limit
($N\rightarrow\infty$) of the system. Since the interaction is not of mean
field type care needs to be taken in the definition of the velocity field in
the phase space and consequently in the evolution of the particle density. We
explain in more details how to deal with these difficulties in Section 4. We
prove that the particle density satisfies a nonlinear equation of Vlasov type 
when the particles are confined on a torus and subject to a \emph{short-range
potential}\ of Gaussian type. Similar result holds in $\mathbb{R}^{d}$ when
the interaction among particles is given by a suitable regularization of a
finite range potential. We further show that the Boltzmann-Vlasov entropy is
strictly decreasing in time. As a consequence, one can argue that, even if the initial
distribution of the particles is absolutely continuous w.r.t. Lebesgue
measure, the limit density distribution is singular w.r.t. Lebesgue measure.
This is consistent with what one expects from the model. For time long enough
the position and velocity particle distribution will concentrate on specific
phase-space patterns.

A continuous time version of Vicsek's model, as well as its stochastic
counterpart driven by the Brownian motion, has been proposed in \cite{DM} and
the corresponding kinetic equations heuristically derived and studied. In
fact, at present time, to our knowledge, a rigorous derivation and analysis of
Vicsek's model kinetics, as well as hydrodynamics, is lacking.

Another basic model for flocking is the Cucker-Smale one \cite{CS}. In this
and related models \cite{DCBC, AIR} the variation in time of the momentum of a
given particle is the weighted sum of the differences between the particle's
momentum and those of the other system's components, with weights depending of
the relative distances among particles divided by the total number of
particles $N.$ It is worth notice that, for all these models, the interaction
between two given particles is of order $1/N,$ therefore when the size of the
system becomes large, particles tend to decorrelate. On the contrary, in the
original Vicsek's model, the interaction between a given couple of particles
is of order one. Moreover, Cucker-Smale dynamics preserves the velocity of the
barycenter, which is not the case for Vicsek's. The order of the interaction
with respect to the size of the system is the peculiar feature distinguishing
Vicsek's from Cucker-Smale algorithm.\ Therefore, in our opinion, variants of
the Cucker-Smale momenta updating rule taking into account only the
differences among the directions of the momenta of the particles, rather than
those of the momenta as vectors, are somewhat improperly ascribed to variants
of the Vicsek's model \cite{BCC2}. Cucker-Smale and related models have been
more deeply investigated in the mathematical literature and their mean-field
limit equations rigorously derived and studied in \cite{HT, HL, CFRT, CCR,
AIR} in the noiseless case and in \cite{BCC1, BCC2} in the stochastic case
driven by Brownian motion. Moreover, the hydrodynamics equations for these
models have also been rigorously studied but formally derived \cite{HT, CDP,
CCR}.

Recently, a model analogous to the one we propose in this paper, but with
communication rate function restricted to the Cucker-Smale model one has been
introduced and analysed in \cite{MT}. The authors prove that the strategy
originally proposed to study the emergence of flocking behaviour for a system
of self-propelled particle updating their velocity with the standard
Cucker-Smale algorithm also applies to this case with the same restrictions on
the decay of the communication rate function. It turns out that the model we
propose in the present paper is more general than the one proposed in
\cite{MT}, since includes also sufficiently smooth compactly supported
communication rate functions, and so are the results about the emergence of
flocking behaviour for the particle system.

The plan of the paper is the following. In Section 2 we describe the model,
set the notations and present the main results. In Section 3 we analyze the
system when the number of particles is fixed. In Section 4 we analyze the
system when the number of particles goes to infinity. In the appendix we
collect proofs of results used along the previous sections.

\vskip0.5cm

\noindent\textbf{Acknowledgements}. Enza Orlandi thanks Carlangelo Liverani
for useful discussions. Michele Gianfelice thanks Fabio Fagnani and Marco
Isopi for interesting discussions on the subject and Seung-Yeal Ha for useful
discussions and comments as well as for pointing out reference \cite{MT}.

\section{Description of the model, notation and results}

\subsection{Notations}

Given $x\in\mathbb{R}^{d},d\geq1,$ we denote by $x^{i}$ its $i$-th component,
$i=1,..,d,$ with respect to the canonical basis $\left(  e_{1},..,e_{d}%
\right)  .$ For any $x,y\in\mathbb{R}^{d}$ we set $x\cdot y:=\sum_{i=1}%
^{d}x^{i}y^{i}$ to be the scalar product between $x$ and $y.$ Hence, we denote
by $\left\vert x\right\vert :=\sqrt{x\cdot x}$ the associated Euclidean norm
and by $B_{r}\left(  x\right)  :=\{y\in\mathbb{R}^{d}:\left\vert
y-x\right\vert \leq r\}$ the ball of radius $r>0$ centered at $x$ and
$B_{r}:=B_{r}\left(  0\right)  .$ Furthermore we set $\left\Vert x\right\Vert
_{\infty}:=\max_{i=1,..,d}\left\vert x^{i}\right\vert .$

Given an integer $N\geq2,$  we denote   a point in  $\mathbb{R}^{Nd}$  by  $\mathbf{x:=}\left(  x_{1},..,x_{N}\right)
\in\mathbb{R}^{Nd}$,   its    norm    by $\left\vert \mathbf{x}\right\vert :=\sqrt
{\mathbf{x\cdot y}}$, where   $\mathbf{x}\cdot\mathbf{y}:=\sum_{i=1}%
^{N}x_{i}\cdot y_{i}$.     We  denote by $\mathbf{B}_{r}\left(  \mathbf{x}\right)
:=\{\mathbf{y}\in\mathbb{R}^{Nd}:\left\vert \mathbf{y-x}\right\vert \leq r\}$
the ball of radius $r>0$ centered at $\mathbf{x}.$

Partial derivative w.r.t. any component $x^{i}$ of $x\in\mathbb{R}^{d}$ will
be denoted by $\partial_{x^{i}},$ so that $\nabla_{x}$ stands for $\left(
\partial_{x^{1}},..,\partial_{x^{d}}\right)  $ while, for any $\mathbf{q}%
\in\mathbb{R}^{Nd},$ we set $\nabla_{\mathbf{q}}:=\left(  \nabla_{q_{1}%
},..,\nabla_{q_{N}}\right)  .$

Moreover, we denote by $\mathcal{L}_{n}\left(  \mathbb{R}\right)  $ the  
space of linear operators from $\mathbb{R}^{n}$ to itself and by $\left\Vert
\cdot\right\Vert $ and $\left\Vert \cdot\right\Vert _{\infty}$ the operator
norm induced by respectively the Euclidean and the supremum norm. In
particular $\mathbb{I}_{n},\mathbf{0}_{n}\in\mathcal{L}_{n}\left(
\mathbb{R}\right)  $ denote respectively the identity and the null operator.

\subsection{The model}

Let $N\geq2$ be an integer. We consider $N$ particles of unitary mass in
$\mathbb{R}^{d}$ evolving according to the equations:
\begin{equation}
\left\{
\begin{array}
[c]{l}%
\frac{dq_{i}(t)}{dt}=p_{i}(t)\\
\frac{dp_{i}(t)}{dt}=\frac{\sum_{j=1}^{N}U(q_{i}(t)-q_{j}(t))\left(
p_{j}(t)-p_{i}\left(  t\right)  \right)  }{\sum_{j=1}^{N}U(q_{i}(t)-q_{j}%
(t))}\ ,\;i=1,..,N\\
q_{i}(0)=q_{i}^{0}\ ;\ p_{i}(0)=p_{i}^{0}%
\end{array} 
\right.  \label{mo1}%
\end{equation}
where, for $i=1,..,N,\ (q_{i},p_{i})\in{\mathbb{R}}^{d}\times{\mathbb{R}}%
^{d},\ (q_{i}^{0},p_{i}^{0})$ are the initial conditions and $U$ is a pairwise
interaction. We assume that $U(\cdot)$ is a spherically symmetric positive
function, sufficiently smooth, with support the ball of radius $R$ centred at
zero and so that $U(0)>0$.   This implies that
the denominator in the second equation of (\ref{mo1}) is always strictly
positive.   The choice of $R$ does not play any particular role in the analysis.
Without loss of generality we assume that $\int U(x)dx=1$ and  $\sup_{x\in\mathbb{R}^{d}}U(x)=U(0)$. 
Namely,  for agent-based  models it is reasonable to assume  that self-interaction
is stronger than the interactions between two different particles. 
 A simple example to have in mind for the potential $U$  is $U(x)=  C(d)(1- |x| )\1_{B_1}(x)$,
 where $C(d)$ is  taken such that $\int U(x) dx =1$ or  smoother  versions of this.
In this example,  the particle $q_i$ interacts only with particles    at distance 1.
The vector
field in (\ref{mo1}) is Lipschitz, therefore the existence and the uniqueness
of the solution is granted at least for short time. Since the vector field
increases at most linearly in $w=(\mathbf{q},\mathbf{p})$ the solution
$w(t,w^{0})$ with initial datum $w^{0}$ exists and it is unique for all
$t\geq0.$

To derive the kinetic limit results,  the interaction   must satisfy further requirements
which will be presented and discussed in the following.

\subsubsection{Flocking}

Given a particle configuration $\mathbf{q}\in\mathbb{R}^{Nd}$ we introduce the
notion of communication graph. We use only basic definition of graph theory
useful to define the \textit{flocking behaviour} for the system \eqref {mo1}.
We refer the reader to basic textbooks such as \cite{B} for an account on this subject.

\begin{defin}
Given a particle configuration $\mathbf{q}\in\mathbb{R}^{Nd},$ we define the
\emph{ communication graph} \linebreak% 
$\mathcal{G}\left(  \mathbf{q}\right)  :=\left(  \mathcal{V}\left(  \mathbf{q}\right)
,\mathcal{E}\left(  \mathbf{q}\right)  \right)  ,$ where the set of
vertices $\mathcal{V}\left(  \mathbf{q}\right)  =\{q_{1},\dots,q_{N}\}$ is the
collection of the $N$ points of $\mathbb{R}^{d}$ associated to $\mathbf{q}$
and
\begin{equation}
\mathcal{E}\left(  \mathbf{q}\right)  :=\left\{  \left(  q,q^{\prime}\right)
\in\mathcal{V}\left(  \mathbf{q}\right)  \times\mathcal{V}\left(
\mathbf{q}\right)  :U\left(  q-q^{\prime}\right)  >0\right\}
\end{equation}
is the set of edges.
\end{defin}
Two vertices $q$ and $q^{\prime}$ are said to be connected if there are
$q_{1},\dots,q_{k}$  vertices in $\mathcal{V}\left(  \mathbf{q}\right)
,k\in\{{2,\dots,N\},}$ such that $q_{1}=q,q_{k}=q^{\prime}$and $U\left(
q_{i}-q_{i+1}\right)  >0$, for $i=1,\dots,k-1.$
The graph $\mathcal{G}\left(  \mathbf{q}\right)  $ is said to be connected if any two of
its vertices are connected\footnote{Since   $U$  is
spherically symmetric, the communication graph is undirected.  Hence, in this case, the usual
notions of strongly connected graph and connected graph coincide.}.
We will set $\mathcal{V}_{t}:=\mathcal{V}\left(  \mathbf{q}\left(  t\right)
\right)  $ and $\mathcal{G}\left(  t\right)  :=\mathcal{G}\left(
\mathbf{q}\left(  t\right)  \right)  .$

\begin{defin}
\label{flock}The system \eqref {mo1} with initial conditions $w^{0}%
=(\mathbf{q}^{0},\mathbf{p}^{0})$ is said to exhibit a\emph{\ flocking
behavior} if there exists $v\in\mathbb{R}^{d}$ such that, for any
$\epsilon>0,\exists T_{\epsilon}>0:\forall t>T_{\epsilon},$

\begin{itemize}
\item $p_{i}\left(  t, w^{0}\right)  \in B_{\epsilon}\left(  v\right)
,\forall i=1,..,N;$

\item the communication graph $\mathcal{G}\left(  t\right)  $ is connected.
\end{itemize}
\end{defin}

We remark that our definition of emergence of flocking behaviour differs from
the one given for models with long interaction
% range communication rate functions 
(e.g.
Cucker-Smale model \cite{HL}, \cite{CFRT}).
 In the  latter  case the
communication graph is always connected, while  this is  not  true for
short range interactions. Therefore we  have  in Definition \ref{flock}   two  conditions, one
on the  particle velocities and  the other on 
the  particle positions.  

\subsection{Results for finite size system}

Let ${\mathcal{I}}$ be the $(N+1)d$ linear manifold
\begin{equation}
{\mathcal{I}}=\cup_{\{v\in{\mathbb{R}}^{d}\}}{\mathcal{I}}(v)\ , \label{v8}%
\end{equation}
where
\begin{equation}
{\mathcal{I}}(v)=\{(\mathbf{q},\mathbf{p})\in{\mathbb{R}}^{dN}\times
{\mathbb{R}}^{dN}:p_{i}=v,i=1,..,N\}\ . \label{d.7}%
\end{equation}
It is immediate to see that ${\mathcal{I}}$ is invariant for the evolution
\eqref  {mo1}. Namely, if the initial data belong to $\mathcal{I}\left(
v\right)  $ the particles evolve independently one from the other with
constant velocity $v.$ The only critical point of the system (\ref{mo1}) is
$\left(  \mathbf{0},\mathbf{0}\right)  .$ We denote, for any $w\in{\mathbb{R}%
}^{2dN},$%
\begin{equation}
\operatorname{dist}\left(  w,{\mathcal{I}}\right)  =\inf_{w^{0}\in
{\mathcal{I}}}|w-w^{0}|
\end{equation}
and by $w(t,w^{1})$   the solution at time $t$ of \eqref {mo1} starting from
$w^{1}\in{\mathbb{R}}^{2dN}.$ We have the following results.

\begin{thm}
\label{sta1a0} The manifold ${\mathcal{I}}$ is stable for the evolution \eqref {mo1}.
\end{thm}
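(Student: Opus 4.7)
The plan is to establish Lyapunov stability of $\mathcal{I}$ via a coordinatewise maximum principle for the velocity variables.

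First, I would reduce the distance to $\mathcal{I}$ to a simple expression. Since the position component is unconstrained in the definition of $\mathcal{I}(v)$, the infimum defining $\dist(w,\mathcal{I})$ with $w=(\mathbf{q},\mathbf{p})$ is attained at $\mathbf{q}'=\mathbf{q}$ and $v=\bar p:=\tfrac{1}{N}\sum_{i=1}^{N}p_{i}$, giving
$$\dist(w,\mathcal{I})^{2}=\sum_{i=1}^{N}|p_{i}-\bar p|^{2}.$$
The distance therefore depends only on $\mathbf{p}$ and measures the spread of the velocities about their mean; in particular, $\dist(w,\mathcal{I})\le\delta$ implies $|p_{i}-\bar p|\le\delta$, and hence $|p_{i}^{\alpha}-\bar p^{\alpha}|\le\delta$, for every $i,\alpha$.

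Next, I would rewrite the velocity equation in (\ref{mo1}) componentwise as
$$\dot p_{i}^{\alpha}(t)=\sum_{j=1}^{N}a_{ij}(\mathbf{q}(t))\bigl(p_{j}^{\alpha}(t)-p_{i}^{\alpha}(t)\bigr),\qquad a_{ij}(\mathbf{q}):=\frac{U(q_{i}-q_{j})}{\sum_{k=1}^{N}U(q_{i}-q_{k})},$$
for $\alpha=1,\dots,d$. The coefficients satisfy $a_{ij}\ge 0$ and $\sum_{j}a_{ij}=1$, so each right-hand side is a convex combination of the scalar differences $p_{j}^{\alpha}-p_{i}^{\alpha}$. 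Setting $M^{\alpha}(t):=\max_{i}p_{i}^{\alpha}(t)$ and $m^{\alpha}(t):=\min_{i}p_{i}^{\alpha}(t)$, any index $i^{*}$ realizing $M^{\alpha}(t)$ satisfies $p_{j}^{\alpha}(t)-p_{i^{*}}^{\alpha}(t)\le 0$ for all $j$, whence $\dot p_{i^{*}}^{\alpha}(t)\le 0$; by the standard Dini-derivative/Danskin envelope argument applied to the locally Lipschitz function $M^{\alpha}$, we conclude $M^{\alpha}$ is non-increasing, and analogously $m^{\alpha}$ is non-decreasing. Consequently the coordinate oscillation $\mathrm{osc}_\alpha(t):=M^{\alpha}(t)-m^{\alpha}(t)$ is non-increasing on $[0,\infty)$ for every $\alpha$.

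To conclude, given $\epsilon>0$ I would take $\delta:=\epsilon/(2\sqrt{Nd})$. If $\dist(w^{0},\mathcal{I})\le\delta$, the first step gives $|p_{i}^{0,\alpha}-\bar p^{0,\alpha}|\le\delta$ for every $i,\alpha$, so $\mathrm{osc}_\alpha(0)\le 2\delta$. Monotonicity then yields $\mathrm{osc}_\alpha(t)\le 2\delta$ for all $t\ge 0$, and since $m^{\alpha}(t)\le\bar p^{\alpha}(t)\le M^{\alpha}(t)$, we get $|p_{i}^{\alpha}(t)-\bar p^{\alpha}(t)|\le 2\delta$. Summing over $i$ and $\alpha$ gives $\dist(w(t),\mathcal{I})\le 2\sqrt{Nd}\,\delta=\epsilon$, as required.

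The only subtle point is that the envelopes $M^{\alpha}$ and $m^{\alpha}$ are not everywhere differentiable, so the monotonicity must be justified through upper/lower Dini derivatives or by observing that on any interval where the extremum-realizing index is fixed the envelope is $C^{1}$ with the expected sign. This is the main technical step; the rest of the argument is purely algebraic and uses only $U\ge 0$ with $U(0)>0$ (ensuring the denominator in (\ref{mo1}) is positive, so the coefficients $a_{ij}$ are well defined).
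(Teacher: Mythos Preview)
Your argument is correct, but it differs from the paper's route. The paper first proves (Theorem~\ref{stav0}, via the same device as Lemma~\ref{d.3}) that each vector $v_i(t):=p_i(t)-(\Omega\mathbf{p}^0)_i$ stays in the Euclidean ball $B_\epsilon\subset\mathbb{R}^d$ if it starts there: at a point where $|v_i|$ is maximal and equal to the radius, one checks $\tfrac{1}{2}\tfrac{d}{dt}|v_i|^2\le 0$ because $v_j\cdot v_i\le |v_i|^2$. Then Corollary~\ref{sta0} uses the triangle inequality $|\mathbf{p}(t)-\Omega\mathbf{p}(t)|\le 2|\mathbf{p}(t)-\Omega\mathbf{p}^0|$ to conclude stability with $\delta(\epsilon)=\epsilon/2$. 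Your approach instead works coordinate by coordinate and tracks the scalar oscillation $M^\alpha-m^\alpha$; this is equally valid and perhaps more elementary, but because you recombine the $Nd$ scalar estimates at the end you pay a factor $2\sqrt{Nd}$ in $\delta(\epsilon)$, whereas the paper's vector argument yields the dimension- and $N$-independent factor $2$. The technical caveat you flag about Dini derivatives applies in the same way to the paper's Lemma~\ref{d.3}; neither proof spells this out, and both rely on the standard envelope argument you describe.
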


This means that, for any $\epsilon>0,$ there exists $\delta(\epsilon)\le\epsilon$ such
that, for all initial data $w^{0}\in{\mathbb{R}}^{2dN}$  satisfying
$\operatorname{dist}(w^{0},{\mathcal{I}})\leq\delta(\epsilon),$  then 
$\operatorname{dist}(w(t,w^{0}),{\mathcal{I}})\leq\epsilon$ for all $t\geq0$.
Stability of  the manifold ${\mathcal{I}}$ does not imply that the system exhibits a flocking behaviour when starting from $w^0$.
Theorem \ref {sta1a0}   is quite easy to show, see for the proof Corollary
\ref{sta0}.

Next, we show a stronger result.  Assume  that at initial time   the particle   positions are chosen    so   that the communication graph is connected and  their  velocities are conveniently taken;   then, at later times, the particles  will not split into
non interacting groups and the velocity of each one converges
exponentially fast to a velocity vector which is the same for all the $N$
particles. In other words, the system   exhibits  a flocking behaviour,   see
 Definition \ref{flock}.  
\begin{thm}
\label{sta1a} Let $w^{0}=(\mathbf{q}^{0},\mathbf{p}^{0})\in{\mathcal{I}}$ and
assume that the  communication graph $\mathcal{G}\left(
\mathbf{q^{0}}\right)  $ is connected. There exist three positive constants
$r_{0}=r_{0}(w^{0}),T=T\left(  w^{0}\right)  ,\epsilon_{0}=\epsilon_{0}\left(
w^{0}\right)  $ and a set $\mathcal{B}(r_{0},\epsilon_{0},w^{0})\subset
{\mathbb{R}}^{2Nd},$ such that, for any initial datum $w^{1}\in\mathcal{B}%
(r_{0},\epsilon_{0},w^{0})$
\begin{equation}
\operatorname{dist}\left(  w(t,w^{1}),{\mathcal{I}}\right)  \leq\epsilon
_{0}(w^{0})e^{-t\frac{\log2}{T}}\ .
\end{equation}

\end{thm}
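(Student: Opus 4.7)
The plan is to adapt the Motsch--Tadmor flocking strategy to the row-normalized Vicsek dynamics. Rewrite the velocity equation in (\ref{mo1}) as $\dot{\mathbf{p}} = -L(\mathbf{q})\mathbf{p}$, with $L(\mathbf{q}) := \mathbb{I}_N - A(\mathbf{q})$ the random-walk Laplacian associated with the row-stochastic matrix $A_{ij}(\mathbf{q}) := U(q_i - q_j)/\sum_k U(q_i - q_k)$, acting componentwise on $\mathbb{R}^d$. Since $U$ is even and $U(0) > 0$ keeps the degrees $D_i(\mathbf{q}) := \sum_k U(q_i - q_k)$ uniformly bounded below on compact position sets, $L(\mathbf{q})$ is similar (via $D$) to a symmetric operator, has real spectrum, and whenever $\mathcal{G}(\mathbf{q})$ is connected admits a positive gap $\lambda_2(L(\mathbf{q}))$ above the diagonal kernel. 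The natural Lyapunov functional controlling $\dist(w,\mathcal{I})^2$ is the translation-invariant quadratic form
\begin{equation*}
\Phi(\mathbf{p}) := \sum_{i<j} |p_i - p_j|^2, \qquad \text{with} \qquad \dist(w,\mathcal{I})^2 \leq \Phi(\mathbf{p})/N.
\end{equation*}

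The key ingredient is a uniform spectral gap on a neighbourhood of the free trajectory. I would define the connectivity margin
\begin{equation*}
\rho_0(w^0) := \tfrac12 \min\{R - |q_i^0 - q_j^0| : (q_i^0,q_j^0) \in \mathcal{E}(\mathbf{q}^0)\} > 0
\end{equation*}
and the set $\mathcal{K}(\rho_0)$ of configurations with $|(q_i - q_j) - (q_i^0 - q_j^0)| \leq \rho_0$ for every $i,j$. By construction $\mathcal{G}(\mathbf{q}) \supseteq \mathcal{G}(\mathbf{q}^0)$ on $\mathcal{K}(\rho_0)$, so continuity and compactness yield $\lambda_0(w^0) := \min_{\mathbf{q} \in \mathcal{K}(\rho_0)} \lambda_2(L(\mathbf{q})) > 0$. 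A routine energy identity --- multiplying $\dot{\mathbf{p}} = -L(\mathbf{q})\mathbf{p}$ against $\mathbf{p}$ in the degree-weighted inner product, or equivalently using the symmetric Dirichlet form $\tfrac12 \sum_{i,j} U(q_i - q_j)|p_i - p_j|^2$ --- then gives $\dot \Phi(t) \leq -c\lambda_0 \Phi(t)$ for some $c>0$ whenever $\mathbf{q}(t) \in \mathcal{K}(\rho_0)$.

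The proof is closed by a bootstrap on relative positions. Pick $r_0,\epsilon_0 > 0$ with $2r_0 + 2\epsilon_0/(c\lambda_0) < \rho_0$ and take $\mathcal{B}(r_0,\epsilon_0,w^0)$ to consist of those $w^1$ satisfying $\|\mathbf{q}^1 - \mathbf{q}^0\|_\infty \leq r_0$ and $\Phi(\mathbf{p}^1) \leq \epsilon_0^2$. On the maximal interval $[0,t^*)$ on which $\mathbf{q}(t) \in \mathcal{K}(\rho_0)$, the dissipation inequality yields $\Phi(t) \leq \epsilon_0^2 e^{-c\lambda_0 t}$; integrating $|p_i(s) - p_j(s)| \leq \epsilon_0 e^{-c\lambda_0 s/2}$ gives
\begin{equation*}
|(q_i(t) - q_j(t)) - (q_i^0 - q_j^0)| \leq 2r_0 + \frac{2\epsilon_0}{c\lambda_0} < \rho_0,
\end{equation*}
so $t^* = +\infty$. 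Combining with $\dist(w,\mathcal{I})^2 \leq \Phi/N$ and setting $T := (2\log 2)/(c\lambda_0)$ recasts the decay rate in the form $\log 2/T$ and yields the stated estimate, after absorbing $1/\sqrt{N}$ into $\epsilon_0$.

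The main obstacle is precisely this circular feedback: the spectral gap presupposes preservation of $\mathcal{G}(\mathbf{q}^0)$, edge preservation presupposes a bound on the relative-position drift, and the drift bound is only provided by the exponential decay of $\Phi$ driven by the spectral gap. Because $U$ has compact support, connectivity is a discrete property --- a single lost edge destroys the gap --- and this is what forces $r_0, \epsilon_0, T$ to depend on $w^0$ through both $\rho_0$ and $\lambda_0$. A secondary technical nuisance is the asymmetry of $A(\mathbf{q})$: the Vicsek barycenter velocity is not conserved, so one cannot naively use a uniform-mean Lyapunov function, and the pairwise-difference functional $\Phi$ together with a degree-weighted inner product (or the symmetric Dirichlet form) is essential to convert the spectral gap into the exponential decay of $\Phi$.
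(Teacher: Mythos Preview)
Your overall strategy---uniform spectral gap on a compact configuration neighbourhood, exponential decay of a velocity-dispersion functional, and a bootstrap on the relative positions to keep the graph connected---is sound and is genuinely different from the paper's proof. The paper does \emph{not} use a Lyapunov energy. Instead it works with the full $2Nd\times 2Nd$ matrix $C(\mathbf{q}(t))$, decomposes the solution through the time-dependent spectral projector $\Pi(\mathbf{q}(t))$ onto the complement of $\mathcal{I}$, freezes the coefficient by writing $C(\mathbf{q}(t))=C(\mathbf{q}(0))+\Gamma(\mathbf{q}(t))$, applies the variation-of-constants formula, and uses Gronwall on a finite interval $[0,T_0]$ to obtain a contraction by a factor $1/2$; it then iterates this estimate on successive intervals $[kT_0,(k+1)T_0]$, checking at each step that the trajectory remains in the good region $\mathcal{Q}(s_0,w^0)$. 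Your route is conceptually simpler, but there is a real gap at the step you flag only as a ``secondary technical nuisance''.

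The claimed inequality $\dot\Phi\le -c\lambda_0\Phi$ does not follow from the argument you sketch. Computing directly, $\dot\Phi/(2N)=-\langle\tilde p,\,L(\mathbf{q})\tilde p\rangle$ in the \emph{Euclidean} inner product, and since $L=\mathbb{I}-A$ is not symmetric this equals $\langle\tilde p,\,L_s\tilde p\rangle$ with $L_s=\tfrac12(L+L^T)$. The matrix $L_s$ does \emph{not} annihilate $\mathbf{1}$ (its row sums are $\tfrac12(1-c_i)$ with $c_i$ the column sums of $A$) and is not positive semidefinite in general; even its restriction to the mean-zero subspace has no evident lower bound of order $\lambda_2(L)$. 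Passing instead to the degree-weighted inner product, as you suggest, gives the clean identity $\langle p,Lp\rangle_D=\tfrac12\sum_{i,j}U(q_i-q_j)|p_i-p_j|^2$, but then the time derivative you are computing is that of $\sum_i D_i(\mathbf{q}(t))|p_i|^2$, not of $\Phi$, and you pick up extra terms $\sum_i\dot D_i|p_i|^2$ from the time dependence of the weights $D_i$. These terms are of order $\sqrt{\Phi}\cdot(\text{something not translation-invariant})$ and must be handled---for instance by working with the $D$-weighted variance $\Psi=\sum_i D_i|p_i-\bar p_D|^2$, showing $\Psi\sim\Phi$ on $\mathcal{K}(\rho_0)$ via the degree bounds, and proving $\dot\Psi\le -2\lambda_2\Psi+C\Psi^{3/2}$ so that the cubic correction is absorbed for $\epsilon_0$ small. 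That programme can be carried out, but it is precisely the missing piece, not a side issue; the paper's Duhamel-plus-iteration scheme is designed to avoid it.
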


The proof of Theorem \ref{sta1a} is presented in Section 3.

\subsection{Results for infinite size system}

%Next we present the results when the number of particles $N$ goes to infinity.
The main difficulty in deriving the kinetic limit,   from    system \eqref {mo1} is
that the interaction, see \eqref {mo1},  is not mean field,  i.e. it is not divided by $N,$ the total
particle number. This creates problems in the definition of the evolution of
the particle density since the velocity field in the phase space may be ill
defined if further assumptions on the interaction $U$ and on the configuration
space are not taken into account. We will discuss this point extensively in
Section 4. We overcome these difficulties in two ways. The first way is adding
$\epsilon>0,$ which will be kept fixed, to the denominator of the second
equation of \eqref {mo1}. We keep the interaction $U$ of compact support and
assume for definiteness $\sup_{x}|\nabla U(x)|\leq1.$ We will refer
to the system \eqref {mo1} modified in such a way as $\epsilon-$%
\emph{regularized system}. A second way is to confine the system \eqref{mo1}
in the torus of linear size $D>0,{\mathcal{T}}_{D},$ and taking interactions
$U$ verifying the following assumptions.

\begin{defin}
\label{no1}\textbf{Assumptions on the interaction} Let $\tilde{U}:\mathbb{R}^{d}\rightarrow\mathbb{R}_{+}$ be such that either
\begin{equation} \label {ginevra1}
\sup_{x\in\mathbb{R}^{d}}\left\vert \nabla\log\tilde{U}(x)\right\vert \leq
K
\end{equation}
or
\begin{equation}   
\tilde{U}(x)=\frac{1}{(2\pi R^{2})^{\frac{d}{2}}}e^{-\frac{|x|^{2}}%
{2R^{2}}}\ . \label{inter1}%
\end{equation}
We then define $U$ to be the periodization on the torus ${\mathcal{T}}_{D}$ of
one of the previous $\tilde{U}:$
\begin{equation} \label {ginevra2}
 U(x)=\sum_{n\in{\mathbb{Z}}^{d}}\tilde{U}(x+nD).
\end{equation}

\end{defin}

\begin{rem}
The assumption  \eqref {ginevra1}  is quite strong.  An
interaction $\tilde U$ verifying this assumption should decay for $|x|$
large as $e^{-\frac{|x|}{R}}$,  for some $R>0$. Interactions with compact support do not
satisfy this assumption as well as the interaction \eqref {inter1}.
\end{rem}

 Next, we define the space of measures and the metric we will be using.
    We denote by
${\mathcal{M}}$ 
  the space of probability measures   on $\left(  X\times
B_{1},\mathcal{B}\left(  X\times B_{1}\right)  \right)  $, where the symbol $X$
stands either for ${\mathbb{R}}^{d}$ or for the torus ${\mathcal{T}}_{D},$  $B_{1}$ denotes the ball of radius $1$ in ${\mathbb{R}}^{d}$ and
$\mathcal{B}\left(  X\times B_{1}\right)  $ is the Borel $\sigma$algebra on
$X\times B_{1}.$ We will prove that there is no loss of generality to confine
the velocity in a bounded set and for definiteness we identify this set with
$B_{1}$.  We will be using the same
notations either to denote the space of probability measure on ${\mathcal{T}%
}_{D}\times B_{1}$ or the space of probability measure on ${\mathbb{R}}%
^{d}\times B_{1},$ unless we will have the need to distinguish between the two
configuration spaces in which case we will use the notation ${\mathcal{M}}(X\times
B_{1}).$    In this space we introduce  the bounded Lipschitz distance $d_{b{\mathcal{L}}}$ defined as follows.  
The $d_{b{\mathcal{L}}}$ distance  between two measures $\mu$ and $\nu$ in ${\mathcal{M}}$ is given by
\begin{equation}
d_{b{\mathcal{L}}}(\mu,\nu)=\sup_{g\in{\mathcal{D}}}\left\vert \int
g(x,v)\mu(dx,dv)-\int g(x,v)\nu(dx,dv)\right\vert \ ,
\end{equation}
where
\begin{equation}
{\mathcal{D}}:=\left\{  g\ |\ g:X\times B_{1}\rightarrow\left[  0,1\right]
\ ;\ \left\vert g(x,v)-g(y,w)\right\vert \leq\sqrt{\left\vert v-w\right\vert
^{2}+\left\vert x-y\right\vert ^{2}}\right\}  \ .
\end{equation}
 The  metric $d_{b{\mathcal{L}}}$ generates the weak* topology\footnote{We refer the reader to
\cite{M} for an account on the notion of weak convergence of measures and to
\cite{V} for the relation between the bounded Lipschitz distance and the
Kantorovich-Rubinstein (Wasserstein) distance.} on $\MM$: for a sequence $ \mu^N\in \MM$ and $ \mu \in \MM$
 $$\lim_{N \to \infty}  d_{b{\mathcal{L}}}(\mu^N, \mu)=0$$
 is equivalent to  
\begin{equation}  \label {ginevra3}
\lim_{N\rightarrow\infty}\int g(w)\mu^{N}(dw)=\int g(w)\mu (dw) \,,
\end{equation}
for all bounded and continuous function $g $ on $ X$.  In the following  we denote the convergence in   \eqref {ginevra3}  by   $  \mu^N \overset{w}{\Longrightarrow}\mu $.

For  $  (q_{j},p_{j}) \in \R^{2d},j=1,..,N$,   we denote     by $ \mu^N$  the empirical measure
\begin{equation}
\mu^{N}(dx,dv):=\frac{1}{N}\sum_{j=1}^{N}\delta(q_{j}-x)\delta
(p_{j}-v)dxdv, \  \label{d.1} 
\end{equation}
 where $\delta(x-y)dx$ is the Dirac measure at
$y\in\mathbb{R}^{d}.$    
 Hence, $\mu^{N}_t$ denotes the empirical measure \eqref {d.1} when    the  $((q_{j} (t),p_{j} (t))$, $j=1,..,N$,  are the solutions of \eqref {mo1}. In this case we say that   $\mu^{N}_t$ is the empirical measure  at time $t$ associated to $w(t,w^0)$, where  $w^0= (\mathbf{q}^{0},\mathbf{p}^{0})$.  Given a
smooth function $g$ on $X\times B_{1}$ and $\mu\in \MM $ we denote
by
\begin{equation}
\mu(g)=\int_{X\times B_{1}}g(x,v)\mu(dx,dv)\  \label{d.2}%
\end{equation}
and
\[
(U\star\mu)(x)=\int_{X\times B_{1}}U(x-y)\mu(dy,du). 
\]
We have the following main results.

\begin{thm}
\label{cor1a}    Let  $w^0=(\mathbf{q}^{0},\mathbf{p}^{0}) \in({\mathcal{T}}_{D}\times B_{1})^{N}$    and  $\mu^N_{t} $, $t\ge0$,  be  the
empirical measure  associated to $w(t,w^0)$, the
solution of (\ref{mo1})  with $U$ chosen as Definition \ref{no1}.   Let     $\mu_0 \in \MM$ be such that 
\begin {equation} \lim_{N\rightarrow\infty}d_{b{\mathcal{L}}}(\mu_{0}^{N},\mu_0)=0.
\end{equation}
 Then, there exists
$\mu_{t}\in{\mathcal{M}}$ such that
\begin{equation}
\lim_{N\rightarrow\infty}d_{b{\mathcal{L}}}(\mu_{t}^{N},\mu_{t})=0,
\end{equation}
where $\mu_{t}$ is the measure solution of the following equation
\begin{equation}
\frac{\partial(\mu_{t}(g))}{\partial t}=\mu_{t}(v\cdot\nabla_{x}g)+\mu
_{t}(M\left(  \cdot,\cdot,\mu_{t}\right)  \cdot\nabla_{v}g)\ ,\forall
g\in{\mathcal{D\ }}, \label{b1a}%
\end{equation}
and for $\nu\in{\mathcal{M}},$
\begin{equation}
\label{s1}{\mathcal{T}}_{D}\times B_{1}\ni\left(  x,v\right)  \longmapsto
M(x,v,\nu):=\left(  \frac{\int_{{\mathcal{T}}_{D}\times B_{1}}U(x-y)u\nu
(dy,du)}{\int_{{\mathcal{T}}_{D}\times B_{1}}U(x-y)\nu(dy,du)}\right)
-v\in\mathbb{R}^{d}\ .
\end{equation}

\end{thm}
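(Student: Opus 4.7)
I would follow a Dobrushin/Neunzert-type strategy, adapted to the non mean-field character of \eqref{mo1}: the interaction enters as a \emph{ratio} of two empirical averages, each of which scales as $N$, so the effective force is of order one. The crucial observation making this work on the torus under Definition~\ref{no1} is that the periodized potential $U$ is bounded below by a strictly positive constant $U_{\min}>0$; consequently, for every $\nu\in{\mathcal{M}}$ one has $(U\star\nu)(x)\geq U_{\min}$ uniformly in $x$, so the ratio $M(x,v,\nu)$ in \eqref{s1} is smooth and globally well-defined. A second useful remark is that the prescription of $M$ makes $\dot p_i$ a convex combination of the $\{p_j\}$ minus $p_i$, so the convex hull of the velocities is non-increasing in time; hence it suffices to work with measures supported in $\mathcal{T}_D\times B_1$.

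\textbf{Step 1: well-posedness of the limit equation via characteristics.} Given a candidate continuous curve $s\mapsto\nu_s$ in $C([0,\tau],({\mathcal{M}},d_{b\mathcal{L}}))$, consider on $\mathcal{T}_D\times B_1$ the time-dependent vector field $F_s(x,v):=(v,M(x,v,\nu_s))$ and let $\Psi_s^\nu$ be its flow. Define the map $\Phi(\nu)_s:=\Psi_s^\nu\#\mu_0$. The Lipschitz estimates on $M$ described below imply that $\Phi$ is a contraction for $\tau$ small enough, so Banach's fixed-point theorem produces a unique $\mu_\cdot$. Iterating, the solution extends to all of $[0,\infty)$ since the Lipschitz bounds on $M$ are uniform in time. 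The resulting $\mu_t$ is a measure solution of \eqref{b1a}, as one checks by differentiating the identity $\mu_t(g)=\mu_0(g\circ\Psi_t^\mu)$ in $t$.

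\textbf{Step 2: convergence of empirical measures.} A direct computation, using \eqref{mo1} and the identity $M(q_i,p_i,\mu_t^N)=\sum_j U(q_i-q_j)(p_j-p_i)/\sum_j U(q_i-q_j)$, shows that $\mu_t^N$ is itself a measure solution of \eqref{b1a}, equivalently $\mu_t^N=\Psi_t^{\mu^N}\#\mu_0^N$. Splitting
\begin{equation*}
d_{b\mathcal{L}}(\mu_t^N,\mu_t)\leq d_{b\mathcal{L}}(\Psi_t^{\mu^N}\#\mu_0^N,\Psi_t^{\mu^N}\#\mu_0)+d_{b\mathcal{L}}(\Psi_t^{\mu^N}\#\mu_0,\Psi_t^{\mu}\#\mu_0),
\end{equation*}
bounding the first term by the Lipschitz constant of $\Psi_t^{\mu^N}$ in its initial datum, and the second by the integrated deviation of the two characteristic flows, yields
\begin{equation*}
d_{b\mathcal{L}}(\mu_t^N,\mu_t)\leq C_1(t)\,d_{b\mathcal{L}}(\mu_0^N,\mu_0)+C_2(t)\int_0^t d_{b\mathcal{L}}(\mu_s^N,\mu_s)\,ds,
\end{equation*}
to which Gronwall's lemma applies, giving the claim.

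\textbf{Main obstacle.} The decisive technical point is the Lipschitz estimate
\begin{equation*}
|M(x,v,\nu)-M(x,v,\nu')|\leq L\,d_{b\mathcal{L}}(\nu,\nu'),
\end{equation*}
together with a joint Lipschitz bound in $(x,v,\nu)$. It follows from the decomposition
\begin{equation*}
M(x,v,\nu)-M(x,v,\nu')=\frac{N(\nu)-N(\nu')}{D(\nu)}+N(\nu')\,\frac{D(\nu')-D(\nu)}{D(\nu)D(\nu')},
\end{equation*}
with $N(\nu):=\int U(x-y)u\,\nu(dy,du)$ and $D(\nu):=\int U(x-y)\,\nu(dy,du)$, using the uniform lower bound $D(\cdot)\geq U_{\min}$ on the torus and rescaling the test functions $y\mapsto U(x-y)$ and $(y,u)\mapsto U(x-y)u$ into the admissible class $\mathcal{D}$: this uses precisely the boundedness of $U$ and the Lipschitz bound on $U$ (or on $\log U$) granted by Definition~\ref{no1}, which is why both cases there are admissible. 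Once these Lipschitz bounds are in hand, together with the corresponding regularity of $M$ in $(x,v)$, the rest is classical Dobrushin-type bookkeeping.
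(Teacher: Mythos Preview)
Your proposal is correct and follows essentially the same Dobrushin--Neunzert--Spohn strategy as the paper: the paper reduces Theorem~\ref{cor1a} to the well-posedness and stability result Theorem~\ref{th1}, whose proof in the Appendix uses exactly your triangular splitting of $d_{b\mathcal{L}}(\mu_t^N,\mu_t)$, the same Lipschitz estimate on $M(\cdot,\cdot,\nu)$ in $\nu$ via the uniform lower bound $\inf_{x\in\mathcal{T}_D}U(x)=a>0$, and Gronwall. The only cosmetic difference is that the paper obtains global existence by equipping $C_{\mathcal{M}}$ with the weighted metric $d_\alpha(\mu_\cdot,\nu_\cdot)=\sup_t e^{-\alpha t}d_{b\mathcal{L}}(\mu_t,\nu_t)$ and choosing $\alpha$ large, rather than iterating a short-time contraction as you do.
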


The next result establishes that under regularity assumptions on the initial  measure
$\mu_{0}$ and on the interaction $U$ the solution $\mu_{t}$ of \eqref{b1a} is
regular as well.

\begin{thm}
\label{th2a} Take $U$ as in Definition \ref{no1}. If $\mu_{0}(dx,dv)=f_{0}%
(x,v)dxdv,$ then $\mu_{t}(dx,dv)=f_{t}(x,v)dxdv$ and $f_{t}$ is the weak
solution of
\begin{equation}
\frac{\partial}{\partial t}f_{t}(x,v)+v\cdot\nabla_{x}f_{t}(x,v)+\nabla
_{v}\cdot\left[  M(x,v,f_{t})f_{t}(x,v)\right]  =0\ . \label{eq1a}%
\end{equation}
Furthermore, if $f_{0}\in C^{k}(X\times B_{1}),k\geq1,$ and $U \in C^{k}(X) $
then $f_{t}\in C^{k}(X\times B_{1}).$
\end{thm}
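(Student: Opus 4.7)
The plan is to recover $\mu_t$ (hence $f_t$) as the pushforward of $\mu_0$ under the characteristic flow of the limit equation~(\ref{b1a}), and then read off density and regularity by change of variables. Under Definition~\ref{no1}, $U$ is the periodization of a strictly positive function, so $U_{\min}:=\inf_{x\in\TT_D}U(x)>0$ and $(U\star\nu)(x)\geq U_{\min}$ for every probability measure $\nu\in\MM$. Consequently $M(\cdot,\cdot,\nu)$ is well-defined and bounded. Since $\partial_x^\alpha(U\star\nu)=(\partial_x^\alpha U)\star\nu$, convolution with $U$ inherits the regularity of $U$ \emph{independently} of $\nu$; thus $M(\cdot,\cdot,\nu)$ is $C^k$ in $x$ whenever $U\in C^k(X)$, and affine in $v$. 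In particular, for the measure $\mu_t$ furnished by Theorem~\ref{cor1a}, the time-dependent vector field $b_t(x,v):=(v,M(x,v,\mu_t))$ is globally Lipschitz in $(x,v)$ uniformly on bounded time intervals.

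Next I would construct the characteristic flow by solving $\dot X=V$, $\dot V=M(X,V,\mu_t)$, obtaining a unique $C^k$ flow $\Phi_{0,t}$ on $\TT_D\times B_1$; invariance of $B_1$ under the $v$-dynamics is immediate since $M+v$ is a convex average of velocities in $B_1$, so $M$ points inward on $\partial B_1$. A direct computation gives $\nabla_x\cdot v+\nabla_v\cdot M(x,v,\mu_t)=-d$, whence by Liouville's formula $|\det D\Phi_{0,t}|(x,v)=e^{-dt}$, constant in $(x,v)$.

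The core step is the identification $\mu_t=(\Phi_{0,t})_{\ast}\mu_0$. Setting $\nu_t:=(\Phi_{0,t})_{\ast}\mu_0$ and differentiating $\nu_t(g)=\mu_0(g\circ\Phi_{0,t})$ along the flow gives, for every $g\in\D$,
\begin{equation*}
\frac{d}{dt}\nu_t(g)=\nu_t\bigl(v\cdot\nabla_x g+M(\cdot,\cdot,\mu_t)\cdot\nabla_v g\bigr),
\end{equation*}
so $\nu_t$ solves the \emph{linear} transport equation obtained from~(\ref{b1a}) by freezing its drift at $M(\cdot,\cdot,\mu_t)$, with the same initial datum $\mu_0$ as $\mu_t$. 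A Dobrushin-type Gronwall estimate in the bounded Lipschitz distance---the same mechanism that underlies Theorem~\ref{cor1a}---yields $d_{b{\mathcal L}}(\mu_t,\nu_t)\leq e^{Ct}d_{b{\mathcal L}}(\mu_0,\nu_0)=0$, forcing $\mu_t=\nu_t$.

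Finally, because $\Phi_{0,t}$ is a $C^k$ diffeomorphism of $\TT_D\times B_1$ with constant Jacobian $e^{-dt}$, the change-of-variables formula yields $\mu_t(dx,dv)=f_t(x,v)\,dx\,dv$ with $f_t(x,v)=e^{dt}f_0(\Phi_{0,t}^{-1}(x,v))$, absolutely continuous whenever $\mu_0$ is, and of class $C^k$ when $f_0\in C^k$ and $U\in C^k$ (the latter ensuring, via paragraph~1, that $b_t$ and hence $\Phi_{0,t}^{-1}$ are $C^k$). That $f_t$ solves~(\ref{eq1a}) in the distributional sense is then immediate from~(\ref{b1a}) by integration by parts in $x$ and $v$. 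The main delicate point is the measure-valued uniqueness used in the core step: it is standard for linear continuity equations with Lipschitz drift, but should be isolated cleanly to avoid any logical circularity with the stability estimate driving Theorem~\ref{cor1a}.
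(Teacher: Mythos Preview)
Your proposal is correct and follows essentially the same route as the paper: represent $\mu_t$ as the pushforward of $\mu_0$ under the characteristic flow generated by $(v,M(x,v,\mu_t))$, compute the divergence $\nabla_v\cdot M=-d$ to obtain the constant Jacobian $e^{-dt}$ via Liouville, and read off $f_t(x,v)=e^{dt}f_0(\Phi_{0,t}^{-1}(x,v))$ together with its $C^k$ regularity. The only difference is in the ``core step'': the paper does not argue uniqueness for the linear transport equation at all, because Theorem~\ref{th1} already establishes the fixed-point identity $\mu_t=\mu_0\circ T_{0,t}[\mu_\cdot]$ directly, so $\mu_t$ is \emph{by construction} the pushforward of $\mu_0$ under the flow; your Gronwall argument is thus re-proving a piece of Theorem~\ref{th1} rather than invoking it, which is harmless but slightly redundant and is exactly the circularity you flagged in your last sentence.
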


In Section 4, see Remark \ref{lun28} and Remark \ref{no2}, we will show that
Theorem \ref{cor1a} and Theorem \ref{th2a} hold also for the $\epsilon
-$\emph{\ regularized system} \eqref {mo1} when considering the configuration
space $X$ to be either $\mathbb{R}^{d}$ or ${\mathcal{T}}_{D}$ and, for any
$\nu\in{\mathcal{M}},$ $M\left(  \cdot,\cdot,\nu\right)  $ is replaced by
\begin{equation}
X\times B_{1}\ni\left(  x,v\right)  \longmapsto M_{\epsilon}(x,v,\nu):=\left(
\frac{\int_{X\times B_{1}}U(x-y)u\nu(dy,du)}{\int_{X\times B_{1}}%
U(x-y)\nu(dy,du)+\epsilon}\right)  -v\in\mathbb{R}^{d}\ . \label{s1e}%
\end{equation}
The results are shown adapting to our context the method reported in Spohn's
book  \cite[Section5]{S} (see also Neunzert   \cite{Ne} and Dobrushin  
\cite{Do}) and some classical tools of dynamical systems. 
 The main difference between the case considered here and the
one presented in   \cite{S} is that, in our case, the dependence of
$M(\cdot,\cdot,\nu)$  from $\nu$ is not linear.  We are able to  overcome this problem  when the denominator of $ M (\cdot,\cdot,\nu)$    is strictly  bigger than a positive number. 
This is the case when the $U$ in $M (\cdot,\cdot,\nu)$ is chosen as    in  Definitions \ref {no1}.
Notice that   the denominator in  $ M_\e (\cdot,\cdot,\nu)$ is always strictly bigger than $\e$.

The existence and
the uniqueness of the measure solution of equation \eqref{b1a} is given in
Theorem \ref{th1}. The existence of weak and strong solutions of (\ref{eq1a})
follows from Theorem \ref{th2}. The qualitative behaviour of the solution of
equation (\ref{eq1a}) is analyzed in Subsection 4.1. In particular, in Lemma
\ref{entro1}, we show that the Boltzmann-Vlasov entropy is strictly decreasing in time.
 
\section{Particle dynamics}

In the following we analyze the evolution of $N$ particles according equations
\eqref {mo1}. In this section $N$ is kept fixed, so we omit in the notation to
write explicitly the dependence on $N.$

\subsection{Stability}

We first notice that if the velocities of the particles at time zero are
bounded, that is, for all $i=1,..,N,\ p_{i}^{0}\in B_{r}$ for some $r>0,$ then
they will lie in $B_{r}$ for later times. In fact we have the following result:

\begin{lem}
\label{d.3}For any $i=1,..,N,$ assume that $p_{i}\left(  0\right)  \in B_{r}.$
Then, $p_{i}(t)\in B_{r},$ for all $t>0.$
\end{lem}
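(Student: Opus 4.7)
The plan is to show that the function $\Phi(t):=\max_{i=1,\dots,N}|p_i(t)|^2$ is non-increasing along solutions; since $\Phi(0)\le r^2$ by hypothesis, this immediately gives $|p_i(t)|\le r$ for every $i$ and every $t>0$. The scheme is standard for max-type Lyapunov functions associated with averaging dynamics, but one must handle the lack of classical differentiability of $\Phi$.

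First, I would compute the derivative of $|p_i(t)|^2$ along the flow. Writing $U_{ij}(t):=U(q_i(t)-q_j(t))$ and noting that $U(0)>0$ implies the denominator $\sum_{j}U_{ij}(t)$ is strictly positive, equation \eqref{mo1} gives
\begin{equation*}
\frac{d}{dt}|p_i(t)|^2 \;=\; 2\,p_i(t)\cdot\dot p_i(t) \;=\; \frac{2}{\sum_{j}U_{ij}(t)}\sum_{j=1}^{N}U_{ij}(t)\bigl(p_i(t)\cdot p_j(t)-|p_i(t)|^2\bigr).
\end{equation*}
The key observation is that if $i$ is an index realizing the maximum at time $t$, i.e.\ $|p_i(t)|=\max_{k}|p_k(t)|$, then by Cauchy--Schwarz $p_i(t)\cdot p_j(t)\le |p_i(t)|\,|p_j(t)|\le|p_i(t)|^2$ for every $j$, so each term in the sum is non-positive and hence $\tfrac{d}{dt}|p_i(t)|^2\le 0$.

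Next I would lift this pointwise bound to a statement about $\Phi$ itself. Since $\Phi$ is a finite maximum of smooth functions, it is locally Lipschitz, and by a standard argument (Danskin's theorem, or a direct continuity argument on the set of active indices) its upper Dini derivative satisfies
\begin{equation*}
D^{+}\Phi(t)\;\le\;\max_{i\in I(t)}\frac{d}{dt}|p_i(t)|^2,\qquad I(t):=\{i:|p_i(t)|^2=\Phi(t)\}.
\end{equation*}
By the previous step the right-hand side is $\le 0$, so $D^{+}\Phi\le 0$ on $[0,\infty)$. A locally Lipschitz function with non-positive upper Dini derivative is non-increasing, hence $\Phi(t)\le\Phi(0)\le r^2$, which is the claim.

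The only mildly delicate point is the switch from the smooth inequality on individual $|p_i|^2$ to the monotonicity of the non-smooth maximum $\Phi$; everything else is a one-line Cauchy--Schwarz computation. If one wants to avoid invoking Dini derivatives, an alternative is to argue by contradiction: for any $\delta>0$, let $t^{*}$ be the first time $\Phi(t)=(r+\delta)^2$; at $t^{*}$ some index $i^{*}$ attains the maximum and satisfies $\tfrac{d}{dt}|p_{i^*}|^2\big|_{t^{*}}\le 0$, contradicting the fact that $|p_{i^{*}}|^2$ must have just crossed $(r+\delta)^2$ from below. Letting $\delta\downarrow 0$ yields the result.
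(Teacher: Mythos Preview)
Your proof is correct and follows essentially the same approach as the paper: both compute $\tfrac{d}{dt}|p_i|^2$ at an index where $|p_i|$ is maximal and use Cauchy--Schwarz (together with $U\ge 0$) to conclude that this derivative is non-positive. The paper's argument is terser---it simply looks at the first time some $|p_i(t^\ast)|$ hits the boundary $r$ and observes the derivative is $\le 0$---while you have been more careful about the non-smoothness of the maximum via Dini derivatives, but the underlying idea is identical.
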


\begin{proof}
Assume, without loss of generality that $r=1$ and that there is a $t^{\ast}$
such that there is at least one $p_{i}(t^{\ast})$ such that $\left\vert
p_{i}(t^{\ast})\right\vert =1$ and $\left\vert p_{j}(t^{\ast})\right\vert
\leq1$ for $j\neq i.$ Then
\begin{equation}
\frac{1}{2}\frac{d}{dt}\left\vert p_{i}(t^{\ast})\right\vert ^{2}=\frac
{\sum_{j=1}^{N}U (q_{i}(t^{\ast})-q_{j}(t^{\ast}))\left[  p_{j}(t^{\ast
})-p_{i}(t^{\ast})\right]  \cdot p_{i}(t^{\ast})}{\sum_{j=1}^{N}U
(q_{i}(t^{\ast})-q_{j}(t^{\ast}))}\leq0\ .
\end{equation}

\end{proof}

\begin{rem}
\label{lu.1}The result of Lemma \ref{d.3} holds for any positive smooth
interaction $U,$ regardless of its support. In particular, it holds if $U $
does not have compact support.
\end{rem}

 Next result shows that if at time $t=0$ the particle   velocity vector  
is close to its mean velocity vector, then, at any further time $t, $ it will
always remain close to the mean initial velocity vector. Let $\Omega
\in\mathcal{L}_{Nd}$ be the operator such that%
\begin{equation}
{\mathbb{R}}^{Nd}\ni\mathbf{x\longmapsto}\Omega\mathbf{x}\in{\mathbb{R}}%
^{Nd}\ , \label{dom3}%
\end{equation}
where $\Omega\mathbf{x}$ is the vector in $\mathbb{R}^{Nd}$ whose component
are the vectors $\left(  \Omega x\right)  _{i}=\frac{1}{N}\sum_{j=1}^{N}%
x_{j}\in\mathbb{R}^{d}, \forall i=1,..,N.$ Notice that by definition $\Omega$ is
the orthogonal projector on \linebreak$\left\{  \mathbf{x}\in{\mathbb{R}}%
^{Nd}:x_{1}=\cdots=x_{N}\right\}  .$

\begin{thm}
\label{stav0} Let $w(t,w^{0})=(\mathbf{q}(t),\mathbf{p}(t))$ be the solution
of (\ref{mo1}) at time $t$ starting from $w^{0}=(\mathbf{q}^{0},\mathbf{p}%
^{0})\in{\mathbb{R}}^{2Nd}.$ Given $\epsilon>0,$ assume that $|\mathbf{p}%
^{0}-\Omega\mathbf{p}^{0}|<\epsilon.$
%$ \sup_{\{i=1, \dots, N\}}  |p_i^1 -\bar  p^1|< \e$, for  $\e>0$,
Then
\begin{equation}
|\mathbf{p}(t)-\Omega\mathbf{p}^{0}|\leq\epsilon,\qquad\forall t\geq0\ .
\end{equation}
\end{thm}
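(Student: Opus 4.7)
The plan is a maximum-principle argument in the spirit of Lemma \ref{d.3}, now centred at the initial barycentre $v_0:=\tfrac{1}{N}\sum_{j=1}^N p_j^0\in\mathbb{R}^d$, so that $\Omega\mathbf{p}^0=(v_0,\ldots,v_0)\in\mathbb{R}^{Nd}$. Introducing $u_i(t):=p_i(t)-v_0$ and using that $v_0$ is constant in time, the $\mathbf{u}$'s satisfy the same averaging ODE as the $\mathbf{p}$'s,
\[
\dot u_i(t)=\sum_{j=1}^N a_{ij}(t)\bigl(u_j(t)-u_i(t)\bigr),\qquad a_{ij}(t):=\frac{U(q_i(t)-q_j(t))}{\sum_{k=1}^N U(q_i(t)-q_k(t))},
\]
with $a_{ij}\geq0$ and $\sum_j a_{ij}=1$ (the denominator being strictly positive by $U(0)>0$). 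The quantity to bound is $|\mathbf{p}(t)-\Omega\mathbf{p}^0|^2=\sum_i|u_i(t)|^2$.

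The first step reproduces Lemma \ref{d.3} with the origin translated to $v_0$. At any time $t^\star$ at which $|u_{i^\star}(t^\star)|=\max_i|u_i(t^\star)|$, Cauchy--Schwarz combined with $|u_j|\leq|u_{i^\star}|$ yields
\[
\tfrac{1}{2}\tfrac{d}{dt}|u_{i^\star}|^2\bigm|_{t^\star}=\sum_j a_{i^\star j}\bigl[u_{i^\star}\cdot u_j-|u_{i^\star}|^2\bigr]\leq 0,
\]
and a standard envelope argument upgrades this into $\max_i|u_i(t)|\leq\max_i|u_i^0|$ for every $t\geq0$, so that each velocity stays inside the closed ball $\overline{B_r(v_0)}$ with $r:=\max_i|p_i^0-v_0|\leq|\mathbf{p}^0-\Omega\mathbf{p}^0|<\epsilon$.

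The sharper $\ell^2$ estimate is then obtained by computing directly
\[
\tfrac{1}{2}\tfrac{d}{dt}|\mathbf{u}(t)|^2=\sum_{i,j}a_{ij}(t)\,u_i\cdot(u_j-u_i)
\]
and multiplying each term inside the sum by $D_i(t):=\sum_k U(q_i(t)-q_k(t))$: after using $U(q_i-q_j)=U(q_j-q_i)$ to symmetrise in $(i,j)$ one obtains the dissipative identity
\[
\sum_i D_i(t)\,u_i\cdot\dot u_i=-\tfrac{1}{2}\sum_{i,j}U(q_i(t)-q_j(t))\,|u_i-u_j|^2\leq 0.
\]
Transferring this manifestly non-positive, $D_i$-weighted expression back to the unweighted $\ell^2$ norm --- using the uniform pointwise control on $|u_i(t)|$ from the previous step to dominate the error introduced by the mismatch $\frac{1}{D_i}-\frac{1}{D_j}$ --- would then give $\frac{d}{dt}|\mathbf{u}|^2\leq 0$, hence $|\mathbf{p}(t)-\Omega\mathbf{p}^0|\leq|\mathbf{p}^0-\Omega\mathbf{p}^0|<\epsilon$ for all $t\geq 0$.

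The delicate point is precisely this last transfer: because the factor $1/D_i$ inside $a_{ij}$ spoils the double stochasticity of $A(t)=(a_{ij})$, the bilinear form $\sum_{i,j}a_{ij}u_i\cdot u_j-\sum_i|u_i|^2$ is not manifestly non-positive, and one must lean on the coupled evolution $\dot q_i=p_i$ --- which reshapes the degrees $D_i(t)$ as particles move and interact --- together with the maximum-principle bound to close the gap between the $D_i$-weighted dissipation and the unweighted Euclidean norm on $\mathbb{R}^{Nd}$.
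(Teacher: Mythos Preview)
Your first step is exactly what the paper does: set $u_i=p_i-v_0$, note that the $u_i$ satisfy the same averaging ODE, and run the argument of Lemma~\ref{d.3} to get the componentwise bound $\max_i|u_i(t)|\le\max_i|u_i(0)|$. The paper's entire proof consists of this single reduction --- ``proceed as in the proof of Lemma~\ref{d.3}'' --- and nothing more.

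Your second step, however, is a genuine gap that cannot be closed along the lines you suggest. You want $\tfrac{d}{dt}|\mathbf{u}|^2\le0$ and hope to recover it from the weighted dissipation $\sum_iD_i\,u_i\cdot\dot u_i=-\tfrac12\sum_{i,j}U_{ij}|u_i-u_j|^2\le0$, flagging the transfer as ``delicate'' without carrying it out. In fact the unweighted monotonicity is \emph{false}: writing $\tfrac12\tfrac{d}{dt}|\mathbf{u}|^2=\mathbf{u}^T(A-\mathbb{I})\mathbf{u}$, the governing quadratic form is that of $(A+A^T)/2$, and because $A$ is only row-stochastic (the column sums $\sum_ia_{ij}$ differ from $1$ whenever the $D_i$ are unequal) this symmetrized matrix can have eigenvalues strictly larger than $1$. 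Concretely, for $N=3$, $d=1$, $q_1=q_2$ and $q_3$ at distance with $U(q_1-q_3)/U(0)=\delta$ small, one computes that the top eigenvalue of $(A+A^T)/2$ equals $1+c\delta+O(\delta^2)$ with $c>0$, so $\mathbf{u}^T(A-\mathbb{I})\mathbf{u}>0$ along the corresponding eigendirection. No appeal to ``the coupled evolution $\dot q_i=p_i$'' will rescue a pointwise inequality that is simply not valid. The componentwise maximum principle from your first step is the correct invariant; the Euclidean norm on $\mathbb{R}^{Nd}$ is not a Lyapunov function for this flow. That componentwise bound yields $|\mathbf{p}(t)-\Omega\mathbf{p}^0|\le\sqrt{N}\,\epsilon$, which is what the paper's one-line argument literally delivers and is all that is used downstream (Corollary~\ref{sta0}, Theorem~\ref{sta1a}); the constant in the statement appears to be a minor imprecision in the paper rather than something your extra step could repair.
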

\begin{proof}  
 by $v_{i} (t):=p_{i}(t)-\left(  \Omega p^{0}\right)  _{i}\in{\mathbb{R}}^{d},i=1,..,N,$ then
proceed as in the proof of Lemma \ref{d.3}. 
\end{proof}  
Note that, for any $w\in{\mathbb{R}}^{2Nd},$
\begin{equation}
\operatorname{dist}\left(  w,{\mathcal{I}}\right)  =\inf_{w^{0}\in
{\mathcal{I}}}|w-w^{0}|=\inf_{\{\mathbf{p}^{0} \in\mathbb{R}^{Nd}%
:w=(\mathbf{q}^{0},\mathbf{p}^{0})\in{\mathcal{I}}\}}|\mathbf{p}%
-\mathbf{p}^{0}|=|\mathbf{p}-\Omega\mathbf{p}|\ , \label{dom2}%
\end{equation}
where $\Omega$ is the operator defined in (\ref{dom3}). From Theorem
\ref{stav0} one deduces that the invariant manifold ${\mathcal{I}}$ is stable
for the evolution (\ref{mo1}).

\begin{cor}
\label{sta0} For any $\epsilon>0$ let $B(\epsilon,{\mathcal{I}})=\left\{
w\in{\mathbb{R}}^{2Nd}:\operatorname{dist}\left(  w,{\mathcal{I}}\right)
\leq\epsilon\right\}  $ be a neighborhood of radius $\epsilon$ of
${\mathcal{I}}.$ Let $w(t,w^{0})$ be the solution of (\ref{mo1}) at time $t$
starting from $w^{0}=(\mathbf{q}^{0},\mathbf{p}^{0})\in B(\epsilon
,{\mathcal{I}}).$ Then
\begin{equation}
\operatorname{dist}\left(  w(t,w^{0}),{\mathcal{I}}\right)  \leq2
\epsilon,\qquad\forall t>0\ .
\end{equation}

\end{cor}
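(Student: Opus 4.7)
My plan is to read off the bound directly from Theorem~\ref{stav0} combined with the identity \eqref{dom2}, which expresses the distance to $\mathcal{I}$ via the orthogonal projector $\Omega$. The whole point of the corollary is that Theorem~\ref{stav0} controls $|\mathbf{p}(t) - \Omega\mathbf{p}^{0}|$, i.e.\ the deviation from the \emph{initial} barycentric velocity, while the distance to $\mathcal{I}$ is $|\mathbf{p}(t)-\Omega\mathbf{p}(t)|$, i.e.\ the deviation from the \emph{current} barycentric velocity; bridging these by the triangle inequality gives the factor of $2$.

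Concretely, the steps are as follows. First, since $w^{0}\in B(\epsilon,\mathcal{I})$, the identity \eqref{dom2} yields $|\mathbf{p}^{0}-\Omega\mathbf{p}^{0}|\leq\epsilon$. By Theorem~\ref{stav0} (applied with $\epsilon+\delta$ in place of $\epsilon$ for arbitrary $\delta>0$, and then letting $\delta\downarrow 0$ to accommodate the non-strict inequality) we therefore have
\begin{equation}
|\mathbf{p}(t)-\Omega\mathbf{p}^{0}|\leq\epsilon,\qquad \forall t\geq 0.
\end{equation}
Second, I invoke \eqref{dom2} at time $t$ to write $\operatorname{dist}(w(t,w^{0}),\mathcal{I})=|\mathbf{p}(t)-\Omega\mathbf{p}(t)|$, and insert $\Omega\mathbf{p}^{0}$:
\begin{equation}
|\mathbf{p}(t)-\Omega\mathbf{p}(t)|\leq|\mathbf{p}(t)-\Omega\mathbf{p}^{0}|+|\Omega\mathbf{p}^{0}-\Omega\mathbf{p}(t)|.
\end{equation}
Third, I control the second term by using that $\Omega$ is an orthogonal projector, so $\|\Omega\|\leq 1$ and $\Omega^{2}=\Omega$. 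Writing $\Omega(\mathbf{p}^{0}-\mathbf{p}(t))=\Omega(\Omega\mathbf{p}^{0}-\mathbf{p}(t))$ we get
\begin{equation}
|\Omega\mathbf{p}^{0}-\Omega\mathbf{p}(t)|=|\Omega(\Omega\mathbf{p}^{0}-\mathbf{p}(t))|\leq|\mathbf{p}(t)-\Omega\mathbf{p}^{0}|\leq\epsilon,
\end{equation}
and combining with the first term yields the claimed bound $\operatorname{dist}(w(t,w^{0}),\mathcal{I})\leq 2\epsilon$.

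There is essentially no obstacle here: the corollary is a direct bookkeeping exercise on top of Theorem~\ref{stav0} and \eqref{dom2}. The only minor point worth flagging is the matching of a strict inequality in the hypothesis of Theorem~\ref{stav0} with the non-strict hypothesis of the corollary, which is handled by a trivial limiting argument. One could even tighten the constant to $\epsilon$ by noting $(I-\Omega)\mathbf{p}(t)=(I-\Omega)(\mathbf{p}(t)-\Omega\mathbf{p}^{0})$, but the factor $2$ already suffices for the stated stability conclusion.
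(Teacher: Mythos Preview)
Your proof is correct and follows essentially the same approach as the paper: both invoke \eqref{dom2} to write $\operatorname{dist}(w(t,w^{0}),\mathcal{I})=|\mathbf{p}(t)-\Omega\mathbf{p}(t)|$, insert $\Omega\mathbf{p}^{0}$ via the triangle inequality, use $\Omega^{2}=\Omega$ together with $\|\Omega\|\leq 1$ to bound the second term by $|\mathbf{p}(t)-\Omega\mathbf{p}^{0}|$, and then apply Theorem~\ref{stav0}. Your additional remarks on the strict/non-strict inequality and on the sharper constant $\epsilon$ via $(I-\Omega)\mathbf{p}(t)=(I-\Omega)(\mathbf{p}(t)-\Omega\mathbf{p}^{0})$ are valid observations not made explicit in the paper.
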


\begin{proof}
By \eqref{dom2} we have
\begin{equation}
\operatorname{dist}\left(  w(t,w^{0}),{\mathcal{I}}\right)  =\left\vert
\mathbf{p}(t)-\Omega\mathbf{p}(t)\right\vert \leq|\mathbf{p}(t)-\Omega
\mathbf{p}^{0}|+|\Omega\mathbf{p}(t)-\Omega\mathbf{p}^{0}|\ .
\end{equation}
By definition of $\Omega$, see (\ref{dom3}),
\begin{equation}
|\Omega\mathbf{p}(t)-\Omega\mathbf{p}^{0}|=|\Omega(\mathbf{p}(t)-\Omega
\mathbf{p}^{0})|\leq|\mathbf{p}(t)-\Omega\mathbf{p}^{0}|\ .
\end{equation}
Hence, by Theorem \ref{stav0},
\begin{equation}
\operatorname{dist}\left(  w(t,w^{0}),{\mathcal{I}}\right)  \leq
2|\mathbf{p}(t)-\Omega\mathbf{p}^{0}|\leq2\epsilon\ ,\qquad\forall t\geq0\ .
\end{equation}

\end{proof}

\subsection{Asymptotic stability}

To prove Theorem \ref{sta1a} we rewrite the non linear system (\ref{mo1}) as
follows:
\begin{equation}
\left\{
\begin{array}
[c]{l}%
\left(
\begin{array}
[c]{c}%
\frac{d\mathbf{q}(t)}{dt}\\
\frac{d\mathbf{p}(t)}{dt}%
\end{array}
\right)  =C\left(  \mathbf{q}(t)\right)  \left(
\begin{array}
[c]{c}%
\mathbf{q}(t)\\
\mathbf{p}(t)
\end{array}
\right) \\
{\mathbf{q}}(0)=\mathbf{q}^{0},{\mathbf{p}}(0)=\mathbf{p}^{0}%
\end{array}
\right.  \label{d.12a}%
\end{equation}
where%
\begin{align}
\mathbb{R}^{Nd}  &  \ni\mathbf{q}\longmapsto C\left(  \mathbf{q}\right)
:=\left(
\begin{array}
[c]{cc}%
\mathbf{0}_{Nd} & \mathbb{I}_{Nd}\\
\mathbf{0}_{Nd} & L\left(  \mathbf{q}\right)
\end{array}
\right)  \in\mathcal{L}_{2Nd}\left(  \mathbb{R}\right)  \ ,\label{d.12c}\\
L\left(  \mathbf{q}\right)   &  :=A\left(  \mathbf{q}\right)  -\mathbb{I}_{Nd}
\label{defLq}%
\end{align}
and $A(\mathbf{q})$ is the linear operator valued function so defined
\begin{align}
\mathbb{R}^{Nd}  &  \ni\mathbf{q}\longmapsto A(\mathbf{q}):=%
\begin{bmatrix}
a_{1,1}(\mathbf{q})\mathbb{I}_{d} & a_{1,2}(\mathbf{q})\mathbb{I}_{d} & \quad
& \dots & a_{1,N}(\mathbf{q})\mathbb{I}_{d}\\
a_{2,1}(\mathbf{q})\mathbb{I}_{d} & a_{2,2}(\mathbf{q})\mathbb{I}_{d} & \quad
& \dots & a_{2,N}(\mathbf{q})\mathbb{I}_{d}\\
a_{N,1}(\mathbf{q})\mathbb{I}_{d} & \quad & \dots & a_{N,N-1}(\mathbf{q}%
)\mathbb{I}_{d} & a_{N,N}(\mathbf{q})\mathbb{I}_{d}%
\end{bmatrix}
\in\mathcal{L}_{Nd}\left(  \mathbb{R}\right) \label{d.8a}\\
a_{i,j}(\mathbf{q})  &  :=\frac{U(q_{i}-q_{j})}{\sum_{k=1}^{N}U(q_{i}-q_{k}%
)}\ ,\qquad j=1,..,N,\quad i=1,..,N. \label{v1}%
\end{align}

\begin{rem}
\label{re1} Notice that for $\mathbf{q}\in{\mathbb{R}} ^{Nd}$
\begin{equation}
a_{i,j}(\mathbf{q}) = a_{i,j}(\mathbf{q}+ \Omega\mathbf{x}), \quad
\forall\mathbf{x} \in{\mathbb{R}}^{Nd}, \qquad j=1,..,N,\quad i=1,..,N
\label{sAe1}%
\end{equation} 
and
\begin{equation}
\sum_{j=1}^{N}a_{i,j}(\mathbf{q})=1. \label{sA=1}%
\end{equation}
These two properties are important when studying the spectrum of $C\left(
\mathbf{q}\right)  $ for a fixed value of $\mathbf{q}$.
\end{rem}

\subsubsection{Spectral Analysis of $C\left(  \mathbf{q}\right)  $}

Let $\mathbf{q}\in\mathbb{R}^{Nd}$ be fixed. The eigenvalues of $C\left(
\mathbf{q}\right)  $ are the roots of the characteristic equation
\begin{equation}
\operatorname{Det}\left[  C\left(  \mathbf{q}\right)  -\lambda\mathbb{I}%
_{2Nd}\right]  =(-\lambda)^{Nd}\operatorname{Det}\left[  L\left(
\mathbf{q}\right)  -\lambda\mathbb{I}_{Nd}\right]  =0 . \label{sa1}%
\end{equation}
We need then to study the spectrum of $L\left(  \mathbf{q}\right)  $ and
therefore, by (\ref{defLq}) the spectrum of $A(\mathbf{q}).$ To do this it is
convenient to introduce the tensor space ${\mathbb{R}}^{N}\otimes{\mathbb{R}%
}^{d}.$ We denote by ${\mathcal{F}}$ the isomorphism
\begin{equation}
{\mathbb{R}}^{Nd}\ni\mathbf{x}\longrightarrow\mathcal{F}\left(  \mathbf{x}%
\right)  :=\sum_{i=1}^{N}\sum_{j=1}^{d}x_{i}^{j}e_{i}\otimes e_{j}%
\in{\mathbb{R}}^{N}\otimes{\mathbb{R}}^{d}\ ,
\end{equation}
such that ${\mathcal{F}}(\mathbf{x})_{i,j}=x_{i}^{j}$, $i=1,.., N$ and
$j=1,..,d $.

To ease the notation we omit in the following to write the dependence on
$\mathbf{q}$ if no confusion arises. We therefore set $A:=A\left(
\mathbf{q}\right)  $.
%since by
%(\ref{d.8a}) for any $i=1,..,N$ and any $j,k=1,..,d,\ A_{i}^{j}=A_{i}^{k},$
One obtains immediately that $A:{\mathbb{R}}^{Nd}\longrightarrow{\mathbb{R}%
}^{Nd}$ acts on ${\mathbb{R}}^{N}\otimes{\mathbb{R}}^{d}$ as follows
\begin{equation}
\tilde{A}\otimes\mathbb{I}_{d}:{\mathbb{R}}^{N}\otimes{\mathbb{R}}%
^{d}\longrightarrow{\mathbb{R}}^{N}\otimes{\mathbb{R}}^{d}\ ,
\end{equation}
where, by ({\ref{v1}), setting }$a_{i,j}:=a_{i,j}(\mathbf{q}),$%
\begin{equation}
\tilde{A}:=%
\begin{bmatrix}
a_{1,1} & a_{1,2} & \quad & \dots & a_{1,N}\\
a_{2,1} & a_{2,2} & \quad & \dots & a_{2,N}\\
a_{N,1} & \quad & \dots & a_{N,N-1} & a_{N,N}%
\end{bmatrix}
\ . \label{d.8b}%
\end{equation}
Namely, one has that
\begin{equation}
\left(  \tilde{A}\otimes\mathbb{I}_{d}\right)  {\mathcal{F}}(\mathbf{x}%
)={\mathcal{F}}\left(  A\mathbf{x}\right)  \ . \label{defIs}%
\end{equation}
Furthermore, denoting by $\Sigma(A)\subset\mathbb{C}$ the spectrum of $A,$%
\begin{equation}
\Sigma(A)=\Sigma(\tilde{A}\otimes\mathbb{I}_{d})=\Sigma(\tilde{A}%
)\Sigma(\mathbb{I}_{d})\footnote{If $Z:=\left\{  z_{1},..,z_{n}\right\}  $ and
$W:=\left\{  w_{1},..,w_{m}\right\}  $ are two discrete subsets of
$\mathbb{C}$ we denote by
\[
ZW:=\left\{  z_{i}w_{j}\in\mathbb{C}:i=1,..,n\ ;\ j=1,..,m\right\}  .~
\]
}\ . \label{v4}%
\end{equation}
Since the only eigenvalue of $\mathbb{I}_{d}$ is $1$ with multiplicity $d,$
the problem is reduced to study the spectrum of $\tilde{A}.$ The matrix
$\tilde{A}$ is a (right) stochastic matrix, that is it has non-negative
entries and, by (\ref{sA=1}), $\sum_{j=1}^{N}a_{i,j}=1$, $\forall i=1,..,N$.
Then, if it is irreducible one can apply the Perron-Frobenius Theorem.

Recall that a matrix $D\in\mathcal{L}_{n}\left(  \mathbb{R}\right)  $ with
non-negative entries is said to be irreducible if there exists an integer $m$
such that $D^{m}$ has strictly positive entries. We have the following.
\begin{lem}
\label{v2}  Let  $\tilde
{A}(\mathbf{q})$,      $\mathbf{q}\in{\mathbb{R}}^{Nd},$   be  irreducible. Then $1$ is the maximum eigenvalue and all the
other eigenvalues $\lambda(\mathbf{q})\in\mathbb{C}$ are strictly smaller in
absolute value of $1,$ i.e. $|\lambda(\mathbf{q})|<1.$ The eigenspace
associated to the eigenvalue $1$ is one dimensional and it is generated by the eigenvector
$\eta$,   $\eta_{i}=\frac{1}{\sqrt{N}}$ for $i=1,..,N$.  There
are no other positive eigenvectors except multiples of $\eta.$
\end{lem}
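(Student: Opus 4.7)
The plan is to invoke the Perron--Frobenius theorem for primitive non-negative matrices, after recording the two relevant structural features of $\tilde{A}(\mathbf{q})$: it is row-stochastic, and it has strictly positive diagonal entries. First, from the identity \eqref{sA=1} the vector $\eta\in\mathbb{R}^{N}$ with $\eta_{i}=1/\sqrt{N}$ satisfies $\tilde{A}\eta=\eta$, so $1\in\Sigma(\tilde{A})$. The row-stochastic property also gives $\|\tilde{A}\|_{\infty}=\max_{i}\sum_{j=1}^{N}a_{i,j}(\mathbf{q})=1$, and since the spectral radius is bounded by any induced operator norm, every $\lambda\in\Sigma(\tilde{A})$ satisfies $|\lambda|\leq 1$. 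Consequently $1$ is the maximum eigenvalue.

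Next, I would verify that $\tilde{A}$ is primitive, i.e.\ irreducible and aperiodic. Irreducibility is assumed. Aperiodicity is a consequence of the standing hypothesis $U(0)>0$, which forces $a_{i,i}(\mathbf{q})=U(0)/\sum_{k}U(q_{i}-q_{k})>0$ for every $i$. Indeed, if $m$ is such that $\tilde{A}^{m}$ has strictly positive entries, then writing $\tilde{A}^{m+1}=\tilde{A}\,\tilde{A}^{m}$ and using the positivity of each $a_{i,i}$ shows that $\tilde{A}^{m+1}$ is also entrywise positive; hence the greatest common divisor of the return times at each vertex of the associated digraph is $1$, which is the definition of aperiodicity.

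With primitivity in hand, the Perron--Frobenius theorem yields simultaneously that the spectral radius, which we have just shown equals $1$, is a simple eigenvalue with one-dimensional eigenspace spanned by a strictly positive eigenvector; that every other eigenvalue has modulus strictly less than $1$; and that, up to scalar multiples, this is the unique eigenvector of $\tilde{A}$ with non-negative entries. Since $\eta$ is positive and lies in the eigenspace associated to $1$, this eigenspace is $\mathbb{R}\eta$; no other eigenvalue lies on the unit circle; and no positive eigenvectors exist apart from multiples of $\eta$, which is exactly the statement of the lemma.

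The only subtle point is the strict inequality $|\lambda(\mathbf{q})|<1$: for a merely irreducible stochastic matrix, non-trivial eigenvalues may well lie on the unit circle (permutation matrices being the archetypal example). The assumption $U(0)>0$, which forces the diagonal of $\tilde{A}$ to be strictly positive and hence $\tilde{A}$ to be primitive, is precisely what rules this out.
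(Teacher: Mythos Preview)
Your proof is correct and follows the same approach as the paper: the row-stochastic property gives spectral radius $1$, and Perron--Frobenius yields simplicity of the top eigenvalue together with uniqueness of the positive eigenvector. Your treatment is in fact more careful than the paper's on one point: the paper asserts the strict inequality $|\lambda|<1$ for the remaining eigenvalues directly from the norm bound $\|\tilde A\|_{\infty}=1$, whereas you correctly observe that this requires primitivity, supplied here by the strictly positive diagonal entries $a_{i,i}(\mathbf{q})>0$ coming from $U(0)>0$.
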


\begin{proof}
Because for any $\mathbf{q} \in{\mathbb{R}}^{Nd},$ $\Vert\tilde A\left(
\mathbf{q}\right)  \Vert_{\infty}\leq\max_{i=1,..,N}\sum_{j=1}^{N}%
a_{i,j}\left(  \mathbf{q}\right)  =1,$ we have that the maximum eigenvalue is
$1$ and any other eigenvalue $\lambda(\mathbf{q})\in\mathbb{C}$ is strictly
smaller in absolute value of $1$. By Perron Frobenius Theorem the maximum
eigenvalue is simple and the associated positive eigenvector is $\eta$ with
$\eta_{i}=\frac{1}{\sqrt{N}}$ for $i=1,..,N.$
\end{proof}

It is possible to show, assuming that     $\tilde{A}\left(
\mathbf{q}\right)  $  is irreducible,    that the spectrum of  $\tilde{A}\left(
\mathbf{q}\right)  $  is indeed real, although this information  is not relevant  for the proofs of the results. 
 \begin{rem}
For any $\mathbf{q}\in\mathbb{R}^{Nd},\tilde{A}\left(
\mathbf{q}\right)  $ represents the transition matrix for the Markov chain
with state space $\mathcal{S}_{N}:=\left\{  1,...,N\right\}  .$ By (\ref{v1})
we have that $\tilde{A}\left(  \mathbf{q}\right)  $ is reversible w.r.t. the
probability distribution $\left\{  \mu_{i}\left(  \mathbf{q}\right)  \right\}
_{i\in\mathcal{S}_{N}}$ such that $\forall i\in\mathcal{S}_{N},$%
\begin{equation}
\mu_{i}\left(  \mathbf{q}\right)  :=\frac{\sum_{j=1}^{N}U\left(  q_{i}%
-q_{j}\right)  }{\sum_{i,j=1}^{N}U\left(  q_{i}-q_{j}\right)  }>0\ ,
\end{equation}
(for an account on reversible Markov chains we refer the reader to and
\cite{St}).
Let    $\mathbb{H}_{N}\left(
\mathbf{q}\right)  $ be  the  space $\mathbb{R}^{N}$   equipped with the  scalar product
\begin{equation}
\mathbb{R}^{N}\times\mathbb{R}^{N}\ni\left(  f,g\right)  \longmapsto
\left\langle f,g\right\rangle _{\mathbf{q}}:=\sum_{i\in\mathcal{V}_{N}}\mu
_{i}\left(  \mathbf{q}\right)  g_{i}f_{i}\in\mathbb{R}\ . \label{H(q)}%
\end{equation}
It is easy to verify that  $\tilde{A}\left(  \mathbf{q}\right)  $ is selfadjoint on $\mathbb{H}%
_{N}\left(  \mathbf{q}\right)$, hence  the  eigenvalues of    $\tilde{A}\left(  \mathbf{q}\right)  $  are real.
\end{rem}

\begin{lem}
\label{v3} For any $\mathbf{q}\in{\mathbb{R}}^{Nd},$ such that $\tilde
{A}(\mathbf{q})$ is irreducible, let $A(\mathbf{q})$ be the matrix as in
(\ref{d.8a}). We have that $1\in\Sigma(A(\mathbf{q}))$ is the maximum
eigenvalue. The associated eigenspace is the $d$-dimensional manifold
$\{\mathbf{p}\in{\mathbb{R}}^{Nd}:p_{i}=v,\ i=1,..,N; v \in{\mathbb{R}}%
^{d}\}.$ All the other eigenvalues $\lambda(\mathbf{q})\in\Sigma
(A(\mathbf{q}))$ are such that $|\lambda(\mathbf{q})|<1.$
\end{lem}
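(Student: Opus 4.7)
The plan is to reduce the spectral analysis of $A(\mathbf{q})$ to that of $\tilde{A}(\mathbf{q})$ via the tensor-product identification already set up in the excerpt, and then invoke Lemma \ref{v2}. First, I would recall from \eqref{defIs} that under the isomorphism $\mathcal{F}:\mathbb{R}^{Nd}\to\mathbb{R}^N\otimes\mathbb{R}^d$ the operator $A(\mathbf{q})$ is conjugate to $\tilde{A}(\mathbf{q})\otimes\mathbb{I}_d$. Consequently, by \eqref{v4}, $\Sigma(A(\mathbf{q}))=\Sigma(\tilde{A}(\mathbf{q}))\,\Sigma(\mathbb{I}_d)$, and since $\Sigma(\mathbb{I}_d)=\{1\}$, we simply have $\Sigma(A(\mathbf{q}))=\Sigma(\tilde{A}(\mathbf{q}))$ (with multiplicities multiplied by $d$).

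Next, I would apply Lemma \ref{v2}: since $\tilde{A}(\mathbf{q})$ is irreducible and stochastic, its maximum eigenvalue is $1$ (simple), and every other eigenvalue $\lambda(\mathbf{q})$ satisfies $|\lambda(\mathbf{q})|<1$. Transferring through the tensor structure, the maximum eigenvalue of $A(\mathbf{q})$ is $1$, the remaining eigenvalues all lie strictly inside the unit disk, and the eigenspace associated to $1$ has dimension exactly $d$ (namely, $1\cdot d$).

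The only substantive step left is to identify this $d$-dimensional eigenspace explicitly. By Lemma \ref{v2}, the $1$-eigenspace of $\tilde{A}(\mathbf{q})$ is spanned by $\eta=\frac{1}{\sqrt{N}}(1,\dots,1)^\top\in\mathbb{R}^N$. Therefore the $1$-eigenspace of $\tilde{A}(\mathbf{q})\otimes\mathbb{I}_d$ on $\mathbb{R}^N\otimes\mathbb{R}^d$ is $\mathrm{span}\{\eta\otimes e_j:j=1,\dots,d\}$. Pulling back through $\mathcal{F}^{-1}$, the tensor $\eta\otimes v=\sum_{j=1}^d v^j\,\eta\otimes e_j$ corresponds to the vector $\mathbf{p}\in\mathbb{R}^{Nd}$ with $p_i=\frac{1}{\sqrt{N}}\,v$ for all $i=1,\dots,N$. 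Absorbing the constant $\frac{1}{\sqrt{N}}$ into $v$, we obtain exactly the manifold $\{\mathbf{p}\in\mathbb{R}^{Nd}:p_1=\cdots=p_N=v,\ v\in\mathbb{R}^d\}$, which is $d$-dimensional.

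No step presents a real obstacle since the heavy lifting has already been done in Lemma \ref{v2} and in the tensor identification \eqref{defIs}--\eqref{v4}; the only place to be careful is the bookkeeping between the tensor representation and the original $\mathbb{R}^{Nd}$ coordinates when describing the eigenspace. It is also worth verifying directly from \eqref{sA=1} (by acting with $A(\mathbf{q})$ on any configuration $\mathbf{p}$ with $p_i=v$ for all $i$) that such $\mathbf{p}$ are indeed fixed by $A(\mathbf{q})$, which provides an independent check on the identification of the eigenspace.
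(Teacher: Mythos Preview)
Your proposal is correct and follows exactly the route the paper takes: the paper's proof is the single sentence ``It is an immediate consequence of (\ref{v4}) and Lemma \ref{v2},'' and you have simply spelled out the details of that consequence, including the explicit identification of the $1$-eigenspace via the tensor isomorphism.
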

\begin{proof}
It is an immediate consequence of (\ref{v4}) and Lemma \ref{v2}.
\end{proof}
 We have finally the following result.

\begin{thm}
\label{v5} For any $\mathbf{q}\in{\mathbb{R}}^{Nd},$ such that $\tilde
{A}(\mathbf{q})$ is irreducible, let $C(\mathbf{q})$ be defined in
(\ref{d.12c}). We have that $0\in\Sigma(C(\mathbf{q})).$ The $(N+1)d$
dimensional manifold ${\mathcal{I}}$ defined in (\ref{v8}) is the eigenspace
associated to the eigenvalue $0.$ All the other eigenvalues of $C(\mathbf{q})$
have real part strictly negative.
\end{thm}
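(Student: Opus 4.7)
The plan is to reduce the spectral analysis of $C(\mathbf{q})$ to that of $L(\mathbf{q})$ via the factorization~(\ref{sa1}) and then to use Lemma~\ref{v3} together with the Perron--Frobenius theorem to locate the non-zero eigenvalues and to determine the Jordan structure at~$0$. From~(\ref{sa1}) one has $\Sigma(C(\mathbf{q}))=\{0\}\cup\Sigma(L(\mathbf{q}))$. Lemma~\ref{v3} gives that $\Sigma(A(\mathbf{q}))$ consists of the eigenvalue~$1$---whose eigenspace is the $d$-dimensional diagonal $\{\mathbf{p}\in\mathbb{R}^{Nd}:p_i=v,\ v\in\mathbb{R}^d\}$---together with eigenvalues $\lambda$ of modulus strictly less than~$1$. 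Translating by $-\mathbb{I}_{Nd}$ shows that every non-zero element of $\Sigma(L(\mathbf{q}))$ has the form $\lambda-1$ with $\mathrm{Re}(\lambda-1)<0$, so every eigenvalue of $C(\mathbf{q})$ other than $0$ has strictly negative real part.

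To identify the manifold $\mathcal{I}$ with the invariant subspace attached to the eigenvalue~$0$, I would first observe that the ordinary eigenspace $\ker C(\mathbf{q})$ equals $\mathbb{R}^{Nd}\times\{0\}$ and is only $Nd$-dimensional, so the statement must be interpreted as asserting that $\mathcal{I}$ coincides with the \emph{generalized} eigenspace at~$0$, whose dimension equals the algebraic multiplicity of $0$. A direct induction using the block structure of $C(\mathbf{q})$ gives
\begin{equation*}
C(\mathbf{q})^{k}\begin{pmatrix}\mathbf{x}\\ \mathbf{y}\end{pmatrix}=\begin{pmatrix}L(\mathbf{q})^{k-1}\mathbf{y}\\ L(\mathbf{q})^{k}\mathbf{y}\end{pmatrix},\qquad k\geq 1,
\end{equation*}
so $(\mathbf{x},\mathbf{y})\in\ker C(\mathbf{q})^{k}$ if and only if $L(\mathbf{q})^{k-1}\mathbf{y}=0$.

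The remaining and main (if mild) step is to show that the eigenvalue $0$ of $L(\mathbf{q})$ is semisimple, i.e., $\ker L(\mathbf{q})^{m}=\ker L(\mathbf{q})$ for every $m\geq 1$; Lemma~\ref{v2} records only the geometric simplicity of the Perron eigenvalue $1$ of $\tilde{A}(\mathbf{q})$, while here one needs algebraic simplicity. This is precisely the classical Perron--Frobenius statement for irreducible non-negative matrices: the Perron root is an algebraically simple eigenvalue. Via the tensor identity $A(\mathbf{q})=\tilde{A}(\mathbf{q})\otimes\mathbb{I}_{d}$, the eigenvalue $1$ of $A(\mathbf{q})$ has matching algebraic and geometric multiplicity $d$, whence $0$ is semisimple for $L(\mathbf{q})$ with $d$-dimensional kernel equal to the diagonal. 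Combining with the previous display, $\ker C(\mathbf{q})^{k}=\mathbb{R}^{Nd}\times\{\mathbf{y}\in\mathbb{R}^{Nd}:y_i=v,\ v\in\mathbb{R}^{d}\}=\mathcal{I}$ for every $k\geq 2$, of dimension $Nd+d=(N+1)d$, which matches the algebraic multiplicity of $0$ read off from~(\ref{sa1}) and thereby closes the spectral description.
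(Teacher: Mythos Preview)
Your argument is correct and follows essentially the same route as the paper's proof: both reduce via (\ref{sa1}) and Lemma~\ref{v3} to the spectrum of $L(\mathbf{q})$, both compute the generalized kernel of $C(\mathbf{q})$ through its block structure (the paper states directly that $\ker C(\mathbf{q})^{2}=\mathcal{I}$, which is exactly your formula at $k=2$), and both rely on the Perron--Frobenius algebraic simplicity of the eigenvalue~$1$ of $\tilde{A}(\mathbf{q})$ to match the dimension $(N+1)d$ with the algebraic multiplicity of~$0$. Your write-up is more explicit on two points the paper leaves terse: that ``eigenspace'' here must be read as \emph{generalized} eigenspace (since $\ker C(\mathbf{q})$ is only $Nd$-dimensional), and that the claim ``the algebraic multiplicity of $0$ is $Nd+d$'' already uses the algebraic---not merely geometric---simplicity of the Perron root.
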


\begin{proof}
From (\ref{sa1}) and Lemma \ref{v3} we deduce that $0\in\Sigma(C(\mathbf{q}))
$ and all other eigenvalues have real part strictly negative. It is immediate
to see that the algebraic multiplicity of $0$ is $Nd+d.$ The $\left(
N+1\right)  d$-dimensional manifold ${\mathcal{I}}$ defined in (\ref{v8}) is
the associated eigenspace. Namely, if $w\in{\mathcal{I}}$ then $C\left(
\mathbf{q}\right)  w\in{\mathcal{I}}.$ From this one deduces that
${\mathcal{I}}$ is an eigenspace for the matrix $C(\mathbf{q}).$ Moreover,
since the kernel of $C^{2}\left(  \mathbf{q}\right)  $ is ${\mathcal{I}},$ we
get that ${\mathcal{I}}$ is the eigenspace associated to the eigenvalue $0.$
\end{proof}

We denote by $\alpha(\mathbf{q})$ the spectral gap of the matrix
$C(\mathbf{q}),$ that is
\begin{equation}
\alpha(\mathbf{q}):=\min\left\{  \left\vert \text{Re}(\lambda(\mathbf{q}%
))\right\vert :\lambda(\mathbf{q})\in\Sigma(C(\mathbf{q})),\ \text{Re}%
(\lambda(\mathbf{q}))<0\right\}  \ .
\end{equation}
Let $\mathbf{q}\in{\mathbb{R}}^{Nd}$ such that $\tilde{A}(\mathbf{q})$ is
irreducible. By Theorem \ref{v5}, ${\mathcal{I}}$ is the eigenspace associated
to the $0$ eigenvalue of $C(\mathbf{q})$ for any $\mathbf{q}.$ We can
therefore decompose ${\mathbb{R}}^{2Nd}$ as follows:
\begin{equation}
{\mathbb{R}}^{2Nd}=W(\mathbf{q})\oplus{\mathcal{I}}%
\end{equation}
in such a way that $W(\mathbf{q})$ and ${\mathcal{I}}$ are eigenspaces of
$C(\mathbf{q})$ and denote by $\Pi(\mathbf{q})$ the projection operator
\begin{equation}
\Pi(\mathbf{q}):{\mathbb{R}}^{2Nd}\rightarrow W(\mathbf{q})\ .
\end{equation}

\subsubsection{Asymptotic Analysis}

Let $w^{0}=(\mathbf{q}^{0},\mathbf{p}^{0})\in{\mathcal{I}}$ be such that
$\tilde{A}(\mathbf{q}^{0})$ is irreducible and let us set, for any $r>0$ and
$\epsilon>0,$
\begin{equation}
\mathcal{\tilde{B}}(r,\epsilon,w^{0}):=\{w=(\mathbf{q},\mathbf{p}%
)\in{\mathbb{R}}^{2Nd}:|\mathbf{q}-\mathbf{q}^{0}|\leq r\ ;\ |\mathbf{p}%
-\mathbf{p}^{0}|\leq\epsilon\}\ . \label{mar3}%
\end{equation}
Denote by $r_{0}$ the biggest value of $r$ such that, for any $w=(\mathbf{q}%
,\mathbf{p})\in\mathcal{\tilde{B}}(r_{0},\epsilon,w^{0}),\tilde{A}%
(\mathbf{q})$ is still irreducible and $\alpha(\mathbf{q})\geq\frac{1}%
{2}\alpha(\mathbf{q}^{0})$. We set
\begin{equation}
\mathcal{B}(r_{0},\epsilon,w^{0}):=\left\{  w=(\mathbf{q},\mathbf{p}%
)\in\mathcal{\tilde{B}}(r_{0},\epsilon,w^{0}):\alpha(\mathbf{q})\geq\frac
{1}{2}\alpha(\mathbf{q}^{0})\right\}  \ . \label{dom6}%
\end{equation}
The existence of $r_{0}$ is granted since, by assumption, $\tilde
{A}(\mathbf{q}^{0})$ is irreducible and $U$ is smooth. To apply the spectral
results obtained for $C(\mathbf{q})$ ($\mathbf{q}$ fixed) to the nonlinear
system \eqref {d.12a} we write
\begin{equation}
C(\mathbf{q}(t))=C(\mathbf{q}(0))+\Gamma(\mathbf{q}(t))\ ,
\end{equation}
where
\begin{equation}
\Gamma(\mathbf{q}(t)):=\left(
\begin{array}
[c]{cc}%
\mathbf{0}_{Nd} & \mathbf{0}_{Nd}\\
\mathbf{0}_{Nd} & B\left(  \mathbf{q}\left(  t\right)  \right)
\end{array}
\right)  \ , \label{defK}%
\end{equation}
and
\begin{equation}
\label{roma2}B(\mathbf{q}(t)):=A(\mathbf{q}(t))-A(\mathbf{q}(0))\ .
\end{equation}
Next we estimate the norm of $B(\mathbf{q}(t)).$

\begin{lem}
\label{gi1} Let $(\mathbf{q}(t),\mathbf{p}(t))$ be the solution of
(\ref{d.12a}) starting from the initial data $(\mathbf{q}^{0},\mathbf{p}%
^{0}).$ We have
\begin{equation}
\left\Vert B(\mathbf{q}(t))\right\Vert \leq2N\frac{\sup_{x\in\mathbb{R}^{d}%
}\left\vert \nabla U(x)\right\vert }{U \left(  0\right)  + (N-1)
\eta(\mathbf{q}(t), \mathbf{q}^{0}) }\sup_{i,k\in\{1,..,N\}}\left\vert
-(q_{i}^{0}-q_{k}^{0})+q_{i}(t)-q_{k}(t)\right\vert \ ,
\end{equation}
where $\eta(\mathbf{q}(t), \mathbf{q}^{0}) \ge0 $ is defined in \eqref {roma1}.
\end{lem}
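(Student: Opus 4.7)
The plan is to reduce the bound to a scalar matrix estimate and then exploit the algebra of ratios. Recall from \eqref{d.8a}-\eqref{v1} that $A(\mathbf{q})$ has block structure $A(\mathbf{q})=\tilde{A}(\mathbf{q})\otimes \mathbb{I}_d$, so by \eqref{defIs} and the multiplicativity of the operator norm on tensor products,
\[
\|B(\mathbf{q}(t))\|=\|\tilde{A}(\mathbf{q}(t))-\tilde{A}(\mathbf{q}^0)\|.
\]
I may thus work directly with the $N\times N$ scalar matrix whose entries are $b_{ij}(t):=a_{i,j}(\mathbf{q}(t))-a_{i,j}(\mathbf{q}^0)$. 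Denoting $D_t^i:=\sum_k U(q_i(t)-q_k(t))$, the key algebraic identity is
\[
a_{i,j}(\mathbf{q}(t))-a_{i,j}(\mathbf{q}^0)=\frac{U(q_i(t)-q_j(t))-U(q_i^0-q_j^0)}{D_t^i}-a_{i,j}(\mathbf{q}^0)\,\frac{D_t^i-D_0^i}{D_t^i}.
\]
This separates the numerator variation from the denominator variation and keeps one factor of $a_{i,j}(\mathbf{q}^0)$ whose row sum equals $1$ thanks to \eqref{sA=1}.

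Next I would apply the mean value theorem to $U$. Using the bound $\sup_x|\nabla U(x)|$ together with the identity $(q_i(t)-q_j(t))-(q_i^0-q_j^0)=(q_i(t)-q_i^0)-(q_j(t)-q_j^0)$, each numerator difference in the identity above is controlled by
\[
\sup|\nabla U|\cdot\sup_{i,k}|-(q_i^0-q_k^0)+q_i(t)-q_k(t)|=:\sup|\nabla U|\cdot \Delta(t),
\]
while the denominator difference $D_t^i-D_0^i$ is a sum of $N-1$ such terms (the $k=i$ summand vanishes), hence bounded by $(N-1)\sup|\nabla U|\cdot \Delta(t)$. For the lower bound on the denominator, the role of $\eta(\mathbf{q}(t),\mathbf{q}^0)$ in \eqref{roma1} is to secure $D_t^i\geq U(0)+(N-1)\eta(\mathbf{q}(t),\mathbf{q}^0)$ uniformly in $i$; combining this with the preceding estimates yields the entrywise bound needed.

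Summing the identity in $j$ and using $\sum_j a_{i,j}(\mathbf{q}^0)=1$ gives the row-sum (operator $\ell^\infty$) estimate
\[
\sum_{j=1}^N|b_{ij}(t)|\leq \frac{2(N-1)\sup|\nabla U|\,\Delta(t)}{U(0)+(N-1)\eta(\mathbf{q}(t),\mathbf{q}^0)}.
\]
The main technical point is to convert this into the Euclidean operator norm. I would use the standard inequality $\|\tilde B\|\leq\sqrt{\|\tilde B\|_1\|\tilde B\|_\infty}$ and check that the column sums admit a bound of the same shape: one applies the same identity while exploiting that $\sum_i a_{i,j}(\mathbf{q}^0)/D_t^i$ can be controlled via the lower bound on the denominators so as to recover a factor $N-1$ (rather than $N^2$) and thereby produce the claimed factor $2N$. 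This column-sum estimate is the step that requires the most care, since $\tilde A$ is only row-stochastic and not doubly stochastic; the argument relies essentially on the uniform lower bound on $D_t^i$ and on the symmetry of $U$ which renders $\tilde A$ reversible w.r.t.\ an explicit invariant measure (cf.\ the remark following Lemma~\ref{v2}). Once both one-dimensional norms are controlled by a common expression of order $N/(U(0)+(N-1)\eta)\cdot\sup|\nabla U|\cdot\Delta(t)$, the geometric mean yields the stated bound.
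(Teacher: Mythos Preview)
Your endpoint identity for $b_{ij}$ and the ensuing row-sum bound are correct and give essentially the same estimate as the paper, but the paper reaches that point by a different route: it writes $b_{i,j}=\int_0^1\frac{d}{ds}a_{i,j}\bigl((1-s)\mathbf{q}^0+s\mathbf{q}(t)\bigr)\,ds$, differentiates the quotient along the straight-line interpolation, and bounds the denominator $\sum_k U(x_{i,k}(s,t))$ from below by $U(0)+(N-1)\eta$. This is precisely why $\eta$ in \eqref{roma1} is defined as an infimum over $s\in[0,1]$; your endpoint decomposition only ever uses $D_t^i$, so you need $\eta$ merely at $s=1$, which is in fact a slight sharpening of the paper's argument.

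The genuine gap is in your last paragraph. The paper does not pass through $\sqrt{\|\tilde B\|_1\|\tilde B\|_\infty}$ at all: after obtaining the row-sum bound it simply writes $\|B(\mathbf{q}(t))\|\le\|\tilde B(\mathbf{q}(t))\|_\infty=\max_i\sum_j|b_{ij}|$ and stops. Your attempt to control the column sums via reversibility is not carried out, and in fact it does not deliver what you claim. Reversibility only gives $\sum_i a_{i,j}(\mathbf{q}^0)=\sum_i U(q_i^0-q_j^0)/D_0^i\le D_0^j/\min_i D_0^i$, and this ratio can be of order $N$ (take $\eta=0$, which is exactly the case used in the proof of Theorem~\ref{sta1a}). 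The second piece of your decomposition then contributes a column sum of order $N(N-1)\sup|\nabla U|\,\Delta(t)/U(0)$, so $\|\tilde B\|_1$ is only controlled by a constant times $N^2$, and the geometric mean yields $N^{3/2}$ rather than $N$. Either follow the paper and stop at the $\ell^\infty$ operator norm, or accept a looser power of $N$; the column-sum argument as sketched does not reproduce the factor $2N$ in the statement.
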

%Note that if $\eta(\mathbf{q}(t),\mathbf{q}^{0})>0$, the estimate is uniformly
%on $N$. 
We defer the proof of this result to the appendix. In the proof of
\eqref   {sta1a} we will use Lemma \ref{gi1}  taking 
$\eta(\mathbf{q}(t),\mathbf{q}^{0})=0$.

\vskip1.cm

\noindent\textbf{Proof of Theorem \eqref {sta1a}:} For any $s>0,$ we define
\begin{equation}
{\mathcal{\tilde{Q}}}(s,w^{0}):=\{w=(\mathbf{q},\mathbf{p})\in{\mathbb{R}%
}^{2Nd}:|[\mathbb{I}_{Nd}-\Omega]\left(  \mathbf{q}-\mathbf{q}^{0}\right)
|\leq s\}\ , \label{mar4}%
\end{equation}
where $\Omega$ is the operator defined in (\ref{dom3}). Denote by $s_{0}$ the
largest value of $s$ such that, for any $w=(\mathbf{q},\mathbf{p}%
)\in{\mathcal{\tilde{Q}}}(s_{0},w^{0}),\tilde{A}(\mathbf{q})$ is still
irreducible and $\alpha(\mathbf{q})\geq\frac{1}{4}\alpha(\mathbf{q}^{0})$.
Such a value $s_{0}$ exists since $\tilde{A}(\mathbf{q}^{0})$ is irreducible
and $U$ is smooth. Let us set
\begin{equation}
\mathcal{Q}(s_{0},w^{0}):=\left\{  w=(\mathbf{q},\mathbf{p})\in
{\mathcal{\tilde{Q}}}(s_{0},w^{0}):\alpha(\mathbf{q})\geq\frac{1}{4}%
\alpha(\mathbf{q}^{0})\right\}  \ . \label{dom5}%
\end{equation}
We have that
\begin{equation}
\mathcal{B}(r_{0},\epsilon,w^{0})\subset{\mathcal{Q}}(s_{0},w^{0}%
)\ ,\qquad\forall\epsilon>0\ .
\end{equation}
Namely we have that $s_{0}\geq r_{0}$ since requirement (\ref{dom6}) is
stronger than (\ref{dom5}) and
\begin{equation}
|[\mathbb{I}_{Nd}-\Omega](\mathbf{q}-\mathbf{q}^{0})|\leq|\mathbf{q}%
-\mathbf{q}^{0}|\leq r_{0}\ .
\end{equation}
Let $w(t,w^{1})=(\mathbf{q}(t,w^{1}),\mathbf{p}(t,w^{1}))$ be the solution of
system (\ref{d.12a}) starting from an initial datum $w^{1}\in\mathcal{B}%
(r_{0},\epsilon,w^{0})$ and let $t^{\ast}(w_{1})>0$ be the first exit time of
$w(t,w^{1})$ from ${\mathcal{Q}}(s_{0},w^{0}).$ If $w(t,w^{1})\in{\mathcal{Q}%
}(s_{0},w^{0})$ for all $t\geq0,$ then we set $t^{\ast}(w^{1})=\infty.$ Next
we analyze the solution for $t<t^{\ast}(w^{1})$ and we will show that
$t^{\ast}(w^{1})=\infty$ for any initial datum $w^{1}\in\mathcal{B}%
(r_{0},\epsilon,w^{0}),$ provided that $\epsilon$ in (\ref{mar3}) is suitably
chosen. Let us define
\begin{align}
\xi(t)  &  :=\Pi(\mathbf{q}(t,w^{1}))w(t,w^{1})\ ,\\
\chi(t)  &  :=\left(  \mathbb{I}_{2Nd}-\Pi(\mathbf{q}(t,w^{1})\right)
w(t,w^{1})\ ,\quad t<t^{\ast}(w^{1})\ .
\end{align}
By construction $\chi(t)\in{\mathcal{I}},\xi(t)\in W(\mathbf{q}(t,w^{1})).$ We
then have
\begin{align}
\frac{d}{dt}\xi(t)  &  =\left(  \frac{d}{dt}\Pi(\mathbf{q}(t))\right)
w(t,w^{1})+\Pi(\mathbf{q}(t))\frac{d}{dt}w(t,w^{1})\label{sa9}\\
&  =\left(  \frac{d}{dt}\Pi(\mathbf{q}(t))\right)  w(t,w^{1})+\Pi
(\mathbf{q}(t))C(\mathbf{q}(t))w(t,w^{1})\ .\nonumber
\end{align}
Taking into account that $w(t,w^{1})=\xi(t)+\chi(t)$ we get
\begin{equation}
\frac{d}{dt}\xi(t)=\left(  \frac{d}{dt}\Pi(\mathbf{q}(t))\right)
\xi(t)+\left(  \frac{d}{dt}\Pi(\mathbf{q}(t))\right)  \chi(t)+\Pi
(\mathbf{q}(t))C(\mathbf{q}(t))\xi(t)\ . \label{sa9a}%
\end{equation}
Since for any given $w\in{\mathcal{I}},$ by the definition $\Pi(\mathbf{q}%
(t)),$ we have $\frac{d}{dt}\Pi(\mathbf{q}(t))w=0$ and $C(\mathbf{q}(t))$ and
$\Pi(\mathbf{q}(t))$ commute, we obtain
\begin{equation}
\frac{d}{dt}\xi(t)=\left(  \frac{d}{dt}\Pi(\mathbf{q}(t))\right)
\xi(t)+C(\mathbf{q}(t))\xi(t)\ . \label{lun2}%
\end{equation}
Setting
\begin{equation}
C(\mathbf{q}(t))=C(\mathbf{q}(0))+\Gamma(\mathbf{q}(t))\ ,
\end{equation}
where $\Gamma(\mathbf{q}(t))$ is defined in \eqref {defK}, we get
\begin{equation}
\frac{d}{dt}\xi(t)=\left(  \frac{d}{dt}\Pi(\mathbf{q}(t))\right)
\xi(t)+C(\mathbf{q}(0))\xi(t)+\Gamma(\mathbf{q}(t))\xi(t)\ . \label{lun2a}%
\end{equation}
By the formula of variation of constants:
\begin{equation}
\xi(t)=e^{C(\mathbf{q}(0))t}\xi(0)+\int_{0}^{t}e^{C(\mathbf{q}(0))(t-s)}%
\left\{  \left(  \frac{d}{ds}\Pi(\mathbf{q}(s))\right)  \xi(s)+\Gamma
(\mathbf{q}(s))\xi(s)\right\}  ds. \label{lun2b}%
\end{equation}
Performing the exponential of the matrix $C(\mathbf{q}(0))$ one needs to take
into account that, because of the possible presence of Jordan blocks, powers
of $t$ might appear. We control such terms paying $e^{-\frac{1}{2}%
\alpha(\mathbf{q}(0))t}$ and multiplying the remaining exponential by a
constant $D(C(\mathbf{q}(0)))$ which depends only on $C(\mathbf{q}(0)).$ Since
$\mathbf{q}(0)\in\mathbf{B}_{r_{0}}\left(  \mathbf{q}^{0}\right)  ,$ which is
a compact set in ${\mathbb{R}}^{Nd},$ we denote by $D_{0}:=\sup_{\mathbf{q}%
\in\mathbf{B}_{r_{0}}\left(  \mathbf{q}^{0}\right)  }D(C(\mathbf{q})),$ which
depends only on $C(\mathbf{q}^{0})$ and $r_{0}.$ Therefore, we get
\begin{equation}
|\xi(t)|\leq D_{0}e^{-\frac{1}{2}\alpha(\mathbf{q}(0))t}|\xi(0)|+D_{0}\int
_{0}^{t}e^{-\frac{1}{2}\alpha(\mathbf{q}(0))(t-s)}\left\{  |\left(  \frac
{d}{ds}\Pi(\mathbf{q}(s))\right)  \xi(s)|+|\Gamma(\mathbf{q}(s))\xi
(s)|\right\}  ds\ . \label{lun2c}%
\end{equation}
Next we estimate $\left\Vert \frac{d}{dt}\Pi(\mathbf{q}(t))\right\Vert .$ Let
$\Pi(\mathbf{q}(t))=\left\{  \pi_{i,j}(\mathbf{q}(t)) \otimes \mathbb{I}_{d} \right\}  _{i,j=1,..,,N}%
,$ we then have
\begin{align}
\frac{d}{dt}\Pi(\mathbf{q}(t))  &  =\left\{  \nabla_{\mathbf{q}(t)}\pi
_{i,j}(\mathbf{q}(t))\cdot\mathbf{p}(t) \otimes \mathbb{I}_{d}\right\}_{i,j=1,..,,N}\\
&  =\left\{  \nabla_{\mathbf{q}(t)}\pi_{i,j}(\mathbf{q}(t))\cdot\left[
\mathbf{p}(t)-\Omega\mathbf{p}(t)\right] \otimes \mathbb{I}_{d} \right\}_{i,j=1,..,,N}\ .\nonumber
\end{align}
The last equality holds since, by \eqref {sAe1}, $\Pi(\mathbf{q}%
(t))=\Pi(\mathbf{q}(0)),\forall t\in{\mathbb{R}},$ when $\mathbf{q}(t)$ is the
evolution given by the flow on the invariant manifold, i.e. when
$\mathbf{p}(t)=\Omega\mathbf{p}(t).$ We get by Corollary \ref{sta0}
\begin{align}
\left\Vert \frac{d}{dt}\Pi(\mathbf{q}(t))\right\Vert  &  \leq\sup
_{\{\mathbf{q}\in\mathbb{R}^{Nd}\ :\ w=(\mathbf{q},\mathbf{p})\in{\mathcal{Q}%
}(s_{0},w^{0})\}}\sup_{i=1,..,N}\sum_{j=1}^{N}|\nabla_{{q}}\pi_{i,j}(\mathbf{q}%
)|\left\vert \mathbf{p}(t)-\Omega\mathbf{p}(t)\right\vert \label{mar1}\\
&  \leq D^{\prime}(s_{0})\epsilon\ ,\qquad\forall t\in\lbrack0,t^{\ast}%
(w^{1}))\ ,\nonumber
\end{align}
where $D^{\prime}(s_{0})>0.$ Furthermore, by (\ref{defK}) and Lemma \ref{gi1},
we have
\begin{equation}
\left\Vert \Gamma(\mathbf{q}(t))\right\Vert =\left\Vert B(\mathbf{q}%
(t))\right\Vert \leq2\frac{N}{U\left(  0\right)  }\sup_{x\in\mathbb{R}^{d}%
}\left\vert \nabla U(x)\right\vert \max_{1\leq i,k\leq N}\left\vert
-(q_{i}^{1}-q_{k}^{1})+q_{i}(t)-q_{k}(t)\right\vert
\end{equation}
and, by Theorem \ref{stav0}, for $i,k=1,..,N,\ $%
\begin{align}
\left\vert q_{i}(t)-q_{k}(t)-(q_{i}^{1}-q_{k}^{1})\right\vert  &  =\int
_{0}^{t}\left\vert p_{i}(s^{\prime})-p_{k}(s^{\prime})\right\vert ds^{\prime
}\\
&  =\int_{0}^{t}\left\vert p_{i}(s^{\prime})-p_{i}^{0}+p_{k}^{0}%
-p_{k}(s^{\prime})\right\vert ds^{\prime}\leq2\epsilon t,\nonumber
\end{align}
where $p_{i}^{0}=p_{k}^{0}$ since $(\mathbf{q}^{0},\mathbf{p}^{0}%
)\in{\mathcal{I}}.$ Thus, setting $D_{1}:=2\frac{\sup_{x\in\mathbb{R}^{d}%
}\left\vert \nabla U(x)\right\vert }{U\left(  0\right)  },\forall t\in
\lbrack0,t^{\ast}(w^{1}))$ we obtain
\begin{equation}
|\xi(t)|\leq D_{0}e^{-\frac{1}{2}\alpha(\mathbf{q}(0))t}|\xi(0)|+D_{0}%
\epsilon\int_{0}^{t}e^{-\frac{1}{2}\alpha(\mathbf{q}(0))(t-s)}\left\{  \left[
D^{\prime}\left(  s_{0}\right)  +2ND_{1}s\right]  |\xi(s)|\right\}  ds\ .
\label{lun3}%
\end{equation}
Given $K\geq\max\{D^{\prime}\left(  s_{0}\right)  ,2D_{1}\},$ take
$T\in\left(  0,t^{\ast}(w^{1})\right)  $. A suitable choice of $T$ will be
done later. Then,
\begin{equation}
|\xi(t)|\leq D_{0}e^{-\frac{1}{2}\alpha(\mathbf{q}(0))t}|\xi(0)|+\epsilon
D_{0}K\left\{  1+TN\right\}  \int_{0}^{t}e^{-\frac{1}{2}\alpha(\mathbf{q}%
(0))(t-s)}|\xi(s)|ds,\qquad\forall t\in\lbrack0,T]. \label{ven5}%
\end{equation}

By the Gronwall's inequality  we get
\begin{equation}
|\xi(t)|\leq D_{0}|\xi(0)|e^{-t\left[  \frac{1}{2}\alpha(\mathbf{q}%
(0))-\epsilon\delta\right]  }\leq D_{0}|\xi(0)|e^{-t\left[  \frac{1}{8}%
\alpha(\mathbf{q}^{0})-\epsilon\delta\right]  }\ ,\quad\forall t\in
\lbrack0,T]\ , \label{lun4}%
\end{equation}
where we made use of (\ref{dom5}) and set $\delta:=D_{0}K\left\{
1+NT\right\}  .$ Let us choose $\epsilon$ such that
\begin{equation}
\frac{1}{16}\alpha(\mathbf{q}^{0})\geq\epsilon D_{0}K\{1+NT\}\ . \label{lune1}%
\end{equation}
Then,
\begin{equation}
|\xi(t)|\leq D_{0}|\xi(0)|e^{-t\frac{1}{16}\alpha(\mathbf{q}^{0})}%
\ ,\quad\forall t\in\lbrack0,T]\ . \label{dom1}%
\end{equation}

Since
\begin{align}
\operatorname{dist}\left(  w(t,w^{1}),{\mathcal{I}}\right)   &  =\inf
_{\tilde{w}\in{\mathcal{I}}}|w(t,w^{1})-\tilde{w}|=\inf_{\tilde{w}%
\in{\mathcal{I}}}|\chi(t)+\xi(t)-\tilde{w}|\label{distI}\\
&  \leq|\xi(t)|+\inf_{\tilde{w}\in{\mathcal{I}}}|\chi(t)-\tilde{w}%
|=|\xi(t)|\ ,\nonumber
\end{align}
we have
\begin{equation}
\operatorname{dist}\left(  w(t,w^{1}),{\mathcal{I}}\right)  \leq D_{0}%
|\xi(0)|e^{-t\frac{1}{16}\alpha(\mathbf{q}^{0})}\ ,\quad\forall t\in
\lbrack0,T]\ .
\end{equation}
Then, recalling that $\operatorname{dist}\left(  w(t,w^{1}),{\mathcal{I}%
}\right)  =|\mathbf{p}(t)-\Omega\mathbf{p}(t)|,$ and, since for $w^{1}%
\in\mathcal{B}(r_{0},\epsilon,w^{0}),$%
\begin{equation}
|\xi(0)|\leq D(s_{0})\operatorname{dist}\left(  w^{1},{\mathcal{I}}\right)
\leq D(s_{0})\epsilon\ , \label{ven6}%
\end{equation}
we have
\begin{equation}
|\mathbf{p}(t)-\Omega\mathbf{p}(t)|\leq D_{0}D(s_{0})\epsilon e^{-\frac{1}%
{16}\alpha(\mathbf{q}^{0})t}\leq\epsilon(D_{0}D(s_{0})\vee1)e^{-\frac{1}%
{16}\alpha(\mathbf{q}^{0})t}\ ,\quad\forall t\in\lbrack0,T]\ . \label{lune3}%
\end{equation}
\ From this we get
\begin{align}
|[\mathbb{I}_{Nd}-\Omega]\left(  \mathbf{q}(t)-\mathbf{q}^{0}\right)  |  &
\leq|[\mathbb{I}_{Nd}-\Omega]\left(  \mathbf{q}(0)-\mathbf{q}^{0}\right)
|+\int_{0}^{t}|\left[  \mathbb{I}_{Nd}-\Omega\right]  \mathbf{p}(s)|ds\\
&  \leq|\mathbf{q}(0)-\mathbf{q}^{0}|+(D_{0}D(s_{0})\vee1)\frac{16}%
{\alpha(\mathbf{q}^{0})}\epsilon(1-e^{-\frac{1}{16}\alpha(\mathbf{q}^{0}%
)t})\nonumber\\
&  \leq|\mathbf{q}(0)-\mathbf{q}^{0}|+(D_{0}D(s_{0})\vee1))\frac{16}%
{\alpha(\mathbf{q}^{0})}\epsilon\ .\nonumber
\end{align}
Let us choose $\epsilon$ such that
\begin{equation}
r_{0}+(D_{0}D(s_{0})\vee1))\frac{16}{\alpha(\mathbf{q}^{0})}\epsilon\leq
\frac{1}{2}s_{0}\ , \label{lune2}%
\end{equation}
and denote this chosen value by $\tilde{\epsilon}_{1}.$ Now we first choose
$T$ such that
\begin{equation}
(D_{0}D(s_{0})\vee1)e^{-\frac{1}{16}\alpha(\mathbf{q}^{0})T}=\frac{1}{2}%
\wedge\frac{1}{2D_{0}} \label{lune5}%
\end{equation}
and denote this chosen value by $T_{0},$ then we choose $\tilde{\epsilon}_{2}$
in such a way that (\ref{lune1}) holds with $T$ replaced by $T_{0}.$ We then
set
\begin{equation}
\epsilon_{0}:=\min\left\{  \tilde{\epsilon}_{1},\tilde{\epsilon}_{2}\right\}
\ . \label{lune7}%
\end{equation}
Notice that, by (\ref{lune2}),
\begin{equation}
\tilde{\epsilon}_{1}\frac{16}{\alpha(\mathbf{q}^{0})}\leq r_{0}+(D_{0}%
D(s_{0})\vee1))\frac{16}{\alpha(\mathbf{q}^{0})}\epsilon\leq\frac{1}{2}s_{0}%
\end{equation}
so, since $\alpha\left(  \mathbf{q}^{0}\right)  <1,$%
\begin{equation}
\epsilon_{0}\leq\tilde{\epsilon}_{1}\leq\frac{1}{32}s_{0}\ .
\end{equation}
We remark that the choice of $T_{0}$ and $\epsilon_{0}$ depends on $w^{0}%
\in{\mathcal{I}}$. Therefore, at time $T_{0}$ we have
\begin{equation}
|[\mathbb{I}_{Nd}-\Omega]\left(  \mathbf{q}(T_{0})-\mathbf{q}^{0}\right)
|\leq\frac{1}{2}s_{0}%
\end{equation}
and
\begin{equation}
\operatorname{dist}\left(  w(T_{0},w^{1}),{\mathcal{I}}\right)  =|\mathbf{p}%
(T_{0})-\Omega\mathbf{p}(T_{0})|\leq\frac{\epsilon_{0}}{2}\ .
\end{equation}
We can then repeat the previous argument for the solution of the system
(\ref{mo1}) starting at time $T_{0}$ from the initial datum $(\mathbf{q}%
(T_{0}),\mathbf{p}(T_{0})).$ We need to recall that $\alpha(\mathbf{q}%
(T_{0}))\geq\frac{1}{4}\alpha(\mathbf{q}^{0}).$ In a similar way we can show
that that for $t\in\lbrack T_{0},2T_{0}],$%
\begin{equation}
|\mathbf{p}(t)-\Omega\mathbf{p}(t)|\leq D_{0}|\xi(T_{0})|e^{-(t-T_{0})\frac
{1}{16}\alpha(\mathbf{q}^{0})}\ ,\quad\forall t\in\lbrack T_{0},2T_{0}]\ .
\label{lune3a}%
\end{equation}
Therefore, by (\ref{lune5}), we have
\begin{equation}
\operatorname{dist}\left(  w(2T_{0},w(T_{0})),{\mathcal{I}}\right)
=|\mathbf{p}(2T_{0})-\Omega\mathbf{p}(2T_{0})|\leq\frac{\epsilon_{0}}%
{2^{2}(D_{0}D(s_{0})\vee1)}\leq\frac{\epsilon_{0}}{2^{2}}\ ,
\end{equation}
and, by \eqref {lune3a},
\begin{align}
|[\mathbb{I}_{Nd}-\Omega]\left(  \mathbf{q}(t)-\mathbf{q}^{0}\right)  |  &
\leq|[\mathbb{I}_{Nd}-\Omega]\left(  \mathbf{q}(T_{0})-\mathbf{q}^{0}\right)
|+\int_{T_{0}}^{t}|[\mathbb{I}_{Nd}-\Omega]\mathbf{p}(s)|ds\\
&  \leq|[\mathbb{I}_{Nd}-\Omega]\left(  \mathbf{q}(T_{0})-\mathbf{q}%
^{0}\right)  |+D_{0}|\xi(T_{0})|\int_{T_{0}}^{t}e^{-(s-T_{0})\frac{1}%
{16}\alpha(\mathbf{q}^{0})}ds\nonumber\\
&  \leq\frac{1}{2}s_{0}+\frac{1}{4}s_{0}\ ,\nonumber
\end{align}
the last inequality being a consequence of \eqref {lune2} and (\ref{lune5}).
Thus, at time $T_{1}=2T_{0}$%
\begin{equation}
|[\mathbb{I}_{Nd}-\Omega]\left(  \mathbf{q}(T_{1})-\mathbf{q}^{0}\right)
|\leq\frac{1}{2}s_{0}+\frac{s_{0}}{4}\ .
\end{equation}
Hence, we have that $(\mathbf{q}(T_{1}),\mathbf{p}(T_{1}))\in{\mathcal{Q}%
}(s_{0},w^{0}).$ Iterating this procedure $m$ times we get%
\begin{equation}
\operatorname{dist}\left(  w(T_{m},w^{1}),{\mathcal{I}}\right)  =|\mathbf{p}%
(T_{m})-\Omega\mathbf{p}(T_{m})|\leq\frac{\epsilon_{0}}{2^{m+1}}\ ,
\end{equation}
and
\begin{equation}
|[\mathbb{I}_{Nd}-\Omega]\left(  \mathbf{q}(T_{m})-\mathbf{q}^{0}\right)
|\leq s_{0}\sum_{k=0}^{m}\frac{1}{2^{k+1}}\ .
\end{equation}
Since $\sum_{k\geq1}\frac{1}{2^{m}}=\frac{1}{2}$ we obtain the thesis of the
theorem. \qed

\section{Kinetic limit: Vlasov type equation}

We  study system \eqref{mo1} when the number of particles $N$
goes to infinity and   derive the kinetic equation for the density
$f_{t}(x,v)$ of particles at $x$ with velocity $v$ at time $t.$
The heuristic  argument goes as following.
 Let    $\mu^{N}_t$ be  the empirical measure, see \eqref {d.1},   at time $t$ associated to $w(t,w^0)$, solution of the system \eqref{mo1}, where  $w^0= (\mathbf{q}^{0},\mathbf{p}^{0})$,  $\Vert p_{j}^{0}\Vert\leq1,j=1,..,N$. By Lemma \ref{d.3}
and Remark \ref{lu.1} $\mu_{t}^{N}$ has support on ${\mathbb{R}}^{d}\times
B_{1}$, for all $t \ge 0$. Writing the second equation of (\ref{mo1}) in term of $\mu_{t}^{N}$ we
get
\begin{equation}   \begin {split} \label{d2a}
\frac{dp_{i}(t)}{dt}  &  =\frac{\int_{{\mathbb{R}}^{d}\times B_{1}}U(q_{i}(t)-y)\left(
u-p_{i}\left(  t\right)  \right)  \mu_{t}^{N}(dy,du)}{\int_{{\mathbb{R}}%
^{d}\times B_{1}}U(q_{i}(t)-y)\mu_{t}^{N}(dy,du)} 
\cr &  =:M(q_{i}(t),p_{i}(t),\mu_{t}^{N}).
\end{split} \end {equation}
Therefore, the evolution of $\mu_{t}^{N}$ is given by
\begin{equation}
\frac{\partial(\mu_{t}^{N}(g))}{\partial t}=\mu_{t}^{N}(v\cdot\nabla_{x}%
g)+\mu_{t}^{N}(M\left(  \cdot,\cdot,\mu_{t}^{N}\right)  \cdot\nabla_{v}g)\ ,
\label{mu0a}%
\end{equation}
where $g$ is a smooth test function. In the equation (\ref{mu0a}), $N$ is
fixed. To study the limit as $N\rightarrow\infty$ we assume that at $t=0$ there exists $\mu_{0} \in \MM$ such that 
\begin{equation}
\mu_{0}^{N} \overset{w}{\Longrightarrow}\mu_{0}.   \label{mu0}%
\end{equation}
 We want to show that if (\ref{mu0}) holds at time $t=0,$ then
\begin{equation}
\mu_{t}^{N}\overset{w}{\Longrightarrow}\mu_{t}, \label{mut}%
\end{equation}
where $\mu_{t}$ is the measure solution of the following equation
\begin{equation}
\frac{\partial(\mu_{t}(g))}{\partial t}=\mu_{t}(v\cdot\nabla_{x}g)+\mu
_{t}(M\left(  \cdot,\cdot,\mu_{t}\right)  \cdot\nabla_{v}g)\ , \label{b1}%
\end{equation}
which is the  formal limit of (\ref{mu0a}).   To prove  it rigorously one needs  to have $M\left(  \cdot,\cdot, \nu\right)$  well defined  and Lipschitz continuous in $(x,v)$  for all $\nu \in \MM$.   But, already at $N$ finite,   the denominator of \eqref {d2a}  is  equal to zero   when the supports of $U$ and  $\mu_{t}^{N}$ are disjoint.
To overcome these problems we consider two classes
of interaction $U.$ The first one is the class of interactions in Definition \ref {no1}.  In this case we define
$M(x,v,\nu)$ as in \eqref  {s1}. The second one is the class of smooth
interactions $U$ with compact support. In this case  we fix $\epsilon>0$ and 
define $M_{\epsilon}(x,v,\nu)$ as in \eqref {s1e},  the $\e-$ regularized system. Hence, in the case $U$ has
compact support, we modify the interaction term in such a way that when
$\int_{X\times B_{1}}U(x-y)\nu(dy,du)=0$ then $M_{\epsilon}(x,v,\nu)=0,$ when
$\int_{X\times B_{1}}U(x-y)\nu(dy,du)>\epsilon$ then $M_{\epsilon}%
(x,v,\nu)=M(x,v,\nu)+O(\epsilon),$ when $\epsilon>\int_{X\times B_{1}%
}U(x-y)\nu(dy,du)>0$ then $M_{\epsilon}(x,v,\nu)$ is a large perturbation of
$M(x,v,\nu).$  
It is easy to see that for any measure $\nu$ on $X \times B_{1}$ we have
\begin{equation}
\sup_{(x,v)\in X \times B_{1}}\left\vert M(x,v,\nu) \right\vert \leq
2\ ,\qquad\sup_{(x,v)\in X \times B_{1}}\left\vert M_{\epsilon}(x,v,\nu
)\right\vert \leq2\ . \label{feb1}%
\end{equation}
The Lipschitz continuity of $M(\cdot,\cdot,\nu)$ with respect to $v$ follows
from the linearity of $M(\cdot,\cdot,\nu)$ as a function of $v.$ The Lipschitz
continuity of $M(\cdot,\cdot,\nu)$ with respect to $x$ does not hold in
general even if one takes smooth interactions $U$,     due to  the
 presence of the denominator in $M(\cdot,v,\nu).$  In  Lemma \eqref {le2}, Lemma \ref {le3} and Lemma \ref {le5} we show for three different type of interactions, respectively  for $U$ defined  trough $\tilde U$ as in \eqref {ginevra1} and  \eqref {ginevra2} and for the  $\e-$ regularised system,
that  $M(\cdot,v,\nu)$  is Lipschiz for all $v$ and $\nu$.
We denote by
\begin {equation}   \label {gi4}  A(\cdot,\nu) := \left(  \frac{\int_{{\mathcal{T}}_{D}\times B_{1}}U(x-y)u\nu
(dy,du)}{\int_{{\mathcal{T}}_{D}\times B_{1}}U(x-y)\nu(dy,du)}\right).
\end {equation}

\nada {The denominator of the
derivative in $x^{i},i=1,..,d,$ of any component of the vector $A(\cdot,\nu)$
might be very small while the numerator, because of the presence of the
derivative of $U,$ might be not of the same order. Hence, one can certainly
control the gradient of $A(\cdot,\nu)$ if the gradient of $U$ is of the same
order of $U.$ We have then the following lemma.}

\begin{lem}
\label{le2}  Let $U$ be the   interaction defined on $\mathcal{T}_{D}$ through
the periodization of $\tilde{U}$ as  defined in \eqref {ginevra1}.
 For $\nu\in{\mathcal{M}}$ and $A(\cdot,\nu)$ as  in \eqref {gi4}  we have 
\begin{equation}
\left\vert A^{i}(x,\nu)-A^{i}(z,\nu)\right\vert \leq L\left\vert
x-z\right\vert \ ,\qquad x,z\in \mathcal{T}_{D},\ i=1,..,d,\qquad L=2K\ .
%z\in \mathcal{T}_{D},
\end{equation}
\end{lem}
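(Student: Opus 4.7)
The plan is to show that $A^i(\cdot,\nu)$ has uniformly bounded gradient in $x$, with $\sup_x|\nabla_x A^i(x,\nu)|\le 2K$ independent of $\nu$, and then obtain the Lipschitz bound by integrating along a segment from $z$ to $x$.

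First I would note that the hypothesis $\sup_{x}|\nabla\log\tilde U(x)|\le K$ is inherited by the periodization. Indeed $|\nabla\tilde U(x)|\le K\tilde U(x)$ pointwise, so
\begin{equation*}
|\nabla U(x)|=\Bigl|\sum_{n\in\mathbb{Z}^d}\nabla\tilde U(x+nD)\Bigr|\le K\sum_{n\in\mathbb{Z}^d}\tilde U(x+nD)=K\,U(x),
\end{equation*}
hence $|\nabla\log U(x)|\le K$ on $\mathcal{T}_D$ as well. This is the key pointwise estimate, and everything else is routine calculus.

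Next I would write $A^i(x,\nu)=N^i(x,\nu)/D(x,\nu)$ with
\begin{equation*}
N^i(x,\nu):=\int U(x-y)\,u^i\,\nu(dy,du),\qquad D(x,\nu):=\int U(x-y)\,\nu(dy,du)>0,
\end{equation*}
and differentiate under the integral sign. Using the bound $|\nabla U|\le K\,U$, together with $|u^i|\le |u|\le 1$ since $u\in B_1$, I get
\begin{equation*}
|\nabla_x N^i(x,\nu)|\le K\!\int U(x-y)|u^i|\,\nu(dy,du)\le K\,D(x,\nu),\qquad |\nabla_x D(x,\nu)|\le K\,D(x,\nu),
\end{equation*}
and also $|N^i(x,\nu)|\le D(x,\nu)$. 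Plugging into the quotient rule,
\begin{equation*}
|\nabla_x A^i(x,\nu)|=\frac{|D\,\nabla_x N^i-N^i\,\nabla_x D|}{D^2}\le \frac{D\cdot KD+D\cdot KD}{D^2}=2K.
\end{equation*}

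Finally, integrating this bound along the straight segment joining $z$ to $x$ (lifted to $\mathbb{R}^d$ and using the periodicity of $U$) yields $|A^i(x,\nu)-A^i(z,\nu)|\le 2K|x-z|$, which is exactly the claim with $L=2K$. The only mildly delicate point is the inheritance of $|\nabla\log U|\le K$ under periodization, since naively one might fear cancellations in the numerator versus summation in the denominator; the triangle inequality handles this in one line. There is no genuine obstacle.
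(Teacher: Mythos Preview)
Your proof is correct and follows essentially the same route as the paper: both bound $|\nabla_x A^i(x,\nu)|\le 2K$ via the quotient rule together with the pointwise inequality $|\nabla U|\le K\,U$, and then conclude the Lipschitz estimate. Your argument is in fact slightly more complete, since you explicitly verify that the periodized potential $U$ inherits the bound $|\nabla\log U|\le K$ from $\tilde U$, a step the paper uses without comment.
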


\begin{proof}
$\forall i=1,..,d,$ we have
\begin{equation}
\label{d.16}%
\begin{split}
(\nabla_{x}A^{i})(x,\nu)=  &  \frac{\int \nu(dy,du)\nu
(dy^{\prime},du^{\prime})u^{i}\left[  \nabla_{x}U(x-y)U(x-y^{\prime
})-U(x-y)\nabla_{x}U(x-y^{\prime})\right]  }{[(U\star\nu)(x)]^{2}}.
\end{split}
\end{equation}
Taking into account that $\left\vert u\right\vert \leq1$ we have
\begin{align}
\left\vert (\nabla_{x}A^{i})(x,\nu)\right\vert  &  \leq2\frac{\int \nu(dy,du)\nu(dy^{\prime},du^{\prime})\left\vert \nabla
_{x}U(x-y)\right\vert U(x-y^{\prime})}{[(U\star\nu)(x)]^{2}}\nonumber\\
&  =2\frac{\int \nu(dy,du)\nu(dy^{\prime},du^{\prime
})\left\vert \nabla_{x}U(x-y)\right\vert \frac{U(x-y)}{U(x-y)}U(x-y^{\prime}%
)}{[(U\star\nu)(x)]^{2}}\label{d.17}\\
&  \leq2\sup_{y}\frac{\left\vert \nabla_{x}U(x-y)\right\vert }{U(x-y)}%
\leq2K\ .\nonumber
\end{align}

\end{proof}

\vskip0.5cm

\begin{lem}
\label{le3} Let $U$ be the   interaction defined on $\mathcal{T}_{D}$ through
the periodization of $\tilde{U}$ as  defined in \eqref {ginevra2}.
For $\nu\in{\mathcal{M}}$ and $A(\cdot,\nu)$ as  in \eqref {gi4} we have
\begin{equation}
|A^{i}(x,\nu)-A^{i}(z,\nu)|\leq L\left\vert x-z\right\vert \ ,\qquad
x,z\in{\mathcal{T}}_{D},\quad i=1,..,d,\qquad L=\frac{D}{R^{2}}\ .
\end{equation}

\end{lem}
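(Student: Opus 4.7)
The plan is to mimic the strategy of Lemma \ref{le2}: differentiate the quotient defining $A^{i}(\cdot,\nu)$ and reduce the Lipschitz bound to a pointwise estimate on $|\nabla U|/U$. The obstruction is that the Gaussian $\tilde U$ does \emph{not} satisfy \eqref{ginevra1} on $\mathbb{R}^{d}$ (one has $|\nabla\log\tilde U(x)|=|x|/R^{2}$, which is unbounded), so the estimate must exploit the periodization on $\mathcal{T}_{D}$.

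First I would exploit the product structure of the isotropic Gaussian. Since $\tilde U(x)=\prod_{i=1}^{d}\tilde U_{1}(x^{i})$ with $\tilde U_{1}(t):=(2\pi R^{2})^{-1/2}e^{-t^{2}/(2R^{2})}$, and since $D\mathbb{Z}^{d}$ is a product lattice, the periodization itself factorises:
\begin{equation*}
U(x)=\prod_{i=1}^{d}U_{1}(x^{i}),\qquad U_{1}(t):=\sum_{n\in\mathbb{Z}}\tilde U_{1}(t+nD).
\end{equation*}
Consequently $\partial_{x^{i}}U/U=U_{1}'(x^{i})/U_{1}(x^{i})$, so the required multivariate log-gradient estimate reduces to a one-dimensional one.

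The main technical step is to prove the one-dimensional bound $|U_{1}'(t)|/U_{1}(t)\leq D/(2R^{2})$ for all $t\in\mathbb{R}$. Writing
\begin{equation*}
-R^{2}\frac{U_{1}'(t)}{U_{1}(t)}=\mu(t):=\frac{\sum_{n\in\mathbb{Z}}(t+nD)\,e^{-(t+nD)^{2}/(2R^{2})}}{\sum_{n\in\mathbb{Z}}e^{-(t+nD)^{2}/(2R^{2})}},
\end{equation*}
we recognise $\mu(t)$ as the mean of a random variable $X=t+ND$ on the lattice $t+D\mathbb{Z}$ weighted by the standard Gaussian density. Using the periodicity $\mu(t+D)=\mu(t)$ and the reflection symmetry $\mu(-t)=-\mu(t)$, one reduces to $t\in[0,D/2]$; there, pairing the indices $n$ and $-n$ for $n\geq 1$ and exploiting
\begin{equation*}
(t+nD)^{2}-(t-nD)^{2}=4tnD>0
\end{equation*}
forces the weighted average $\mu(t)$ to remain in $[-D/2,D/2]$. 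This is the delicate step: the naive triangle-inequality estimate $|\mu(t)|\leq\sum_{n}|t+nD|\,p_{n}(t)$ is far too crude, since the individual lattice translates $|t+nD|$ are unbounded, and one really needs the cancellation coming from the pairing.

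Finally, with the pointwise log-gradient bound secured, the proof of Lemma \ref{le2} transfers essentially verbatim. Differentiating $A^{i}(x,\nu)$ by the quotient rule gives the same identity \eqref{d.16}; applying \eqref{d.17} with $\sup_{y}|\nabla_{x}U(x-y)|/U(x-y)$ now controlled by the one-dimensional constant $D/(2R^{2})$ yields $|(\nabla_{x}A^{i})(x,\nu)|\leq D/R^{2}$, and the Lipschitz estimate with $L=D/R^{2}$ follows from the mean value theorem on the torus. The only nontrivial ingredient is the one-dimensional lattice-mean bound; once it is in place, the rest is a mechanical repetition of the previous argument.
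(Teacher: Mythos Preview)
Your approach is sound but takes a genuinely different route from the paper's. You reduce the problem to a pointwise bound on $|\nabla\log U|$ via the product factorisation of the isotropic Gaussian and a one-dimensional theta-function estimate, after which the argument of Lemma~\ref{le2} applies verbatim. The paper instead exploits the Gaussian identity $\nabla\tilde U(\xi)=-\xi\,\tilde U(\xi)/R^{2}$ \emph{inside} the quotient rule: differentiating $A^{i}$ along the segment $x_{0}=sx+(1-s)z$, the two terms in the numerator combine to produce a factor $(y'-y)/R^{2}$ in which the $x_{0}$-dependence has cancelled, and the torus bound $|y-y'|\leq D$ then gives $L=D/R^{2}$ directly, with no log-gradient estimate needed.

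The trade-off is this. The paper's cancellation trick is shorter and delivers the exact constant $D/R^{2}$, but the identity $\nabla U=-\xi\,U/R^{2}$ holds only for $\tilde U$ and not for its periodization, so the displayed formula for $C_{i}$ is somewhat informal about the lattice sums. Your route handles the periodization honestly and is modular (it literally places the periodized Gaussian under the hypothesis of Lemma~\ref{le2}), but it costs a factor $\sqrt{d}$ in the constant: the product structure yields $|\partial_{x^{i}}\log U|\leq D/(2R^{2})$ componentwise, hence only $|\nabla\log U|\leq\sqrt{d}\,D/(2R^{2})$, so your final Lipschitz constant is $\sqrt{d}\,D/R^{2}$ rather than $D/R^{2}$. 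One further small gap: the pairing $n\leftrightarrow-n$ that you describe gives only the upper bound $\mu(t)\leq t\leq D/2$ on $[0,D/2]$; for the matching lower bound $\mu(t)\geq -D/2$ you need the complementary pairing $n\leftrightarrow -n-1$, using $(t+nD)^{2}-(t-(n+1)D)^{2}=(2t-D)(2n+1)D\leq 0$.
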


\begin{proof}
Let us write
\begin{align}
\left\vert A^{i}(x,\mu)-A^{i}(z,\mu)\right\vert  &  =\left\vert \int_{0}%
^{1}ds\frac{d}{ds}A^{i}(sx+(1-s)z,\mu)\right\vert \label{lip1}\\
&  \leq\sup_{s\in\lbrack0,1]}\left\vert \frac{d}{ds}A^{i}(sx+(1-s)z,\mu
)\right\vert \ ,\qquad i=1,..,d,\nonumber
\end{align}
and set $x_{0}=sx+(1-s)z$. We obtain
\[
\frac{d}{ds}A^{i}(x_{0},\mu) =\frac{(x-z)}{R}C_{i}(x_{0},R,\mu)
\]
where
\begin{equation}
C_{i}(x_{0},R,\mu)=\frac{\int\frac{(y^{\prime}-y)}{R}U(x_{0}-y^{\prime
})U(x_{0}-y)\int\left(  u^{\prime}\right)^{i}  \mu(dy^{\prime},du^{\prime}%
)\mu(dy,du)}{\int U(x_{0}-y^{\prime})U(x_{0}-y)\int\mu_{t}(dy^{\prime
},du^{\prime})\mu(dy,du)} . \label{lip5}%
\end{equation}
Recalling that $|u|\leq1,$ we obtain
\begin{equation}
|C(x_{0},R,\mu)|\leq\frac{D}{R}\ , \label{lip6}%
\end{equation}
since in the torus $\left\vert y^{\prime}-y\right\vert \leq D$ and the result
follows by (\ref{lip1}).
\end{proof}
 
For any $v\in B_{1},\nu\in{\mathcal{M}},M_{\epsilon}(\cdot,v,\nu)$ is easily
seen to be Lipschitz continuous in $X$. In fact we have the following:

\begin{lem}
\label{le5} Let $\nu\in{\mathcal{M}},\epsilon>0,U(\cdot)$ a smooth interaction
whose support contained in a ball of radius $R$ such that $\sup_{x\in B_{R}%
}\left\vert \nabla U(x)\right\vert \leq1$ and $M_{\epsilon}(\cdot,\cdot,\nu)$
as in (\ref{s1e}). Then, for any $v\in B_{1},M_{\epsilon}(\cdot,v,\nu)$ is
Lipschitz continuous in $X$:%
\begin{equation}
\left\vert M_{\epsilon}^{i}(x,v,\nu)-M_{\epsilon}^{i}(y,v,\nu)\right\vert \leq
L\left\vert x-y\right\vert \ ,\qquad x,y\in X,i=1,..,d,\qquad L=\frac
{2}{\epsilon}\ .
\end{equation}

\end{lem}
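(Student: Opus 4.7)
The plan is to reduce the estimate to a bound on a difference of ratios and then to combine three elementary inequalities. Fix $v\in B_{1}$ and $\nu\in\MM$: since the additive term $-v$ cancels upon subtraction, it is enough to prove that for each $i=1,\dots,d$ the map
\[
X\ni x\longmapsto A_{\epsilon}^{i}(x,\nu):=\frac{N_{i}(x)}{D(x)}
\]
is Lipschitz in $x$ with constant $2/\epsilon$, where I set
\[
N_{i}(x):=\int_{X\times B_{1}}U(x-z)u^{i}\,\nu(dz,du),\qquad D(x):=\int_{X\times B_{1}}U(x-z)\,\nu(dz,du)+\epsilon.
\]

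First I would collect three elementary facts. From the hypothesis $\sup_{x}|\nabla U(x)|\le 1$, by differentiating under the integral sign (or, pointwise, by the mean value theorem), both $N_{i}$ and $D$ are Lipschitz in $x$ with constant at most $1$. From the very definition of $D$ one has the uniform lower bound $D(x)\ge\epsilon$ on $X$. Finally, $|u|\le 1$ and positivity of $U$ produce the key cancellation
\[
|N_{i}(x)|\le D(x)-\epsilon\le D(x),
\]
so in particular $|N_{i}|/D\le 1$ pointwise.

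Next I would apply the standard algebraic decomposition
\[
\frac{N_{i}(x)}{D(x)}-\frac{N_{i}(y)}{D(y)}=\frac{N_{i}(x)-N_{i}(y)}{D(x)}+\frac{N_{i}(y)}{D(y)}\cdot\frac{D(y)-D(x)}{D(x)}.
\]
The first term is bounded in absolute value by $|x-y|/\epsilon$ via the Lipschitz bound on $N_{i}$ and $D(x)\ge\epsilon$; the second term, using $|N_{i}(y)|/D(y)\le 1$, the Lipschitz bound on $D$, and again $D(x)\ge\epsilon$, is also bounded by $|x-y|/\epsilon$. Summing the two contributions gives the Lipschitz constant $L=2/\epsilon$ claimed in the statement.

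I do not anticipate any serious obstacle. The point that must not be overlooked is the inequality $|N_{i}|\le D-\epsilon$, which prevents the numerator from developing singularities even when the overlap between $U(x-\cdot)$ and the mass of $\nu$ becomes vanishingly small; the regularization parameter $\epsilon$ enters quantitatively only through the uniform lower bound on $D$, and the factor $2$ in $L$ is exactly the price paid for controlling the numerator and the denominator separately in the decomposition above.
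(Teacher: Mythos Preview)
Your argument is correct. The paper does not actually spell out a proof of this lemma---it only remarks that $M_{\epsilon}(\cdot,v,\nu)$ ``is easily seen to be Lipschitz continuous in $X$'' and states the result---but your decomposition is precisely the elementary computation the authors have in mind, and it matches the pattern used in the neighbouring Lemmas~\ref{le2} and~\ref{le3} (bound the variation of a ratio by controlling numerator and denominator separately, using the uniform lower bound on the denominator). The constant $2/\epsilon$ falls out exactly as you describe.
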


To prove the existence of the solution of (\ref{b1}) we prescribe a curve
$t\rightarrow\mu_{t}\in{\mathcal{M}}$ weakly continuous in $t$ and we consider
the following non-autonomous system of ordinary differential equations:
\begin{equation}
\left\{
\begin{array}
[c]{l}%
\frac{d}{dt}x(t)=v(t)\\
\frac{d}{dt}v(t)=M(x(t),v(t),\mu_{t})
\end{array}
\right.  \ . \label{od1}%
\end{equation}
Under the assumption that $M(\cdot,\cdot,\mu_{t})$ is Lipschitz continuous in
$X\times B_{1}$ there exists an unique global solution of (\ref{od1}) for any
given initial datum. The corresponding time dependent two parameters flow is
denoted by $T_{t,s}[\mu_{\cdot}].$ Under this time dependent flow any initial
measure evolves as
\begin{equation}
\nu_{t}=\nu_{0}\circ T_{0,t}[\mu_{\cdot}]\ , \label{p1}%
\end{equation}
where $\nu_{0}\circ T_{0,t}[\mu_{\cdot}]$ is the push forward of the measure
$\nu_{0}$ under the flow. For any test function $g$ we have that
\begin{equation}
\nu_{t}(g)=\nu_{0}(g\circ T_{t,0}[\mu_{\cdot}])\ , \label{p2}%
\end{equation}
where $g\circ T_{t,0}[\mu_{\cdot}]$ is the pull back under the flow of any test functions $g
$. By the existence and uniqueness of the solution of
(\ref{od1}) for any initial datum, the inverse flow $(T_{t,s}[\mu_{\cdot
}])^{-1}$ is well defined. The equation for the evolution of $\nu_{t},$ easily
derived, is
\begin{align}
\frac{\partial(\nu_{t}(g))}{\partial t}  &  =\frac{\partial(\nu_{0}(g\circ
T_{t,0}[\mu_{\cdot}]))}{\partial t}=\nu_{0}((v\nabla_{x}g)\circ T_{t,0}%
[\mu_{\cdot}])+\nu_{0}((M(x,v,\mu_{t})\nabla_{v}g)\circ T_{t,0}[\mu_{\cdot
}])\\
&  =\nu_{t}(v\cdot\nabla_{x}g)+\nu_{t}(M(x,v,\mu_{t})\cdot\nabla
_{v}g)\ .\nonumber
\end{align}
One immediately realizes that proving the existence and uniqueness of the
solution of (\ref{b1}) is equivalent to prove the existence of a fixed point
for the time dependent flow $\mu_{t}=\mu_{0}\circ T_{0,t}[\mu_{\cdot}].$ This
is the content of the next theorem.

\begin{thm}
\label{th1} Let $U$ be as in Lemma \ref{le2} or as in Lemma \ref{le3} and let
$M(\cdot,\cdot,\nu)$ be defined as in (\ref{s1}) for any $\nu\in{\mathcal{M}%
}({\mathcal{T}}_{D}\times B_{1})$. The equation (\ref{b1}) has an unique
solution in the space ${\mathcal{M}}({\mathcal{T}}_{D}\times B_{1}) $ if
$\mu_{0}\in{\mathcal{M}}({\mathcal{T}}_{D}\times B_{1}).$ Furthermore, take
two solutions of (\ref{b1}), $\mu_{t}$ starting at $\mu_{0}=\mu$ and $\nu_{t}$
starting at $\nu_{0}=\nu$ then in the bounded Lipschitz distance
\begin{equation}
d_{b{\mathcal{L}}}(\nu_{t},\mu_{t})\leq e^{ct}d_{b{\mathcal{L}}}(\mu,\nu)\ ,
\label{ine1}%
\end{equation}
where $c$ is a constant which depends on the Lipschitz constant of
$M(\cdot,\cdot,\nu)$ and on $\inf_{x\in{\mathcal{T}}_{D}}U(x)=:a>0.$

\end{thm}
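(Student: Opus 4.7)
The plan is to recast (\ref{b1}) as a fixed-point equation in the space of weakly continuous curves of measures, and apply Banach's contraction principle in a metric built from $d_{b\mathcal{L}}$. Given a curve $s\mapsto\mu_s$, Lemma~\ref{le2} (resp.~Lemma~\ref{le3}) guarantees that $(x,v)\mapsto M(x,v,\mu_s)$ is Lipschitz on $\mathcal{T}_D\times B_1$ uniformly in $s$, so the non-autonomous system (\ref{od1}) generates a unique global flow $T_{t,s}[\mu_\cdot]$. Define $\Phi(\mu_\cdot)_t := \mu_0\circ T_{0,t}[\mu_\cdot]$; by (\ref{p1})--(\ref{p2}), a fixed point of $\Phi$ is precisely a weak solution of (\ref{b1}).

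The key step is to show that $\nu\mapsto M(\cdot,\cdot,\nu)$ is Lipschitz with respect to $d_{b\mathcal{L}}$. Under either hypothesis of Definition~\ref{no1}, the periodization $U$ is everywhere strictly positive on $\mathcal{T}_D$, so $a:=\inf_{\mathcal{T}_D}U>0$ and $(U\star\nu)(x)\geq a$ for every probability measure $\nu$. Writing $M(x,v,\nu)=N(x,\nu)/D(x,\nu)-v$ with $N(x,\nu)=\int U(x-y)u\,\nu(dy,du)$ and $D(x,\nu)=(U\star\nu)(x)$, the identity
\[
\frac{N}{D}-\frac{N'}{D'}=\frac{N-N'}{D}+\frac{N'(D'-D)}{DD'},
\]
together with the boundedness and Lipschitz continuity of $U$ and the bound $|u|\leq 1$, yields
\[
\sup_{(x,v)\in\mathcal{T}_D\times B_1}|M(x,v,\mu)-M(x,v,\nu)|\leq C_1\,d_{b\mathcal{L}}(\mu,\nu),
\]
with $C_1$ depending on $\|U\|_\infty$, the Lipschitz constant of $U$, and $a$; here the integrands $U(x-\cdot)$ and $u\,U(x-\cdot)$, after suitable normalization, belong to the class $\mathcal{D}$ defining $d_{b\mathcal{L}}$.

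Armed with this, I would compare the flows of two curves $\mu_\cdot,\nu_\cdot$ starting at a common $(x_0,v_0)$. Subtracting the ODEs and applying Gronwall to the spatial Lipschitz bound for $M(\cdot,\cdot,\mu_t)$ combined with the measure-Lipschitz bound above, one obtains
\[
|T_{0,t}[\mu_\cdot](z)-T_{0,t}[\nu_\cdot](z)|\leq C_2\,t\,e^{C_3 t}\sup_{s\in[0,t]}d_{b\mathcal{L}}(\mu_s,\nu_s).
\]
For any $1$-Lipschitz $g\in\mathcal{D}$, integrating this against $\mu_0$ gives
\[
d_{b\mathcal{L}}(\Phi(\mu_\cdot)_t,\Phi(\nu_\cdot)_t)\leq C_2\,t\,e^{C_3 t}\sup_{s\leq t}d_{b\mathcal{L}}(\mu_s,\nu_s),
\]
so on $C([0,T_0],\mathcal{M})$ equipped with the uniform $d_{b\mathcal{L}}$-metric, $\Phi$ is a strict contraction provided $T_0$ is small enough. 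Banach's theorem produces a unique fixed point on $[0,T_0]$, and iterating over consecutive intervals yields a global solution. The stability estimate (\ref{ine1}) follows by applying the same Gronwall step directly to two genuine solutions $\mu_t,\nu_t$: the $\sup_{s\leq t}$ on the right-hand side is then absorbed into a standard integral Gronwall argument, producing the clean exponential factor $e^{ct}d_{b\mathcal{L}}(\mu,\nu)$.

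The main obstacle is the measure-Lipschitz bound on $M$: the nonlinear $\nu$-dependence breaks the linear framework used in \cite{S,Ne,Do}, and the estimate is available only because the denominator is bounded below by a positive constant. This is exactly the role of the strict positivity built into Definition~\ref{no1}, and it is also why the $\epsilon$-regularized system, whose denominator is bounded below by $\epsilon$, admits an entirely analogous treatment.
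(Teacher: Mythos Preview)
Your proposal is correct and follows the same Dobrushin--Neunzert--Spohn strategy as the paper: recast (\ref{b1}) as the fixed-point equation $\mu_t=\mu_0\circ T_{0,t}[\mu_\cdot]$, use the spatial Lipschitz bound on $M$ from Lemma~\ref{le2} or~\ref{le3}, and exploit the lower bound $a=\inf_{\mathcal{T}_D}U>0$ to get the measure-Lipschitz bound on $\nu\mapsto M(\cdot,\cdot,\nu)$. The paper's estimate (\ref{orli2})--(\ref{orli20}) is exactly your $N/D-N'/D'$ computation.

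The one methodological difference is in how the contraction is obtained. You take $T_0$ small so that $C_2T_0e^{C_3T_0}<1$ in the uniform metric and then iterate; the paper instead works on an arbitrary interval $[0,T]$ but equips $C_{\mathcal{M}}$ with the exponentially weighted metric $d_\alpha(\mu_\cdot,\nu_\cdot)=\sup_{t\in[0,T]}e^{-\alpha t}d_{b\mathcal{L}}(\mu_t,\nu_t)$, so that the integral inequality (\ref{feb6}) yields a contraction factor $c_0/[a(\alpha-L)]$ for $\alpha$ large. Both are standard and equivalent in strength; the weighted metric avoids the bookkeeping of gluing local solutions. For the stability estimate (\ref{ine1}) the paper keeps the integral form $\lambda(t)\leq L\int_0^t\lambda(s)\,ds+\frac{c_0}{a}\int_0^t d_{b\mathcal{L}}(\mu_s,\nu_s)\,ds$ before applying Gronwall, whereas you pass through a $\sup_{s\leq t}$ bound first; to recover the clean $e^{ct}$ you should likewise retain the integral form rather than the sup, but this is a cosmetic adjustment.
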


The proof is obtained adapting the method explained in    \cite[Chapter 5]{S}
to our context.   To
facilitate the reader we report the proof of Theorem \ref{th1} in the Appendix.

\begin{rem}
\label{lun28} Theorem \ref{th1} does not hold in ${\mathbb{R}}^{d}\times
B_{1}$ when $U$ satisfies Lemma \ref{le2}. Although in this case $U$ is
globally Lipschitz continuous in ${\mathbb{R}}^{d},$ we are not able to show
that $M(x,v,\cdot)$ when $x\in{\mathbb{R}}^{d},v\in B_{1}$ is Lipschitz
continuous with respect to $\nu\in{\mathcal{M}}$ in the $d_{b{\mathcal{L}}}$
metric. The theorem applies with obvious modification to the $\epsilon
-$regularized system where $M$ is replaced by $M_{\epsilon}$ defined in
(\ref{s1e}) and holds either for the system defined on ${\mathcal{T}}%
_{D}\times B_{1}$ or on ${\mathbb{R}}^{d}\times B_{1}.$ The constant $c$ in
the statement of Theorem \ref{th1} will then depend on $\epsilon,$ the lower
bound of the denominator of $M_{\epsilon}.$
\end{rem}
The proof  of Theorem \ref  {cor1a} is an immediate consequence of   Theorem \ref {th1}.  The validity of Theorem  \ref{cor1a}
for the $\epsilon-$regularized system, where the \emph{local mean velocity
increment} is $M_{\epsilon},$ is immediate as well.

\begin{thm}
\label{th2} Let $M(\cdot,\cdot,\mu)$ be as in (\ref{s1}) and assume that
$M(\cdot,\cdot,\mu)\in C^{1}(X\times B_{1})$ for $\mu\in{\mathcal{M}}.$ If
$\mu_{0}(dx,dv)=f_{0}(x,v)dxdv,$ then $\mu_{t}(dx,dv)=f_{t}(x,v)dxdv$ and
$f_{t}$ is the weak solution of (\ref{eq1a}). Furthermore, if $f_{0}\in
C^{k}(X\times B_{1}),k\geq1,$ and $M(\cdot,\cdot,\mu)\in C^{k}(X\times B_{1})
$ for $\mu\in{\mathcal{M}},$ then $f_{t}\in C^{k}(X\times B_{1}).$
\end{thm}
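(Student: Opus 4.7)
The plan is to work directly with the characteristic flow $T_{s,t}[\mu_\cdot]$ of the non-autonomous system \eqref{od1}, applied to the fixed point $\mu_t$ provided by Theorem \ref{th1}. Because $M(\cdot,\cdot,\mu)\in C^1(X\times B_1)$ for every $\mu\in\mathcal{M}$ and $t\mapsto\mu_t$ is weakly continuous (so that $t\mapsto M(x,v,\mu_t)$ is continuous for fixed $(x,v)$), the vector field $(v,M(x,v,\mu_t))$ is $C^1$ in $(x,v)$ uniformly on compacta and continuous in $t$; classical ODE theory then gives that $(x,v)\mapsto T_{t,0}[\mu_\cdot](x,v)$ is a $C^1$ diffeomorphism onto its image.

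First I would establish the existence of the density $f_t$ via a change-of-variables argument. Since $M(x,v,\mu_t)=A(x,\mu_t)-v$ with $A$ independent of $v$, the phase-space divergence of the vector field is
\[
\nabla_x\cdot v+\nabla_v\cdot\bigl(A(x,\mu_t)-v\bigr)=-d,
\]
so Liouville's formula gives $|\det\nabla_{(x,v)}T_{0,t}[\mu_\cdot]|=e^{-dt}$ and therefore $|\det\nabla T_{t,0}[\mu_\cdot]|=e^{dt}$. The push-forward identity $\mu_t(g)=\mu_0(g\circ T_{0,t}[\mu_\cdot])$, combined with $\mu_0=f_0\,dx\,dv$ and the substitution $(z,w)=T_{0,t}[\mu_\cdot](x,v)$, yields
\[
\mu_t(g)=\int g(z,w)\,e^{dt}\,f_0\bigl(T_{t,0}[\mu_\cdot](z,w)\bigr)\,dz\,dw,
\]
so $\mu_t=f_t\,dx\,dv$ with $f_t(x,v)=e^{dt}f_0\bigl(T_{t,0}[\mu_\cdot](x,v)\bigr)$.

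Next, substituting $\mu_t=f_t\,dx\,dv$ into the weak formulation \eqref{b1} gives, for every $g\in\mathcal{D}$,
\[
\frac{\partial}{\partial t}\int f_t\,g\,dx\,dv=\int f_t\,v\cdot\nabla_x g\,dx\,dv+\int f_t\,M(x,v,f_t)\cdot\nabla_v g\,dx\,dv,
\]
which is precisely the definition of weak solution of \eqref{eq1a}, since (formally, using $\nabla_x\cdot v=0$) an integration by parts transfers the derivatives onto $f_t$ and onto $Mf_t$.

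For the $C^k$ regularity, the representation $f_t=e^{dt}f_0\circ T_{t,0}[\mu_\cdot]$ reduces the claim to showing that the flow is $C^k$ in the space variables: by hypothesis the vector field $(v,M(x,v,\mu_t))$ is $C^k$ in $(x,v)$ for each $t$ and continuous in $t$, so the standard smooth-dependence theorem for ODE flows gives $T_{t,0}[\mu_\cdot]\in C^k$ in $(x,v)$. The subtle point to verify carefully, which I consider the main technical ingredient, is the continuity in $t$ of $M(x,v,\mu_t)$: this reduces to the weak continuity in $t$ of both the numerator $\int U(x-y)u\,\mu_t(dy,du)$ and the denominator $\int U(x-y)\mu_t(dy,du)$, and to a uniform lower bound on the latter. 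Both are guaranteed by the weak continuity of $t\mapsto\mu_t$ coming from Theorem \ref{th1} and by the choice of $U$ in Definition \ref{no1} (for which the denominator stays bounded below on the torus), closing the argument.
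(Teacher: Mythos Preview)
Your argument is correct and follows essentially the same route as the paper: compute the divergence of the vector field as $-d$, use Liouville's formula to obtain the Jacobian $e^{dt}$ for the inverse flow, deduce the explicit density formula $f_t=e^{dt}f_0\circ(\text{backward flow})$, and read off both the weak PDE and the $C^k$ regularity from smooth dependence of the flow. The only cosmetic difference is that the paper first carries out the computation for an arbitrary prescribed curve $t\mapsto\mu_t$ (a linearised version) and then invokes the fixed point from Theorem~\ref{th1}, whereas you work directly with the fixed-point curve; also, the paper differentiates the relation $e^{-dt}(f_t\circ T_{t,0})=f_0$ to obtain the transport equation in strong form, while you simply observe that substituting $\mu_t=f_t\,dx\,dv$ into \eqref{b1} is already the weak formulation of \eqref{eq1a}.
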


\begin{proof}
We start showing that for any given weakly continuous curve $t\rightarrow
\mu_{t}\in{\mathcal{M}},$ if $\nu_{0}(dx,dv)=q_{0}(x,v)dxdv,$ i.e. absolutely
continuous with respect to the Lebesgue measure, then $\nu_{t}(dx,dv)=q_{t}%
(x,v)dxdv,$ where
\begin{equation}
\frac{\partial}{\partial t}q_{t}(x,v)+\nabla_{x}q_{t}(x,v)\cdot v+\nabla
_{v}\cdot\left[  M(x,v,\mu_{t})q_{t}(x,v)\right]  =0\ , \label{ep4}%
\end{equation}
and, if $q_{0}\in C^{k}(X\times B_{1})$ and $M(\cdot,v,\mu)\in C^{k}(X\times
B_{1})$ for any $\mu\in{\mathcal{M}},$ then $q_{t}\in C^{k}(X\times B_{1}).$
Note that (\ref{ep4}) corresponds to a linearization of (\ref{eq1a}) since
$M(x,v,\mu_{t})$ does not depend on $q_{\cdot}$ once $\mu_{t}$ is given. In
Theorem \ref{th1} we proved that the fixed point equation $\mu_{t}=\mu
_{0}\circ T_{0,t}[\mu_{\cdot}]$ holds. Therefore, by this result and the
validity of (\ref{ep4}), one immediately obtains that $\mu_{t}$ has density
and the thesis of the theorem is proven. We are then left with the proof of
(\ref{ep4}). Let us set $w=(x,v)\in X\times B_{1}.$ For any test function $g$
we obtain
\begin{align}
\nu_{t}(g)  &  =\nu_{0}\circ T_{0,t}[\mu_{\cdot}](g)=\nu_{0}(g\circ
T_{t,0}[\mu_{\cdot}])\label{ja1}\\
&  =\int_{X\times B_{1}}\nu_{0}(dw)(g\circ T_{t,0}[\mu_{\cdot}])(w)=\int
_{X\times B_{1}}q_{0}(w)\left(  g\circ T_{t,0}[\mu_{\cdot}]\right)
(w)dw\nonumber\\
&  =\int_{X\times B_{1}}q_{0}(w)\circ(T_{t,0}[\mu_{\cdot}])^{-1}{\mathcal{J}%
}(w,\mu_{t})g(w)dw\nonumber
\end{align}
where ${\mathcal{J}}(w,\mu_{t})=Det\left[  \partial_{\cdot}(T_{t,0}%
)[\mu_{\cdot}])^{-1}(w)\right]  $ is the Jacobian of the flow $(T_{t,0}%
[\mu_{\cdot}])^{-1}$ computed in $w.$ Since the divergence of the vector field
$(v(s),M(x,v,\mu_{s}))$ is given by
\begin{equation}
\sum_{i=1}^{d}\left[  \frac{\partial v^{i}}{\partial x^{i}}+\frac
{M^{i}(x,v,\mu_{s})}{\partial v^{i}}\right]  =-d\ , \label{orli1}%
\end{equation}
by Liouville Theorem (see \cite{Arn1} or \cite{Arn2}) for any weakly
continuous curve $t\rightarrow\mu_{t}\in{\mathcal{M}},$ we have
\begin{equation}
Det\left[  \partial_{\cdot}(T_{t,0})[\mu_{\cdot}])\right]  (w)=e^{-dt}%
\ ,\qquad\forall w\in X\times B_{1}\ ,
\end{equation}
hence,
\begin{equation}
{\mathcal{J}}(w,\mu_{t})=e^{dt}\ ,\qquad\forall w\in X\times B_{1}\ .
\label{ja3}%
\end{equation}
Then, from (\ref{ja1}) and (\ref{ja3}), we obtain
\begin{equation}
\nu_{t}(g)=\int_{X\times B_{1}}e^{dt}\left(  q_{0}\circ(T_{t,0}[\mu_{\cdot
}])^{-1}\right)  (w)g(w)dw=\int_{X\times B_{1}}q_{t}(w)g(w)dw\ , \label{ja2}%
\end{equation}
where we denote by
\begin{equation}
q_{t}(w):=e^{dt}\left(  q_{0}\circ(T_{t,0}[\mu_{\cdot}])^{-1}\right)  (w)\ .
\label{ep1}%
\end{equation}
Notice that $q_{t}(w)$ is weakly continuous in time, since $\mu_{\cdot}$ is
weakly continuous. Furthermore, if, for $k\geq1,M(\cdot,\cdot,\mu)\in
C^{k}(X\times B_{1}),\mu\in{\mathcal{M}}$ and $q_{0}\in C^{k}(X\times B_{1}),
$ then $q_{t}\in C^{k}(X\times B_{1}).$ Writing $e^{-dt}(q_{t}\circ
T_{t,0}[\mu_{\cdot}])(w)=q_{0}(w)$ and differentiating with respect to $t$ we
get
\begin{gather}
\frac{\partial}{\partial t}\left(  e^{-dt}q_{t}\circ(T_{t,0}[\mu_{\cdot
}])(w)\right)  =-de^{-dt}(q_{t}\circ T_{t,0}[\mu_{\cdot}])(w)+\\
+e^{-dt}\frac{\partial}{\partial t}(q_{t}\circ T_{t,0}[\mu_{\cdot
}])(w)+e^{-dt}\nabla_{x}(q_{t}\circ T_{t,0}[\mu_{\cdot}])(w)\cdot v\circ
T_{t,0}[\mu_{\cdot}]+\nonumber\\
+e^{-dt}\nabla_{v}\left(  (q_{t}\circ T_{t,0}[\mu_{\cdot}])(w)\right)
\cdot(M(\cdot,\cdot,\mu_{t})\circ T_{t,0}[\mu_{\cdot}])(w)=0\ .\nonumber
\end{gather}
Multiplying both members of the previous identity for $e^{dt}$ and applying to
$(T_{t,0}[\mu_{\cdot}])^{-1}w$ we obtain
\begin{equation}
-dq_{t}(x,v)+\frac{\partial}{\partial t}q_{t}(x,v)+\nabla_{x}q_{t}(x,v)\cdot
v+\nabla_{v}q_{t}(x,v)\cdot M(x,v,\mu_{t})=0\ .
\end{equation}
Notice that this last equation is linear in $q_{\cdot}$ since $\mu_{\cdot} $
is given. Therefore, the equation for $f_{t}$ is
\begin{equation}
\frac{\partial}{\partial t}f_{t}(x,v)-df_{t}(x,v)+\nabla_{x}f_{t}(x,v)\cdot
v+M(x,v,f_{t})\cdot\nabla_{v}f_{t}(x,v)=0
\end{equation}
which corresponds to (\ref{eq1a}).
\end{proof}

\begin{rem}
\label{no2} Theorem \ref{th2} holds also for the $\epsilon-$regularized system.
In such a case  the\emph{\ local mean velocity increment} is $M_{\epsilon},$ see
(\ref{s1e}),  and  the phase space of the system can  be either
${\mathbb{R}}^{d}\times B_{1}$ or ${\mathcal{T}}_{D}\times B_{1}.$ The  
difference  from  the computations  in  Theorem \ref{th2} is that
\begin{equation}
\nabla\cdot M_{\epsilon}(x,v,\mu_{s})=-d\frac{(U\star\mu_{s})(x)}{(U\star
\mu_{s})(x)+\epsilon}:=-dh_{s}^{\epsilon}(x)\equiv-dh^{\epsilon}(\mu
_{s})(x)\ . \label{orli3}%
\end{equation}
For any weakly continuous curve $t\rightarrow\mu_{t}\in{\mathcal{M}}$
and any $w\in X\times B_{1},$ by Liouville Theorem, we have
\begin{equation}
\operatorname{Det}\left[  \partial_{\cdot}(T_{t,0}[\mu_{\cdot}])\right]
\left(  w\right)  =e^{-d\int_{0}^{t}ds(h_{s}^{\epsilon}\circ T_{s,0}%
[\mu_{\cdot}])(w)}\ ,
\end{equation}
therefore
\begin{equation}
{\mathcal{J}}(w,\mu_{t})=\operatorname{Det}\left[  \partial_{\cdot}%
[(T_{t,0}[\mu_{\cdot}])^{-1}]\right]  =e^{d\int_{0}^{t}ds(h_{s}^{\epsilon
}\circ T_{s,t}[\mu_{\cdot}])(w)}\ . \label{ja3a}%
\end{equation}
Then, from (\ref{ja1}) and (\ref{ja3a}), we obtain
\begin{align}
\nu_{t}(g)  &  =\int_{X\times B_{1}}e^{d\int_{0}^{t}ds(h_{s}^{\epsilon}\circ
T_{s,t}[\mu_{\cdot}])(w)}\left(  q_{0}\circ(T_{t,0}[\mu_{\cdot}])^{-1}\right)
(w)g(w)dw\label{ja2a}\\
&  =\int_{X\times B_{1}}q_{t}(w)g(w)dw\ ,\nonumber
\end{align}
where  
\begin{equation}
q_{t}(w):=e^{d\int_{0}^{t}ds(h_{s}^{\epsilon}\circ T_{s,t}[\mu_{\cdot}%
])(w)}\left(  q_{0}\circ(T_{t,0}[\mu_{\cdot}])^{-1}\right)  (w)\ .
\label{ep1a}%
\end{equation}
Notice that $q_{t}(w)$ is weakly continuous in time, since $\mu_{\cdot}$ is
weakly continuous. Furthermore, if $M_{\epsilon}(\cdot,\cdot,\mu)\in
C^{k}(X\times B_{1}),\mu\in{\mathcal{M}}$ and $q_{0}\in C^{k}(X\times B_{1}),
$ then $q_{t}\in C^{k}(X\times B_{1}).$ Writing
\begin{equation}
e^{-d\int_{0}^{t}ds(h_{s}^{\epsilon}\circ T_{s,0}[\mu_{\cdot}])(w)}(q_{t}\circ
T_{t,0}[\mu_{\cdot}])(w)=q_{0}(w)
\end{equation}
and differentiating with respect to $t$ we get
\begin{multline}
\frac{\partial}{\partial t}\left(  e^{-d\int_{0}^{t}ds(h_{s}^{\epsilon}\circ
T_{s,0}[\mu_{\cdot}])(w)}(q_{t}\circ(T_{t,0}[\mu_{\cdot}])(w)\right)
=-d(h_{t}^{\epsilon}\circ T_{t,0}[\mu_{\cdot}])(w)e^{-d\int_{0}^{t}%
ds(h_{s}^{\epsilon}\circ T_{s,0}[\mu_{\cdot}])(w)}(q_{t}\circ T_{t,0}%
[\mu_{\cdot}])(w)\nonumber\\
+e^{-d\int_{0}^{t}ds(h_{s}^{\epsilon}\circ(T_{s,0}[\mu_{\cdot}])(w)}%
\frac{\partial}{\partial t}(q_{t}\circ T_{t,0}[\mu_{\cdot}])(w)+e^{-d\int
_{0}^{t}ds(h_{s}^{\epsilon}\circ T_{s,0}[\mu_{\cdot}])(w)}\nabla_{x}%
q_{t}(w)\circ T_{t,0}[\mu_{\cdot}]\cdot v\circ T_{t,0}[\mu_{\cdot}]\nonumber\\
+e^{-d\int_{0}^{t}ds(h_{s}^{\epsilon}(\circ T_{s,0}[\mu_{\cdot}])(w)}%
\nabla_{v}\left(  (q_{t}\circ T_{t,0}[\mu_{\cdot}])(w)\right)  \cdot
(M_{\epsilon}(\cdot,\cdot,\mu_{t})\circ T_{t,0}[\mu_{\cdot}])(w)=0\ .
\end{multline}
Multiplying by $e^{d\int_{0}^{t}ds(h_{s}^{\epsilon}\circ(T_{s,0}[\mu_{\cdot
}])(w)}$ and applying to $(T_{t,0}[\mu_{\cdot}])^{-1}w$ we obtain
\begin{equation}
-dh(\mu_{t})(x)q_{t}(x,v)+\frac{\partial}{\partial t}q_{t}(x,v)+\nabla
_{x}q_{t}(x,v)\cdot v+\nabla_{v}q_{t}(x,v)\cdot M_{\epsilon}(x,v,\mu_{t})=0,
\end{equation}
which is linear in $q_{\cdot}$ since $\mu_{\cdot}$ is given. Therefore the
equation for $f_{t}$ is
\begin{equation}
\frac{\partial}{\partial t}f_{t}(x,v)-dh^{\epsilon}(f_{t})(x)f_{t}%
(x,v)+\nabla_{x}f_{t}(x,v)\cdot v+M_{\epsilon}(x,v,f_{t})\cdot\nabla_{v}%
f_{t}(x,v)=0\ ,
\end{equation}
which corresponds to (\ref{eq1a}) with $M$ replaced by $M_{\epsilon},$ taking
into account the definition of $h^{\epsilon}$ in (\ref{orli3}).
\end{rem}

\subsection{Qualitative behaviour of the solution of (\ref{b1})\label{Sb1}}

\begin{lem}
\label{le1} Let $M$ as in (\ref{s1}) and $t\rightarrow\mu_{t}\in{\mathcal{M}}$
be the solution of (\ref{b1}) with initial datum $\mu_{0}.$
We have%

\begin{equation}
\mu_{t}(x)=\mu_{0}(x)+\int_{0}^{t}\mu_{s}(v)ds\ , \label{ma.1}%
\end{equation}

\begin{equation}
\int_{X\times B_{1}}\left\vert v\right\vert ^{2}\mu_{t}(dx,dv)\leq
\int_{X\times B_{1}}\left\vert v\right\vert ^{2}\mu_{0}(dx,dv)\ . \label{g.1}%
\end{equation}

\end{lem}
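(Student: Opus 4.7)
The plan is to apply the weak equation (\ref{b1}) against two specific choices of test function and then integrate in time. For the identity (\ref{ma.1}), I would take $g(x,v) = x^i$, the $i$-th coordinate function, for each $i = 1, \ldots, d$. Since $\nabla_x g = e_i$ and $\nabla_v g = 0$, equation (\ref{b1}) immediately gives $\frac{d}{dt}\mu_t(x^i) = \mu_t(v^i)$, and integration from $0$ to $t$ yields (\ref{ma.1}) component-wise. Because $g = x^i$ is not literally in the bounded test class $\mathcal{D}$, the standard justification is to apply (\ref{b1}) to a truncated function $g_R(x,v) = x^i \chi_R(x)$ (with $\chi_R$ a smooth cutoff equal to $1$ on a ball of radius $R$) and pass to the limit $R \to \infty$, using the uniform velocity bound $|v| \leq 1$ from Lemma \ref{d.3}.

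For the inequality (\ref{g.1}), I would test (\ref{b1}) with $g(x,v) = |v|^2$. Then $\nabla_v g = 2v$ and $\nabla_x g = 0$, giving
\[
\frac{d}{dt}\mu_t(|v|^2) = 2 \int v \cdot (A(x,\mu_t) - v) \, \mu_t(dx,dv),
\]
where $A(x,\mu) := \int U(x-y)u\,\mu(dy,du)/\rho(x)$ and $\rho(x)=(U\star\mu)(x)$. The task is to show this derivative is $\leq 0$. A natural approach is to exploit the symmetry $U(x-y) = U(y-x)$ of the kernel via the involution $(x,v)\leftrightarrow(y,u)$ together with the algebraic identity $2v\cdot u = |v|^2 + |u|^2 - |v-u|^2$. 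Symmetrizing the double integral recasts the right-hand side as the sum of a signed ``skew'' term proportional to $U(x-y)(|u|^2-|v|^2)\bigl(\rho(x)^{-1}-\rho(y)^{-1}\bigr)$ and a manifestly non-positive ``dissipation'' term of the form $-U(x-y)|v-u|^2\bigl(\rho(x)^{-1}+\rho(y)^{-1}\bigr)$ integrated against $\mu_t\otimes\mu_t$.

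The main obstacle will be to dominate the signed skew term by the dissipation term. I would use Cauchy--Schwarz in the product measure $\mu_t \otimes \mu_t$, combined with the following inputs: since $|u|, |v| \leq 1$ by Lemma \ref{d.3}, one has $\bigl||u|^2-|v|^2\bigr| \leq 2|v-u|$, which provides one factor of $|v-u|$ that can be absorbed into the dissipation term; and for the interactions in Definition \ref{no1}, $\rho$ is Lipschitz continuous (from the smoothness of $U$) and bounded below by a positive constant, so $|\rho(x)^{-1}-\rho(y)^{-1}| \leq C|x-y|$ with $|x-y|$ controlled on the support of $U(x-y)$. Putting these estimates together should show that the skew term is dominated in absolute value by the dissipation term, hence $\frac{d}{dt}\mu_t(|v|^2) \leq 0$, and integration in time gives (\ref{g.1}).
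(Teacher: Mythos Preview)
For (\ref{ma.1}) your approach coincides with the paper's: test (\ref{b1}) with $g=x^{i}$, read off $\frac{d}{dt}\mu_{t}(x^{i})=\mu_{t}(v^{i})$, and integrate.

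For (\ref{g.1}) the paper does not symmetrize. After reaching $\frac{d}{dt}\mu_{t}(|v|^{2})=2\sum_{i}\mu_{t}(M^{i}v^{i})$ it simply rewrites each summand as
\[
\int_{(X\times B_{1})^{2}}\frac{U(x-y)\,v^{i}u^{i}}{(U\star\mu_{t})(x)}\,\mu_{t}(dx,dv)\mu_{t}(dy,du)\;-\;\int_{X\times B_{1}}(v^{i})^{2}\,\mu_{t}(dx,dv)
\]
and asserts this is $\leq 0$ in one line ``by Schwartz inequality'', with no skew/dissipation splitting. So your route is considerably more elaborate than the one the paper records.

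Your final domination step, however, does not close. After using $\bigl||u|^{2}-|v|^{2}\bigr|\leq 2|v-u|$ and $\bigl|\rho(x)^{-1}-\rho(y)^{-1}\bigr|\leq C$, the skew term is controlled by a quantity that is \emph{linear} in $|v-u|$, namely $C'\int\!\!\int U(x-y)\,|v-u|\,d(\mu_{t}\otimes\mu_{t})$, whereas the dissipation term is \emph{quadratic} in $|v-u|$. When the velocity differences are small but the spatial asymmetry $\rho(x)^{-1}-\rho(y)^{-1}$ is not, the first-order bound exceeds the second-order one and the inequality $|S|\leq|D|$ fails. A concrete check: for the empirical measure of three unit atoms at $q_{1}=0,\ q_{2}=1,\ q_{3}=-1$ with $U(0)=2,\ U(\pm1)=1,\ U(\pm2)$ negligibly small, and velocities $p_{1}=1,\ p_{2}=p_{3}=7/8$, one has $\rho_{1}=4,\ \rho_{2}=\rho_{3}=3$, and a direct computation gives
\[
\sum_{k}p_{k}\bigl(A(q_{k},\mu)-p_{k}\bigr)=-\tfrac{1}{16}+2\cdot\tfrac{7}{8}\cdot\tfrac{1}{24}=\tfrac{1}{96}>0,
\]
so no estimate of the type you propose can force $S+D\leq 0$ here. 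Your individual ingredients (the Lipschitz bound on $\rho^{-1}$, the bound on $|u|^{2}-|v|^{2}$) are correct; it is only their combination into a domination of $S$ by $D$ that breaks down.
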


\begin{proof}
By (\ref{b1}) we have
\begin{equation}
\frac{d}{dt}\int_{X\times B_{1}}x^{i}\mu_{t}(dx,dv)=\mu_{t}(v^{i})\ ,\qquad
i=1,..,d\ ,
\end{equation}
which implies (\ref{ma.1}). To obtain (\ref{g.1}), again from equation
(\ref{b1}) we get
\begin{align}
\frac{d}{dt}\int_{X\times B_{1}}\left\vert v\right\vert ^{2}\mu_{t}(dx,dv)  &
=\frac{d}{dt}\mu_{t}(\left\vert v\right\vert ^{2})\label{d.14}\\
&  =\mu_{t}(v\cdot\nabla_{x}\left\vert v\right\vert ^{2})+\mu_{t}%
(M(\cdot,\cdot,\mu_{t})\cdot\nabla_{v}\left\vert v\right\vert ^{2})\nonumber\\
&  =2\sum_{i=1}^{d}\mu_{t}(M^{i}(\cdot,\cdot,\mu_{t})v^{i})\leq0\ .\nonumber
\end{align}
Namely, for $i=1,..,d,$ when $M^{i}(\cdot,\cdot,\mu_{t})\neq0,$ we obtain
\begin{align}
&  \mu_{t}(M^{i}(\cdot,\cdot,\mu_{t})v^{i})=\int_{X\times B_{1}}\mu
_{t}(dx,dv)M^{i}(x,v,\mu_{t})v_{i}\label{d.15}\\
&  =\int_{X\times B_{1}}\mu_{t}(dx,dv)\left(  \frac{\int_{X\times B_{1}%
}U(x-y)\left(  v^{i}u^{i}-\left(  v^{i}\right)  ^{2}\right)  \mu_{t}%
(dy,du)}{\int_{X\times B_{1}}U(x-y)\mu_{t}(dy,du)}\right) \nonumber\\
&  =\int_{(X\times B_{1})^{2}}\mu_{t}(dx,dv)\mu_{t}(dy,du)\frac{U(x-y)v^{i}%
u^{i}}{\int_{X\times B_{1}}U(x-y)\mu_{t}(dy,du)}-\int_{X\times B_{1}}\mu
_{t}(dx,dv)\left(  v^{i}\right)  ^{2}\leq0\ ,\nonumber
\end{align}
by Schwartz inequality.
\end{proof}
The Jensen inequality and (\ref{g.1}) imply the boundedness of the mean velocity
$\mu_{t}(v).$
   
\nada{
\begin{lem}
\label{orli5} Take $M$ as in (\ref{s1}). The equation (\ref{eq1a}) is not time
reversible, i.e. invariant under simultaneous reflection $t\rightarrow-t $ and
$v\rightarrow-v.$
\end{lem}

\begin{proof}
Let us set $b_{t}(x,v):=f_{-t}(x,-v).$ We have that $\frac{\partial}{\partial
t}b_{t}(x,v)=-\frac{\partial}{\partial t}f_{-t}(x,-v)$ and $\frac
{\partial{b_{t}(x,v)}}{\partial v^{i}}=-\frac{\partial{f_{-t}(x,-v)}}{\partial
v^{i}}.$ Therefore, by (\ref{eq1a}),
\begin{equation}
\frac{\partial}{\partial t}b_{t}(x,v)=-\frac{\partial}{\partial t}%
f_{-t}(x,-v)=-v\cdot\nabla_{x}f_{-t}(x,-v)-df_{-t}(x,-v)+\sum_{i=1}^{d}%
M^{i}(x,-v,f_{-t})\frac{\partial f_{-t}(x,-v)}{\partial v^{i}}\
\end{equation}
and
\begin{align}
M(x,-v,f_{-t})  &  =\left(  \frac{\int_{X\times{\mathbb{R}}^{d}}U(x-y)\left(
u+v\right)  f_{-t}(y,u)dydu}{\int_{X\times B_{1}}U(x-y)f_{-t}(y,u)dydu}\right)
\\
&  =\left(  \frac{\int_{X\times B_{1}}U(x-y)\left(  -u+v\right)
b_{t}(y,u)dydu}{\int_{X\times B_{1}}U(x-y)b_{t}(y,u)dydu}\right)
=-M(x,v,b_{t})\ .\nonumber
\end{align}
Thus, the equation for $b_{t}(x,v)$ is
\[
\frac{\partial}{\partial t}b_{t}(x,v)+v\cdot\nabla_{x}b_{t}(x,v)+db_{t}%
(x,v)-\sum_{i=1}^{d}M^{i}(x,v,b_{t})\frac{\partial b_{t}(x,v)}{\partial v^{i}%
}=0
\]
which differs from equation (\ref{eq1a}).
\end{proof}

\begin{rem}
Lemma \ref{orli5} holds also for the $\epsilon-$regularized system. In such a case    $M$
is replaced by $M_{\epsilon}.$
\end{rem}
}

\vskip1cm

Let $f_{t}$ be the solution at time $t$ of the equation (\ref{eq1a}). We denote
by $H(f_{t})$ the Boltzmann-Vlasov entropy
\begin{equation}
H(f_{t}):=-\int_{X\times B_{1}}f_{t}(x,v)\ln(f_{t}(x,v))dxdv\ . \label{ent1}%
\end{equation}
 In the next lemma we show that  the  Boltzmann-Vlasov entropy
 $H(f_{t})$ is a decreasing function of time.
 Notice that   equation (\ref{eq1a}) is not time
reversible, i.e. invariant under simultaneous reflection $t\rightarrow-t $ and
$v\rightarrow-v.$ 
\begin{lem}
\label{entro1} Let $f_{\cdot}$ be the solution of (\ref{eq1a}) with $M$ chosen
as in (\ref{s1}), then
\begin{equation}
\frac{d}{dt}H(f_{t})=-d\ . \label{orli6}%
\end{equation}
Let $f_{\cdot}^{\epsilon}$ be the solution of (\ref{eq1a}) with $M$ replaced by
$M_{\epsilon}$ chosen as in (\ref{s1e}), then
\begin{equation}
\frac{d}{dt}H(f_{t}^{\epsilon})=-d\int_{X\times B_{1}}h_{t}^{\epsilon}%
(x)f_{t}^{\epsilon}(x,v)dxdv\ , \label{orli7}%
\end{equation}
where, as in (\ref{orli3}), $h_{t}^{\epsilon}=\frac{(U\star f_{t}^{\epsilon}%
)}{(U\star f_{t}^{\epsilon})+\epsilon}.$
\end{lem}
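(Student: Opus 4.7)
The cleanest route is to exploit the Lagrangian representation of $f_{t}$ established in the proofs of Theorem \ref{th2} and Remark \ref{no2}, rather than attacking the nonlinear PDE (\ref{eq1a}) directly. Setting $w=(x,v)$ and abbreviating $T_{t,0}:=T_{t,0}[f_{\cdot}]$, formula (\ref{ep1}) reads
\begin{equation*}
f_{t}(w)=e^{dt}\,f_{0}\!\left((T_{t,0})^{-1}w\right),
\end{equation*}
while the Jacobian of $T_{t,0}$, by (\ref{orli1}) and Liouville's theorem, equals $e^{-dt}$. Inserting into (\ref{ent1}) and changing variables $w=T_{t,0}(w_{0})$ gives
\begin{equation*}
H(f_{t}) = -\!\int e^{dt}f_{0}(w_{0})\bigl[dt+\ln f_{0}(w_{0})\bigr]e^{-dt}\,dw_{0}= H(f_{0})-dt\int f_{0}(w_{0})\,dw_{0}=H(f_{0})-dt,
\end{equation*}
since $\int f_{0}=1$. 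Differentiating in $t$ yields (\ref{orli6}).

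For the $\epsilon$-regularized system, (\ref{ep1a}) gives $f_{t}^{\epsilon}(w)=e^{\Phi_{t}(w)}f_{0}((T_{t,0})^{-1}w)$ with $\Phi_{t}(w):=d\int_{0}^{t}h_{s}^{\epsilon}(T_{s,t}w)\,ds$, while the Jacobian from (\ref{ja3a}) becomes $e^{-d\int_{0}^{t}h_{s}^{\epsilon}(T_{s,0}w_{0})ds}$. Using the flow composition $T_{s,t}\circ T_{t,0}=T_{s,0}$, the prefactor $\Phi_{t}(T_{t,0}w_{0})$ cancels exactly the exponential in the change-of-variable formula, leaving
\begin{equation*}
H(f_{t}^{\epsilon})=H(f_{0})-d\int_{0}^{t}\!\int f_{0}(w_{0})\,h_{s}^{\epsilon}(T_{s,0}w_{0})\,dw_{0}\,ds = H(f_{0})-d\int_{0}^{t}\!\int h_{s}^{\epsilon}(x)\,f_{s}^{\epsilon}(x,v)\,dx\,dv\,ds,
\end{equation*}
where the last equality rewrites $\int f_{0}(w_{0})\,g(T_{s,0}w_{0})\,dw_{0}=\mu_{s}(g)$ for $g=h_{s}^{\epsilon}$ via the push-forward identity (\ref{p2}). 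Differentiating in $t$ gives (\ref{orli7}).

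The only real bookkeeping issue is matching the exponential prefactor $e^{\Phi_{t}}$ against the Jacobian of $T_{t,0}$ in the $\epsilon$-case, since both are path integrals of $h^{\epsilon}$ but evaluated along the characteristic from different reference times; the compatibility relies crucially on $T_{s,t}\circ T_{t,0}=T_{s,0}$. As a sanity check one may instead differentiate $H(f_{t})$ directly from (\ref{eq1a}): the $x$-transport piece is $\int v\cdot\nabla_{x}(f_{t}\ln f_{t})\,dx\,dv=0$ (periodic or decay at infinity), while two successive integrations by parts in $v$ reduce $\int\nabla_{v}\!\cdot\!(Mf_{t})(1+\ln f_{t})\,dx\,dv$ to $\int(\nabla_{v}\!\cdot\! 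M)f_{t}\,dx\,dv=-d$ by (\ref{orli1}), and analogously to $-d\int h^{\epsilon}f_{t}^{\epsilon}\,dx\,dv$ by (\ref{orli3}). The boundary terms on $\partial B_{1}$ arising in the $v$-integration by parts vanish because Lemma \ref{d.3} confines the support of $f_{t}$ to the initial velocity ball $B_{r}$ with $r<1$, so that $f_{t}$ vanishes in a neighbourhood of $\partial B_{1}$.
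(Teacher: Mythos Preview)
Your proof is correct, and your primary route differs from the paper's. The paper argues by direct differentiation of $H(f_{t})$ using the PDE (\ref{eq1a}): it substitutes $\partial_{t}f_{t}$, integrates by parts once in $x$ and once in $v$, and invokes $\nabla_{v}\!\cdot M=-d$ (respectively $\nabla_{v}\!\cdot M_{\epsilon}=-dh_{t}^{\epsilon}$). Your main argument instead plugs the Lagrangian representation (\ref{ep1}) (resp.\ (\ref{ep1a})) into the definition of $H$ and changes variables along the characteristic flow, obtaining the stronger integrated identities $H(f_{t})=H(f_{0})-dt$ and $H(f_{t}^{\epsilon})=H(f_{0})-d\int_{0}^{t}\!\int h_{s}^{\epsilon}f_{s}^{\epsilon}\,dx\,dv\,ds$, from which (\ref{orli6}) and (\ref{orli7}) follow by differentiation. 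The flow-composition identity $T_{s,t}\circ T_{t,0}=T_{s,0}$ you use to cancel the exponential prefactor against the Jacobian in the $\epsilon$-case is exactly right. Your approach has the advantage of bypassing the $v$-boundary terms on $\partial B_{1}$ that arise in the paper's integration by parts (and which the paper does not address); your ``sanity check'' paragraph is essentially the paper's proof, and you are in fact more careful than the paper in flagging why those boundary contributions vanish---though note that the hypothesis ``support in $B_{r}$ with $r<1$'' is an extra assumption not stated in the lemma, so your Lagrangian argument is the more robust of your two derivations.
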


\begin{proof}
We start showing (\ref{orli6}). The proof of (\ref{orli7}) is similar and we
will only outline the differences.
\begin{align}
\frac{d}{dt}H(f_{t})  &  =-\int_{X\times B_{1}}\frac{\partial f_{t}}{\partial
t}(x,v)\left[  \left(  \ln f_{t}(x,v)\right)  +1\right]  dxdv\label{ent2}\\
&  =\int_{X\times B_{1}}\left(  \ln f_{t}(x,v)\right)  \left[  v\cdot
\nabla_{x}f_{t}(x,v)+\nabla_{v}\cdot M(x,v,t)f_{t}(x,v)\right]
dxdv\ .\nonumber
\end{align}
Integrating by part the last term in (\ref{ent2}) we get
\begin{align}
\frac{d}{dt}H(f_{t})  &  =-\int_{X\times B_{1}}\nabla_{x}\left(  \ln
f_{t}(x,v)\right)  \cdot vf_{t}(x,v)dxdv\label{ent3}\\
&  -\int_{X\times B_{1}}\nabla_{v}\left(  \ln f_{t}(x,v)\right)  \cdot\left[
M(x,v,f_{t})f_{t}(x,v)\right]  dxdv\nonumber\\
&  =-\int_{X\times B_{1}}\nabla_{x}f_{t}(x,v)\cdot vdxdv-\int_{X\times B_{1}%
}\nabla_{v}f_{t}(x,v)\cdot\left[  M(x,v,f_{t})\right]  dxdv\ .\nonumber
\end{align}
The first integral gives zero contribution since $\int_{X\times B_{1}}%
f_{t}(x,v)dxdv=1$ for all $t>0,$ i.e. $f_{t}\in L^{1}(X\times B_{1}).$ For the
second term notice that $\nabla_{v}\cdot\left[  M(x,v,f_{t})\right]  =-d,$
therefore
\begin{align}
\int_{X\times B_{1}}\nabla_{v}f_{t}(x,v)\cdot\left[  M(x,v,f_{t})\right]
dxdv  &  =-\int_{X\times B_{1}}f_{t}(x,v)\nabla_{v}\cdot\left[  M(x,v,f_{t}%
)\right]  dxdv\label{ent4}\\
&  =d\int_{X\times B_{1}}f_{t}(x,v)dxdv=d\ .\nonumber
\end{align}
We then obtain (\ref{orli6}). To get (\ref{orli7}) we proceed in the same way.
We need only to modify (\ref{ent4}) as
\begin{align}
\int_{X\times B_{1}}\nabla_{v}f_{t}^{\epsilon}(x,v)\cdot\left[  M_{\epsilon
}(x,v,f_{t}^{\epsilon})\right]  dxdv  &  =-\int_{X\times B_{1}}f_{t}%
^{\epsilon}(x,v)\nabla_{v}\cdot\left[  M_{\epsilon}(x,v,f_{t}^{\epsilon
})\right]  dxdv\label{ent4a}\\
&  =d\int_{X\times B_{1}}h_{t}^{\epsilon}(x)f_{t}^{\epsilon}%
(x,v)dxdv\ .\nonumber
\end{align}

\end{proof}

By the above lemma,
\begin{equation}
\lim_{t\rightarrow\infty}H(f_{t})=-\infty\ .
\end{equation}
From this we can deduce that even starting at time $t=0$ from a measure which
is absolutely continuous with respect to Lebesgue measure in $X\times B_{1},$
having therefore finite Boltzmann-Vlasov entropy, at infinity the asymptotic
measure is singular with respect to the Lebesgue one.
The same conclusions can be also drawn for the $\epsilon-$regularized system.

\section{Appendix}

\subsection{Proof of Lemma \ref{gi1}:}

Denote by $b_{i,j} (\cdot)$, for $i=1,\dots, N$ and $j=1, \dots, N$ the
elements of the matrix $B (\cdot)= \left\{  b_{i,j}(\cdot) \otimes \mathbb{I}%
_{d}\right\}  _{i,j=1,..,N} $ defined in \eqref  {roma2}. By definition of
$B(\mathbf{q}(t))$
\begin{align}
b_{i,j}(\mathbf{q}(t))  &  :=a_{i,j}(\mathbf{q}(t))-a_{i,j}(\mathbf{q}^{0}),
\end{align}
where $\{a_{i,j}(\cdot) \}$ are defined in \eqref {v1}.
Writing $a_{i,j}(\mathbf{q}(t))$ as
\begin{equation}
a_{i,j}(\mathbf{q}(t))=a_{i,j}(\mathbf{q}^{0})+\int_{0}^{1}ds\frac{d}%
{ds}a_{i,j}((1-s)\mathbf{q}^{0}+s\mathbf{q}(t)),
\end{equation}
we have
\begin{equation}
b_{i,j}(\mathbf{q}(t))=\int_{0}^{1}ds\frac{d}{ds}a_{i,j}((1-s)\mathbf{q}%
_{N}^{0}+s\mathbf{q}_{N}(t))\ .
\end{equation}
Therefore, setting
\begin{equation}
x_{i,j}(s,t):=(1-s)(q_{i}^{0}-q_{j}^{0})+s\left(  q_{i}(t)-q_{j}(t)\right)
\ ,\quad i,j=1,..,N\ ,
\end{equation}
we have%
\begin{align}
\frac{d}{ds}a_{i,j}((1-s)\mathbf{q}^{0}+s\mathbf{q}(t))  &  =\frac{d}%
{ds}\left(  \frac{U ((1-s)(q_{i}^{0}-q_{j}^{0})+s(q_{i}(t)-q_{j}(t)))}%
{\sum_{k=1}^{N}U ((1-s)(q_{i}^{0}-q_{k}^{0})+s(q_{i}(t)-q_{k}(t)))}\right)
\label{de.1}\\
&  =\frac{d}{ds}\left(  \frac{U (x_{i,j}(s,t))}{\sum_{k=1}^{N}U (x_{i,k}%
(s,t))}\right) \nonumber\\
&  =\frac{\nabla U (x_{i,j}(s,t))\cdot\left[  -(q_{i}^{0}-q_{j}^{0}%
)+q_{i}(t)-q_{j}(t)\right]  }{\sum_{k=1}^{N}U (x_{i,k}(s,t))}\nonumber\\
&  -\frac{U (x_{i,j}(s,t))\sum_{k=1}^{N}\nabla U (x_{i,k}(s,t))\cdot\left[
-(q_{i}^{0}-q_{k}^{0})+q_{i}(t)-q_{k}(t)\right]  }{\left(  \sum_{k=1}^{N}U
(x_{i,k}(s,t))\right)  ^{2}}\ .\nonumber
\end{align}
Hence,
\begin{align}
\sum_{j=1}^{N}|b_{i,j}(\mathbf{q}(t))|  &  \leq2\frac{\sum_{j=1}^{N}\left\vert
\nabla U(x_{i,j}(s,t))\right\vert \left\vert -(q_{i}^{0}-q_{j}^{0}%
)+q_{i}(t)-q_{j}(t)\right\vert }{\sum_{k=1}^{N}U (x_{i,k}(s,t))}\label{de.2}\\
&  \leq2\frac{N}{U\left(  0\right)  +( N-1) \eta(\mathbf{q}(t), \mathbf{q}%
^{0}) }\sup_{x\in\mathbb{R}^{d}}\left\vert \nabla U (x)\right\vert
\max_{i,j\in\{1,..,N\}}\left\vert -(q_{i}^{0}-q_{j}^{0})+q_{i}(t)-q_{j}%
(t)\right\vert \ ,\nonumber
\end{align}
where
\begin{equation}
\label{roma1}\eta(\mathbf{q}(t), \mathbf{q}^{0}) = \inf_{s \in[0,1]} \inf_{i,
k \in\{1, \dots, N\}} U((1-s)(q_{i}^{0}-q_{k}^{0})+s\left(  q_{i}%
(t)-q_{k}(t)\right)  ) \ge0.
\end{equation}
Since
\begin{equation}
\left\Vert B(\mathbf{q}(t))\right\Vert \leq\left\Vert \tilde{B}\left(
\mathbf{q}\left(  t\right)  \right)  \right\Vert _{\infty}=\max_{i=1,..,N}%
\sum_{j=1}^{N}\left\vert b_{i,j}(\mathbf{q}(t))\right\vert
\end{equation}
we get the thesis.

\subsection{Proof of Theorem \ref{th1}:}

We  adapt to our model  \cite[Theorem 5.1]{S} and divide the proof in two steps.

\textit{Step 1} We start proving (\ref{ine1}). Assume that $\nu_{t}$ and
$\mu_{t}$ solve (\ref{mea1}). We have, by the triangular inequality, that
\begin{align}
d_{b{\mathcal{L}}}(\nu_{t},\mu_{t})  &  =d_{b{\mathcal{L}}}(\nu_{0}\circ
T_{0,t}[\nu_{\cdot}],\mu_{0}\circ T_{0,t}[\mu_{\cdot}])\label{mea2}\\
&  \leq d_{b{\mathcal{L}}}(\mu_{0}\circ T_{0,t}[\nu_{\cdot}],\mu_{0}\circ
T_{0,t}[\mu_{\cdot}])+d_{b{\mathcal{L}}}(\mu_{0}\circ T_{0,t}[\nu_{\cdot}%
],\nu_{0}\circ T_{0,t}[\nu_{\cdot}])\ .\nonumber
\end{align}
Denote by $w:=(x,v),V(\mu_{\cdot})_{s}(w):=(v(s),A(x(s),\mu_{s})-v(s))$ the
vector field on the right hand side of (\ref{od1}). The second term can be
bounded as
\begin{align}
d_{b{\mathcal{L}}}(\mu_{0}\circ T_{0,t}[\nu_{\cdot}],\nu_{0}\circ T_{0,t}%
[\nu_{\cdot}])  &  =e^{Lt}\sup_{f\in{\mathcal{D}}}\left\vert \int
_{\mathcal{T}_{D}\times B_{1}}[d\mu_{0}-d\nu_{0}]\left(  e^{-Lt}f\circ
T_{t,0}[\nu_{\cdot}]\right)  \right\vert \label{mea3}\\
&  \leq e^{Lt}d_{b{\mathcal{L}}}(\mu_{0},\nu_{0})\nonumber
\end{align}
where $L$ is the Lipschitz constant of $V(\mu_{\cdot})_{s}(\cdot).$ Notice
that the Lipschitz bound of $V(\mu_{\cdot})_{s}(\cdot)$ can be easily derived
from the Lipschitz bound of $A(\cdot,\mu_{\cdot}).$ We get (\ref{mea3}) if we
can show, since $f\in{\mathcal{D}},$ that $e^{-Lt}f\circ T_{t,0}[\nu_{\cdot}]$
is Lipschitz continuous with constant one and therefore it belongs to
${\mathcal{D}}.$ Let $w(t)=(x(t),v(t))$ be the solution of (\ref{od1}) with
initial condition $w_{0}=(x_{0},v_{0})$ and let $\tilde{w}(t)$ be the solution
of (\ref{od1}) with initial condition $\tilde{w}_{0}=(\tilde{x}_{0},\tilde
{v}_{0}),$ then we need to show that
\begin{equation}
|f(w(t))-f(\tilde{w}(t))|\leq C(t)|w_{0}-\tilde{w}_{0}|\ , \label{mea4}%
\end{equation}
with $C(t)\leq e^{Lt}.$ Writing
\begin{equation}
w(t)=w_{0}+\int_{0}^{t}V(\mu_{\cdot})_{s}(w(s))
\end{equation}
and
\begin{equation}
\tilde{w}(t)=\tilde{w}_{0}+\int_{0}^{t}V(\mu_{\cdot})_{s}(\tilde{w}(s))\ ,
\end{equation}
since $f\in{\mathcal{D}},$ we have
\begin{equation}
\ |f(w(t))-f(\tilde{w}(t))|\leq|w(t)-\tilde{w}(t)|\ . \label{mea6}%
\end{equation}
Furthermore,
\begin{align}
|w(t)-\tilde{w}(t)|  &  \leq|w_{0}-\tilde{w}_{0}|+\int_{0}^{t}|V(\mu_{\cdot
})_{s}(w(s))-V(\mu_{\cdot})_{s}(\tilde{w}(s))|ds\label{mea5}\\
&  \leq|w_{0}-\tilde{w}_{0}|+L\int_{0}^{t}|w(s)-\tilde{w}(s)|ds\ .\nonumber
\end{align}
By the Gronwall's inequality
\begin{equation}
|w(t)-\tilde{w}(t)|\leq e^{Lt}|w_{0}-\tilde{w}_{0}|
\end{equation}
proving $e^{-Lt}f\circ T_{t,0}[\nu_{\cdot}]\in{\mathcal{D}}$ and so
(\ref{mea3}). We are then left with the estimate the other term in
(\ref{mea2}) which, since $f\in{\mathcal{D}},$%
\begin{align}
d_{b{\mathcal{L}}}(\mu_{0}\circ T_{0,t}[\nu_{\cdot}],\mu_{0}\circ T_{0,t}%
[\mu_{\cdot}])  &  =\sup_{f\in{\mathcal{D}}}\left\vert \int_{\mathcal{T}%
_{D}\times B_{1}}d\mu_{0}\left\{  f\circ T_{t,0}[\nu_{\cdot}]-f\circ
T_{t,0}[\mu_{\cdot}]\right\}  \right\vert \label{mea8}\\
&  \leq\int_{\mathcal{T}_{D}\times B_{1}}\mu_{0}\left(  dw\right)  \left\vert
T_{t,0}[\nu_{\cdot}]w-T_{t,0}[\mu_{\cdot}]w\right\vert =:\lambda(t)\nonumber
\end{align}
where $T_{t,0}[\nu_{\cdot}]$ and $T_{t,0}[\mu_{\cdot}]$ are both solutions of
the equation (\ref{od1}) with the same initial conditions but with different
vector fields. We have
\begin{align}
\lambda(t)  &  =\int_{\mathcal{T}_{D}\times B_{1}}\mu_{0}(dw)\left\vert
\left\{  T_{t,0}[\nu_{\cdot}]w-T_{t,0}[\mu_{\cdot}]w\right\vert \right\}
\label{mea9}\\
&  =\int_{\mathcal{T}_{D}\times B_{1}}\mu_{0}(dw)\left\vert \int_{0}%
^{t}dsV(\nu_{\cdot})_{s}(T_{s,0}[\nu_{\cdot}]w)-\int_{0}^{t}dsV(\mu_{\cdot
})_{s}(T_{s,0}[\mu_{\cdot}]w)\right\vert \nonumber\\
&  \leq\int_{\mathcal{T}_{D}\times B_{1}}\mu_{0}(dw)\left\vert \int_{0}%
^{t}ds\left\{  V(\nu_{\cdot})_{s}(T_{s,0}[\nu_{\cdot}]w)-V(\nu_{\cdot}%
)_{s}(T_{s,0}[\mu_{\cdot}]w)\right\}  \right\vert \nonumber\\
&  +\int_{\mathcal{T}_{D}\times B_{1}}\mu_{0}(dw)\left\vert \int_{0}%
^{t}ds\left\{  V(\mu_{\cdot})_{s}(T_{s,0}[\mu_{\cdot}]w)-V(\nu_{\cdot}%
)_{s}(T_{s,0}[\mu_{\cdot}]w)\right\}  \right\vert \ .\nonumber
\end{align}
The first term of (\ref{mea9}) can be estimated by the Lipschitz property of
the vector field
\begin{align}
&  \int_{\mathcal{T}_{D}\times B_{1}}\mu_{0}(dw)\left\vert \int_{0}%
^{t}ds\left\{  V(\nu_{\cdot})_{s}(T_{s,0}[\nu_{\cdot}]w)-V(\nu_{\cdot}%
)_{s}(T_{s,0}[\mu_{\cdot}]w)\right\}  \right\vert \label{mea10}\\
&  \leq L\int_{\mathcal{T}_{D}\times B_{1}}\mu_{0}(dw)\int_{0}^{t}ds\left\vert
T_{s,0}[\nu_{\cdot}]w-T_{s,0}[\mu_{\cdot}]w]\right\vert \nonumber\\
&  =L\int_{0}^{t}ds\int_{\mathcal{T}_{D}\times B_{1}}\mu_{0}(dw)\left\vert
T_{s,0}[\nu_{\cdot}]w-T_{s,0}[\mu_{\cdot}]w]\right\vert =L\int_{0}^{t}%
\lambda(s)ds\ .\nonumber
\end{align}
For the second term of (\ref{mea9}) we have
\begin{align}
&  \int_{\mathcal{T}_{D}\times B_{1}}\mu_{0}(dw)\left\vert \int_{0}%
^{t}ds\left\{  V(\mu_{\cdot})_{s}(T_{s,0}[\mu_{\cdot}]w)-V(\nu_{\cdot}%
)_{s}(T_{s,0}[\mu_{\cdot}]w)\right\}  \right\vert \label{mea11}\\
&  \leq\int_{\mathcal{T}_{D}\times B_{1}}\mu_{0}(dw)\int_{0}^{t}ds\left\vert
V(\mu_{\cdot})_{s}\left(  T_{s,0}[\mu_{\cdot}]w\right)  -V(\nu_{\cdot}%
)_{s}\left(  T_{s,0}[\mu_{\cdot}]w\right)  \right\vert \nonumber\\
&  =\int_{0}^{t}ds\int_{\mathcal{T}_{D}\times B_{1}}\mu_{0}(dw)\left\vert
V(\mu_{\cdot})_{s}(T_{s,0}[\mu_{\cdot}]w)-V(\nu_{\cdot})_{s}(T_{s,0}%
[\mu_{\cdot}]w)\right\vert \nonumber\\
&  =\int_{0}^{t}ds\int_{\mathcal{T}_{D}\times B_{1}}\mu_{s}(dw)\left\vert
V(\mu_{\cdot})_{s}(w)-V(\nu_{\cdot})_{s}(w)\right\vert \ .\nonumber
\end{align}
But,
\begin{align}
\left\vert V(\mu_{\cdot})_{s}(w)-V(\nu_{\cdot})_{s}(w)\right\vert  &
\leq|A(x,\mu_{s})-A(x,\nu_{s})|\label{orli2}\\
&  \leq\left\vert \frac{\int_{\mathcal{T}_{D}\times B_{1}}U(x-y)u\mu
_{s}(dy,du)-\int_{\mathcal{T}_{D}\times B_{1}}U(x-y)u\nu_{s}(dy,du)}%
{\int_{\mathcal{T}_{D}\times B_{1}}U(x-y)\mu_{s}(dy,du)}\right\vert
\nonumber\\
&  +\left\vert \frac{\int_{\mathcal{T}_{D}\times B_{1}}U(x-y)\mu
_{s}(dy,du)-\int_{\mathcal{T}_{D}\times B_{1}}U(x-y)\nu_{s}(dy,du)}%
{\int_{\mathcal{T}_{D}\times B_{1}}U(x-y)\mu_{s}(dy,du)}\right\vert \nonumber
\end{align}
Since for any measure $\nu\in{\mathcal{M}},\int_{\mathcal{T}_{D}\times B_{1}%
}U(x-y)\nu_{s}(dy,du)\geq\inf_{x\in{\mathcal{T}}_{D}}U(x)=a,$ we have
\begin{align}
&  \left\vert \frac{\int_{\mathcal{T}_{D}\times B_{1}}U(x-y)\mu_{s}%
(dy,du)-\int_{\mathcal{T}_{D}\times B_{1}}U(x-y)\nu_{s}(dy,du)}{\int
_{\mathcal{T}_{D}\times B_{1}}U(x-y)\mu_{s}(dy,du)}\right\vert \\
&  \leq\frac{\sup_{x\in\mathcal{T}_{D}}|\nabla U(x)|+\sup_{x\in\mathcal{T}%
_{D}}U(x)}{a}d_{b{\mathcal{L}}}(\mu_{s},\nu_{s})\nonumber
\end{align}
and
\begin{align}
&  \left\vert \frac{\int_{\mathcal{T}_{D}\times B_{1}}U(x-y)u\mu
_{s}(dy,du)-\int_{\mathcal{T}_{D}\times B_{1}}U(x-y)u\nu_{s}(dy,du)}%
{\int_{\mathcal{T}_{D}\times B_{1}}U(x-y)\mu_{s}(dy,du)}\right\vert \\
&  \leq\sum_{i=1}^{d}\left\vert \frac{\int_{\mathcal{T}_{D}\times B_{1}%
}U(x-y)u^{i}\mu_{s}(dy,du)-\int_{\mathcal{T}_{D}\times B_{1}}U(x-y)u^{i}%
\nu_{s}(dy,du)}{\int_{\mathcal{T}_{D}\times B_{1}}U(x-y)\mu_{s}(dy,du)}%
\right\vert \nonumber\\
&  \leq d\frac{\sup_{x\in\mathcal{T}_{D}}|\nabla U(x)|+\sup_{x\in
\mathcal{T}_{D}}U(x)}{a}d_{b{\mathcal{L}}}(\mu_{s},\nu_{s})\ .\nonumber
\end{align}
Therefore,
\begin{equation}
\left\vert V(\mu_{\cdot})_{s}(w)-V(\nu_{\cdot})_{s}(w)\right\vert \leq
2d\frac{\sup_{x\in\mathcal{T}_{D}}|\nabla U(x)|+\sup_{x\in\mathcal{T}_{D}%
}U(x)}{a}d_{b{\mathcal{L}}}(\mu_{s},\nu_{s})=\frac{c_{0}}{a}d_{b{\mathcal{L}}%
}(\mu_{s},\nu_{s})\ , \label{orli20}%
\end{equation}
where we have set $c_{0}:=2d(\sup_{x\in\mathcal{T}_{D}}|\nabla U(x)|+\sup
_{x\in\mathcal{T}_{D}}U(x)).$
%where $2$ in $L+2$  is the bound in \eqref {feb1} and $a= \inf_{x \in \TT_D}  U(x) $.
It is essential that $a>0.$ This is the case for interactions considered in
the Lemmata \ref{le2} and \ref{le3} once the system is confined on the torus
${\mathcal{T}}_{D}$\footnote{In the case where $U$ is with compact support and
$M$ is replaced by $M_{\epsilon}$ we have that
\[
\inf_{x\in X}(U\star\nu)(x)+\epsilon\geq  
\epsilon.
\]
In this case $X$ can be either ${\mathbb{R}}^{d}$ or ${\mathcal{T}}_{D}.$}.
Thus, by (\ref{mea9}), (\ref{mea10}), (\ref{mea11}) and (\ref{orli2}) we have
that
\begin{equation}
\lambda(t)\leq L\int_{0}^{t}\lambda(s)ds+\frac{c_{0}}{a}\int_{0}%
^{t}d_{b{\mathcal{L}}}(\mu_{s},\nu_{s})ds\ . \label{feb2}%
\end{equation}
Hence, since by (\ref{mea9}) $\lambda\left(  0\right)  =0$ we obtain
\begin{equation}
\lambda(t)\leq\frac{c_{0}}{a}\int_{0}^{t}e^{L(t-s)}d_{b{\mathcal{L}}}(\mu
_{s},\nu_{s})ds\ . \label{feb3}%
\end{equation}
Taking in account (\ref{mea2}), (\ref{mea3}), (\ref{mea8}) and (\ref{feb3}) we
get
\begin{equation}
d_{b{\mathcal{L}}}(\nu_{t},\mu_{t})\leq e^{Lt}d_{b{\mathcal{L}}}(\mu_{0}%
,\nu_{0})+\frac{c_{0}}{a}\int_{0}^{t}e^{L(t-s)}d_{b{\mathcal{L}}}(\mu_{s}%
,\nu_{s})ds\ . \label{feb4}%
\end{equation}
Applying the Gronwall's lemma we get bound (\ref{ine1}).

\vskip1.cm

\textit{Step 2} To prove the existence of a solution for the fixed point
equation
\begin{equation}
\mu_{t}=\mu_{0}\circ T_{0,t}[\mu_{\cdot}]\ , \label{mea1}%
\end{equation}
we use the Banach fixed point theorem. Let $\mu$ be the initial condition. To
every curve $\left[0,T\right]\ni t\mapsto\mu_{t}\in\mathcal{M}, \mu_{0}=\mu$, we relate the solution curve
\begin{equation}
\left[0,T\right]\ni t\longmapsto\mu\circ T_{0,t}[\mu_{\cdot}] \in \mathcal{M}
\end{equation}
Let us denote this map ${\mathcal{F}}:C_{{\mathcal{M}}}\rightarrow
C_{{\mathcal{M}}},$ where $C_{{\mathcal{M}}}$ is the space of weakly
continuous function $[0,T]\rightarrow{\mathcal{M}}$ with $\mu_{0}=\mu_{\cdot}$
We equip $C_{{\mathcal{M}}}$ with the metric
\begin{equation}
d_{\alpha}(\mu(\cdot),\nu(\cdot))=\sup_{t\in\lbrack0,T]}\left[  e^{-\alpha
t}d_{b{\mathcal{L}}}(\nu_{t},\mu_{t})\right]  \ ,
\end{equation}
for some $\alpha>0$ which will be suitably chosen. Since $({\mathcal{M}%
},d_{b{\mathcal{L}}})$ is a complete metric space, so is $(C_{{\mathcal{M}}%
},d_{\alpha}).$ Now from Step 1 we have%
\begin{equation}
d_{b{\mathcal{L}}}(\nu_{t},\mu_{t})=d_{b{\mathcal{L}}}({\mathcal{F}}(\mu
(\cdot))(t),{\mathcal{F}}(\nu(\cdot))(t))\leq\frac{c_{0}}{a}\int_{0}%
^{t}e^{L(t-s)}d_{b{\mathcal{L}}}(\mu_{s},\nu_{s})ds \label{feb6}%
\end{equation}
and therefore
\begin{equation}
d_{\alpha}({\mathcal{F}}(\mu(\cdot))(t),{\mathcal{F}}(\nu(\cdot))(t))\leq
\frac{c_{0}}{a(\alpha-L)}d_{\alpha}(\mu(\cdot),\nu(\cdot)) \label{feb7}%
\end{equation}
for $\alpha>L.$ By a suitable choice of $\alpha$ this proves that
${\mathcal{F}}$ is a contraction. \qed

\end{document}